\newcommand{\Nat}{\mathbb{N}}
\newcommand{\mon}[2]{Q_{#1}^{#2}}
\newcommand{\nnp}{\mathrm{nnp}}
\newcommand{\FO}{\mathrm{FO}}
\newcommand{\GROUP}{\mathrm{GROUP}}
\newcommand{\MOD}{\mathrm{MOD}}
\newcommand{\Arb}{\mathrm{arb}}
\newcommand{\MAJ}{\mathrm{MAJ}}
\newcommand{\IL}{\mathrm{IL}}
\newcommand{\IR}{\mathrm{IR}}
\newcommand{\NL}{\mathbf{N}}
\newcommand{\defs}{::=}
\newsavebox{\meinebox}
\begin{document}
%
\title{Non-definability of languages by generalized first-order formulas over ($\mathbb{N}$,+)}
%
%
\titlerunning{Non-definability of languages by generalised first-order formulas over ($\mathbb{N}$,+)}
\author{Andreas Krebs \inst{1} \and
A V Sreejith\inst{2}}

\institute{ Wilhelm-Schickard Institut, Universtit\"at T\"ubingen, Germany
\and Institute of Mathematical Sciences, Chennai, India}

\maketitle

\begin{abstract}
We consider first-order logic with monoidal quantifiers over words. We show that all languages with a neutral letter, definable using the 
addition predicate are also definable with the order predicate as the only numerical predicate. Let $\mathcal{S}$ be a subset of monoids.
Let $\mathcal{L_S}$ be the logic closed under quantification over the monoids in $\mathcal S$, and $\NL$ be the class of neutral letter languages.
Then we prove that 
$$\mathcal{L_S}[\textless,+] \cap \NL = \mathcal{L_S}[<] \cap \NL$$
Our result can be interpreted as the Crane Beach conjecture to hold for the logic $\mathcal{L_S}$[\textless,$+$].
As a consequence we get %
the result of Roy and Straubing that
FO$+$MOD[\textless,$+$] collapses to FO$+$MOD[\textless]. For cyclic groups, we answer an open question of Roy and Straubing, proving that MOD[\textless,$+$] collapses to MOD[\textless].
Our result also shows that multiplication as a numerical predicate is necessary for Barrington's theorem to hold and also to simulate
majority quantifiers.

All these results can be viewed as separation results for highly uniform circuit classes. For example we separate 
FO[\textless,$+$]-uniform CC$^0$ from FO[\textless,$+$]-uniform ACC$^0$.
\end{abstract}

%

\section{Introduction}
Consider a language with a ``neutral letter", i.e. a letter which can be inserted or deleted from any word in the language without 
changing its membership. The neutral letter concept has turned out to be useful for showing non-expressibility results. 
It had been used to establish super linear lower bounds for bounded-width branching
programs \cite{barr_superLinLBBWBP} and for the number of wires in circuit classes \cite{koucky_wiresVsGates}; it also led to results in communication complexity \cite{bddMultiPartyCommCompl}.
But mostly the concept is known in the context of the Crane Beach conjecture  proposed in \cite{barr_cbconj}.
There it was conjectured that first order logic with arbitrary numerical predicates (denoted as $\Arb$)  collapses to 
first order logic with only linear ordering in the presence of a neutral letter. The idea is that, in the presence of a neutral letter, formulas cannot rely on the precise location of input letters and hence numerical predicates will be of little use. Let $\NL$ denote the class of languages with neutral letters. 
Let $\mathcal{S}$ be a set of finite monoids  and $\mathcal{L_S}$ be the logic closed under quantification, where the quantifiers are Lindstr\"om quantifiers over some monoid from $\mathcal{S}$. Then the Crane Beach conjecture says that 
$$\mathcal{L_S}[\Arb] \cap \NL = \mathcal{L_S}[<] \cap \NL.$$
The conjecture was refuted by Barrington et. al. \cite{barr_cbconj}, where they showed that it does not hold for the logic $\FO[<,+,*]$, i.e. 
first order logic ($\mathcal{S}$ consists of only the existential quantifier) using addition and multiplication relation. 
In the same paper, the authors proved that the conjecture holds for various other logics. The Boolean closure of the $\Sigma_1$-fragment of $\FO[\Arb]$ satisfies the conjecture; that is $\mathcal{B}(\Sigma_1)[\Arb] \cap \NL = \mathcal{B}(\Sigma_1)[<] \cap \NL$. 
Lautemann, Tesson and Th\'erien \cite{ltt06} considered quantifiers which can count modulo a prime $p$ (called $\MOD_p$). They obtain that $\mathcal{B}(\Sigma_1^{0,p})[\Arb] \cap \NL = \mathcal{B}(\Sigma_1^{0,p})[<] \cap \NL$, which is equivalent to $\MOD_p[\Arb] \cap \NL = \MOD_p[<] \cap \NL$.

Benedikt and Libkin \cite{libkin_relIntrStruc} in the context of collapse results in database theory showed that first order logic
with only addition satisfies the Crane Beach conjecture. A different proof of the result can be found in \cite{barr_cbconj}. 
We generalize this result to arbitrary monoidal quantifiers. Let $\mathcal{S}$ be a set of finite monoids. Consider the logic
$\mathcal{L_S}$ where the quantifiers are Lindstr\"om quantifiers whose languages are word problems of monoids in $S$. Our main result
(Theorem \ref{thm_cbconj}) is that the Crane Beach conjecture hold for the logic $\mathcal{L_S}[<,+]$; that is
$$\mathcal{L_S}[<,+] \cap \NL = \mathcal{L_S}[<]\cap\NL.$$

If $S$ is an aperiodic monoid, then the theorem is equivalent to the result of Benedikt and Libkin. 
For solvable monoids Roy and Straubing \cite{roy_defGenFO} 
(used ideas of Benedikt and Libkin to) showed that in the presence of neutral letters $\FO+\MOD[<,+]$  collapse to
$\FO+\MOD[<]$. In their paper they raised the question: does $\MOD[<,+]$ satisfy the Crane Beach conjecture?
This can be answered by our main theorem.

Our results can also be viewed from the perspective of descriptive complexity of circuit classes. The books
\cite{immerman_book, vollmer_book} present the close connection between logics with monoid quantifiers and circuit classes. 
We know that the set of languages accepted by uniform-$AC^0$ circuits
are exactly those definable by first order logic using order,
addition and multiplication relations. Similarly CC$^0$ (constant depth, polynomial size circuits with MOD-gates)
corresponds to $\MOD[<,+,*]$, ACC$^0$ corresponds to $\FO+\MOD[<,+,*]$, TC$^0$ corresponds to $\MAJ[<,+,*]$, 
and NC$^1$ corresponds to $\GROUP[<,+,*]$ (The ``group quantifier'' evaluates
over a finite group). It is a well known result that AC$^0$ is separated from ACC$^0$ \cite{furst_parity}, but relationships between 
most other classes are open. For example, we do not know whether CC$^0$ is different from ACC$^0$. In fact we do not know whether
$\MOD_6[<,+,*]$ contains uniform-AC$^0$. This explains why the Crane Beach conjecture for prime modulo quantifiers \cite{ltt06}, using
arbitrary predicates, cannot be easily extended to composite modulo quantifiers.

We look at these separation questions from the descriptive complexity
perspective. As a first step, one can ask the
question of separating the logics without the multiplication
relation. That is, can one separate $\MOD[<,+]$ from
$\FO+\MOD[<,+]$? Is $\GROUP[<,+]$ different from $\FO+\MOD[<,+]$? 

Behle and Lange \cite{behle_foLessUniform} gave a notion of interpreting $\mathcal{L_S}[<,+]$ as highly uniform circuit classes.  Our results therefore can be summarized as: every $\FO[<,+]$ uniform constant depth polynomial size circuit with gates that compute a product in $\mathcal{S}$ and that recognizes a language with a neutral letter can be made $\FO[<]$-uniform.

As a consequence of our main theorem we are able to separate these uniform versions of circuit classes.
For example: The theorem states that $\MOD[<,+]$ definable languages with a neutral letter are also definable in $\MOD[<]$. Since $\MOD[<]$ cannot simulate the existential quantifiers \cite{str_cirBook} we have that $\FO[<,+]$ and $\MOD[<,+]$ are incomparable. In fact we show that no group quantifier can simulate existential quantifier if only addition is available.
This gives an alternate proof of the known result \cite{roy_defGenFO} that $\FO+\MOD_m[<,+]$ cannot count modulo a
prime $p$, where $p$ does not divide $m$. 
Another consequence is that the majority quantifier cannot be simulated by group quantifiers if multiplication is not available, thus
separating $\MAJ[<,+]$ from $\FO+\GROUP[<,+]$. Barrington's theorem \cite{barr_NC1} says that word problems over any finite group can be defined by the logic which 
uses only the $S_5$ group quantifier (the group whose elements are the set of all permutations over $5$ elements) if addition and multiplication predicates are available.
Our result shows multiplication is necessary for Barrington's theorem to hold. In other words $S_5$ cannot define word problems over
$S_6$ if only addition is available. 

Non expressibility results for various logics which uses addition and a variety of quantifiers have been considered earlier. Lynch \cite{lynch82} proved that $\FO[<,+]$ cannot count modulo any number. Nurmonen \cite{nur_countQuant} and Niwi\'nski and Stolboushkin \cite{niw_yEq2x} looked at logics with counting quantifiers equipped with numerical predicates of form $y=px$ and a linear ordering. Ruhl \cite{ruhl_FOUnc}, Schweikardt \cite{schwei_FOUnc},
Lautemann et.al. \cite{lautemann_logcfl}, Lange \cite{lange_majQnt} all showed the limited expressive power of addition in the presence of majority quantifiers. Behle, Krebs and Reifferscheid \cite{behle_regLangMajQnt,behle_nonSolvGrpsNotinMaj} proved that non-solvable groups are not definable in the two variable fragment of $MAJ[<]$.

For the purpose of proof we work over infinite strings which contain finite number of non-neutral letters. 
Our general proof strategy is similar to Benedikt and Libkin \cite{libkin_relIntrStruc} or Roy and Straubing \cite{roy_defGenFO} and consists of three main steps.
\begin{enumerate}
 \item Given a formula $\phi \in \mathcal{L_S}[<,+]$, we give an infinite set $\mathcal{D} \in \Nat$ and an ``active domain formula'' $\phi' \in \mathcal{L_S}[<,+]$ such that for all words $w$ whose non neutral positions belong to $\mathcal{D}$ we have 
 $w \vDash \phi \Leftrightarrow w \vDash \phi'$.
 Active domain formulas quantify only over non-neutral letter positions. Our major contribution (Theorem \ref{thm_acd4ls}) is  showing
 this step.  
 \item We give another infinite set $T \subseteq \mathcal{D}$ and an active domain formula $\psi \in \mathcal{L_S}[<]$ such that for all words $w$ whose non neutral positions belong to $T$ we have 
 $w \vDash \phi' \Leftrightarrow w \vDash \psi$.
 This step follows from an application of Ramsey theory (Theorem \ref{thm_ramsey}).
 \item All active domain formulas in $\mathcal{L_S}[<]$ accept languages with a neutral letter. This is an easy observation given by Lemma \ref{lem_acdneutr}.
\end{enumerate}
Finally using these three steps we prove our main theorem.

The main step is to build an active domain formula. Hence we need to show how to simulate a quantifier by an active domain formula.
In the case of $\FO[<,+]$, the quantifiers, considered as Lindstr\"om quantifiers, have a commutative and idempotent monoid. Hence
neither the order in which the quantifier runs over the positions of the word is important, nor does it matter if positions are queried
multiple times.
In Roy and Straubing this idea was extended in such a way that in the simulation of the $\MOD$ quantifier (again a commutative monoid),
every position is taken into account exactly once. In their construction while replacing a $\MOD$ quantifier they need to add
additional $\FO$ quantifiers and hence their construction only allows to replace a $\MOD[<,+]$ formula by an active domain $\FO+\MOD[<,+]$ formula.
In this paper, we construct a formula that takes every position into account exactly once and in
the correct order. Moreover we do not introduce any new quantifier, but use only the quantifier that is replaced.
This enables us to obtain the Crane Beach conjecture for logics whose quantifiers have a non-commutative monoid or are groups. For
example $\MOD[<,+]$, $\GROUP[<,+]$, and $\FO+\GROUP[<,+]$.

In contrast to previous work, we do not construct an equivalent active domain formula, but only a formula that is equivalent for certain domains. We show that it is in general sufficient to show this for one infinite domain. We also introduce a combinatorial structure called \emph{Sorting Tree} which can be of interest on its own. Yet another contribution is to use inverse elements of groups to merge two sorted lists of numbers. 


\noindent We present our main theorem and its corollaries in Section \ref{sec_result} followed by a section with the proof of Theorem
\ref{thm_acd4ls}. Our main contribution is 
Section \ref{sec_lemmaProof}. There we replace group quantifiers by its active domain version. %
 
\bigskip
\section{Preliminaries} \label{sec_prelims}
An alphabet $\Sigma$ is a finite set of symbols. The set of all finite words over $\Sigma$ is denoted by $\Sigma^*$, the set of all right infinite words is denoted by $\Sigma^{\omega}$. Let $\Sigma^{\infty} = \Sigma^* \cup \Sigma^{\omega}$.  Consider a language $L \subseteq \Sigma^{\infty}$ and a letter $\lambda \in \Sigma$. We say that $\lambda$ is a \emph{neutral letter} for $L$ if for all $u,v \in \Sigma^{\infty}$ we have that $u\lambda v \in L \Leftrightarrow uv \in L$. We denote the set of all languages with a neutral letter by $\NL$. 

For a word $w\in\Sigma^\infty$ 
the notation $w(i)$ denotes the $i^{th}$ letter in $w$, i.e. $w=w(0)w(1)w(2)\dots$.
For a word $w$ in a language $L$ with neutral letter $\lambda$, we define the non-neutral positions $\nnp(w)$ of $w$ to be the set of all positions which do not have the neutral letter. 

A monoid is a set closed under a binary associative operation and has an identity element. All monoids we consider except for $\Sigma^*$ and $\Sigma^\infty$ will be finite. A monoid $M$ and $S \subseteq M$ defines a \emph{word problem}. Its language is composed of words $w \in M^*$, such that when the elements of $w$ are multiplied in order we get an element in $S$. 
We say that a monoid $M$ \emph{divides} a monoid $N$ if there exists a submonoid $N'$ of $N$ and a surjective morphism from $N'$ to
$M$. A monoid $M$ \emph{recognizes} a language $L\subseteq \Sigma^*$ if there exists a morphism $h:\Sigma^* \rightarrow M$ and a subset
$T \subseteq M$ such that $L = h^{-1}(T)$. It is known that finite monoids recognize exactly regular languages \cite{str_cirBook}. We
denote by $\mathcal{M}$ the set of all finite monoids, $\mathcal{G \subset M}$ the set of all finite groups and $\MOD$ the set of all
finite cyclic group. We denote by $U_1$ the monoid consisting of elements $\{0,1\}$ under multiplication. For a monoid $M$, the element
$1 \in M$ will denote its identity element. We also use the block product of monoids, whose definition can be found in
\cite{str_cirBook}. For a set $S$ of monoids, $bpc(S)$ denotes the smallest set which contains $S$ and is closed under block products. 


Given a formula $\phi$ with free variables $x_1,\dots,x_k$, we write $w,i_1,\dots,i_k\models\phi$ if $w$ is a model for the formula $\phi$ when the free variables $x_j$ is assigned to $i_j$ for $j=1,\dots,k$.
We abuse notation and let $c \in \Sigma$ also be the unary predicate symbols of the logic we consider. That is $w,i\models c(x)$ iff $w(i)=c$. 
Let $\mathcal{V}$ be a set of variables, $\mathcal{R}$ be a set of numerical predicates and $\mathcal{S \subseteq M}$. We define the
logic $\mathcal{L_S[\mathcal R]}$ to be built from the unary predicate symbols $c$, where $c \in \Sigma$, the binary predicate $\{=\}$,
the predicates in $\mathcal{R}$, the variable symbols $\mathcal{V}$, the Boolean connectives $\{\neg,\vee,\wedge\}$, and the monoid
quantifiers $\mon{M}{m}$, where $M \in \mathcal{S}$ is a monoid and $m \in M$. We also identify the logic class $\mathcal{L_S[\mathcal
R]}$ with the set of all languages definable in it.

Our definition of monoid quantifiers is a special case of Lindstr\"om quantifiers \cite{lin_genQuant}. The formal definition of a
monoid quantifier \cite{barr_uniformNC1} is as follows. Let $M=\{m_1,\dots,m_{K},1\}$ be a monoid with $K+1$ elements. For an $m \in
M$, the quantifier $\mon{M}{m}$ is applied on $K$ formulas. Let $x$ be a free variable and $\phi_1(x),\dots,\phi_{K}(x)$ be  $K$ formulas. Then 
$w \models \mon{M}{m} x \langle \phi_1(x),\dots,\phi_{K}(x) \rangle$
iff the word $u$ when multiplied gives the element $m$, i.e.\  $\prod_i u(i) = m$, where the $i^{th}$ letter of $u$, $0 \leq i < |w|$, is 
$$u(i) = \left\{ \begin{array}{rl}
 m_1 & \mbox{ if } w,i \models \phi_1\\
 m_2 & \mbox{ if } w,i \models \neg \phi_1 \wedge \phi_2\\
     & \vdots	\\
 m_{K} & \mbox{ if } w,i \models \neg \phi_1 \wedge \dots \wedge \neg \phi_{K-1} \wedge \phi_{K}\\
 1 & \mbox{ otherwise}
\end{array} \right.$$

The following ``shorthand'' notation is used to avoid clutter. We denote by $\mon{M}{m} x ~\phi~ \langle\alpha_1,\dots,\alpha_K\rangle$, the formula
$\mon{M}{m} x\langle \phi \wedge \alpha_1,\dots,\phi \wedge \alpha_K \rangle$. Informally, this relativizes the quantifier to the positions where $\phi$ is true, by multiplying the neutral element in all other places.

Consider the monoid $U_1$. It is easy to see that the word problem defined by $U_1$ and the set $\{0\}$ defines the regular language
$1^*0(0+1)^*$. Then $\mon{U_1}{0}$ is same as the existential quantifier $\exists$, since any formula of the form $\exists x \phi$ is equivalent
to $\mon{U_1}{0} x ~\langle \phi \rangle$. So the logic $\mathcal{L}_{U_1}[<]$ denotes first-order logic, $\FO[<]$. Let
$C_q$ stand for the cyclic group with $q$ elements. Then the quantifiers $\mon{C_q}{1}$ corresponds to modulo quantifiers
\cite{str_regGenQnt}. Thus $\mathcal{L}_{\MOD}[<]$ corresponds to all regular languages whose syntactic monoids are solvable groups
\cite{str_cirBook}. 
For a sentence $\phi \in \mathcal{L_S}[\mathcal R]$ we define $L(\phi) = \{w \mid w \vDash \phi\}$.
The following result gives an algebraic characterization for the logic $\mathcal{L_S}[<]$.
\begin{lemma}[\cite{str_cirBook}]
\label{lem_logicalgb}
Let $\mathcal{S \subseteq M}$. 
Let $L \subseteq \Sigma^*$ such that $M$ is the smallest monoid which recognizes $L$. 
Then $L$ is definable in $\mathcal{L_S}[<]$ iff $M$ divides a monoid in $bpc(S)$.
\end{lemma}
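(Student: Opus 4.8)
The plan is to establish both directions by induction, using the \emph{block product principle}, which identifies the effect of binding one layer of monoid quantifiers with the algebraic operation of taking a block product. Throughout, a formula with free variables $x_1,\dots,x_k$ is read as defining a language over the extended alphabet $\Sigma \times \{0,1\}^k$, whose extra coordinates mark the positions assigned to the free variables; a sentence ($k=0$) then defines a language over $\Sigma$ in the usual sense. The two directions amount to translating recognizing monoids into formulas and back, with the order predicate $<$ being precisely what turns a one-sided semidirect product into a two-sided (block) product.

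For the ($\Leftarrow$) direction I would induct on the construction of a monoid $N \in bpc(S)$ with $M$ dividing $N$. In the base case $N \in S$: the word problem of $N$ with accepting element $m$ is defined directly by $\mon{N}{m} x \langle a_1(x),\dots,a_K(x)\rangle$, where $a_j$ is the unary predicate selecting the $j$-th non-identity element of $N$; since any $N$-recognizable language is the preimage of such a word problem under a letter-to-letter morphism, it is definable in $\mathcal{L}_{\{N\}}[<] \subseteq \mathcal{L_S}[<]$. For the inductive step I would invoke the block product principle: if the languages recognized by $M_1$ and by $M_2$ are definable, then so are those recognized by $M_1 \Box M_2$, the construction using $<$ to relativize the two factors to the prefix $\{y < x\}$ and the suffix $\{y > x\}$ of a position. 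Finally, division transfers definability: if $M$ divides $N$ via a submonoid and a surjection onto $M$, a morphism recognizing $L$ in $M$ lifts to one into $N$, so $L$ is $N$-recognizable and hence definable.

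For the ($\Rightarrow$) direction I would induct on the quantifier nesting of $\phi$, maintaining the invariant that every subformula, read over its marked alphabet, defines a language recognized by a monoid in $bpc(S)$. Quantifier-free matrices are Boolean combinations of letter and order atoms over a finite marked alphabet; these carry no quantification and are absorbed into the transduction realized at the next quantifier rather than contributing a monoid factor of their own, which is why no aperiodic monoid (such as $U_1$) need lie in $bpc(S)$. The inductive step treats $\mon{M}{m} x \langle \phi_1,\dots,\phi_K \rangle$: binding $x$ introduces a block product of $M$ with the monoid $N \in bpc(S)$ recognizing the matrix, and the order atoms comparing $x$ with other positions are realized by evaluating $N$ separately on the prefix and the suffix of the bound position — exactly the two-sided dependence defining the block product. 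Hence the formula is recognized by a divisor of $M \Box N \in bpc(S)$. For a sentence this gives a monoid in $bpc(S)$ recognizing $L$, and since the syntactic monoid $M$ divides every recognizing monoid, $M$ divides a monoid in $bpc(S)$.

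The main obstacle is the block product principle in both of its uses. One must fix the prefix/suffix encoding so that a single quantifier corresponds to exactly one block product, verify that the relativizations to $\{y<x\}$ and $\{y>x\}$ reproduce the bilateral action defining $M_1 \Box M_2$, and check the bookkeeping that direct products (for Boolean connectives) and the marked-alphabet atoms for $<$ and for letters remain inside $bpc(S)$ without smuggling in extra aperiodic factors — the delicate point, since $\mathcal{L_S}[<]$ grants no quantification beyond that of $S$. The remaining steps — the base case, the passage through division, and the reduction from an arbitrary recognizing monoid to the syntactic monoid $M$ — are routine once this correspondence is in place.
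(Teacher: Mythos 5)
This lemma is quoted from Straubing's book and the paper supplies no proof of its own; your sketch follows exactly the standard argument from that reference, namely the block product principle in both directions combined with the fact that the syntactic monoid divides every monoid recognizing the language. The approach is correct, and you have rightly isolated the one delicate point — that quantifier-free matrices containing order atoms are not themselves recognized by monoids in $bpc(S)$ and must instead be absorbed into the block product performed at the enclosing quantifier.
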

%
%
%
%
\bigskip
\section{Results}\label{sec_result} 
Let $\mathcal{S \subseteq M}$ be any set of monoids. 
We show that the Crane Beach conjecture is true for the logic $\mathcal{L_S}[<,+]$.
\begin{theorem}[Main Theorem]
\label{thm_cbconj}
 Let $\mathcal{S \subseteq M}$. Then $$\mathcal{L_S}[<,+] \cap \NL = \mathcal{L}_{\mathcal{S}}[<] \cap \NL$$
\end{theorem}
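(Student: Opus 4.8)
The plan is to prove the two inclusions separately. The easy direction is $\mathcal{L_S}[<]\cap\NL \subseteq \mathcal{L_S}[<,+]\cap\NL$, which is immediate since $<$ is available in $\mathcal{L_S}[<,+]$ (every formula using only order is already a formula using order and addition), so any neutral-letter language definable with order alone is definable with order and addition. The substance lies entirely in the reverse inclusion $\mathcal{L_S}[<,+]\cap\NL \subseteq \mathcal{L_S}[<]\cap\NL$, and here I would follow the three-step strategy the introduction lays out, chaining together the results stated earlier in the excerpt.

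The key steps, in order, are as follows. Start with a sentence $\phi \in \mathcal{L_S}[<,+]$ defining a language $L$ with neutral letter $\lambda$. First, apply Theorem~\ref{thm_acd4ls} to obtain an infinite set $\mathcal{D}\subseteq\Nat$ and an active-domain formula $\phi'\in\mathcal{L_S}[<,+]$ such that $w\vDash\phi \Leftrightarrow w\vDash\phi'$ for every word $w$ whose non-neutral positions lie in $\mathcal{D}$. Second, apply the Ramsey-type result (Theorem~\ref{thm_ramsey}) to thin $\mathcal{D}$ down to an infinite $T\subseteq\mathcal{D}$ and replace $\phi'$ by an active-domain formula $\psi\in\mathcal{L_S}[<]$ that agrees with $\phi'$ on all words whose non-neutral positions lie in $T$. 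The intuition is that on a sufficiently sparse domain the addition predicate, restricted to the few non-neutral positions, can only encode order information that a purely Ramsey-homogeneous coloring collapses to the order type; this is where the combinatorics does the work of eliminating $+$. Third, invoke Lemma~\ref{lem_acdneutr}: the active-domain formula $\psi\in\mathcal{L_S}[<]$ already defines a language $L(\psi)$ that possesses a neutral letter.

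It remains to conclude that $L(\psi)=L$, and this is the gluing argument that makes the neutral-letter hypothesis essential. Both $L$ and $L(\psi)$ have neutral letter $\lambda$, so membership of any word is unchanged by inserting or deleting neutral letters; hence every word is equivalent, for both languages, to one whose non-neutral positions have been spread out to lie inside the infinite sparse set $T$. On such ``$T$-spread'' words the chain $w\vDash\phi \Leftrightarrow w\vDash\phi' \Leftrightarrow w\vDash\psi$ holds by construction, so $L$ and $L(\psi)$ agree on all $T$-spread words, and by neutrality they therefore agree on all words. Thus $L=L(\psi)\in\mathcal{L_S}[<]\cap\NL$, completing the nontrivial inclusion.

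The main obstacle, as the introduction flags, is entirely in establishing Theorem~\ref{thm_acd4ls}: producing an active-domain equivalent of a monoidal quantifier that visits each non-neutral position exactly once, in the correct order, and without introducing any auxiliary quantifier of a different type. For commutative idempotent monoids (the $\FO$ case) neither order nor multiplicity matters, and for the $\MOD$ case earlier work allowed extra $\FO$ quantifiers; the difficulty here is that for non-commutative monoids and groups the simulation must respect ordering while reusing only the same quantifier, which is what the \emph{Sorting Tree} construction and the trick of merging sorted lists via inverse group elements are designed to handle. Steps two and three, by contrast, are standard Ramsey and routine observations once the active-domain form is in hand.
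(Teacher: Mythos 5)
Your proposal is correct and follows essentially the same route as the paper's own proof: the trivial inclusion $\mathcal{L_S}[<]\cap\NL\subseteq\mathcal{L_S}[<,+]\cap\NL$, then the chain through Theorem~\ref{thm_acd4ls}, Theorem~\ref{thm_ramsey}, and Lemma~\ref{lem_acdneutr}, closed off by spreading the non-neutral positions of an arbitrary word into the sparse set and appealing to neutrality of both $L(\phi)$ and $L(\psi)$. You also correctly identify that the real content is in Theorem~\ref{thm_acd4ls}.
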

The proof of this theorem is given in Section \ref{sec_cranebeach}.

%

\subsection{Non definability Results}

Theorem \ref{thm_cbconj} give us the following corollaries. 
\begin{corollary} 
\label{cor_regLang}
All languages with a neutral letter in $\mathcal{L_M}[<,+]$ are regular.
\end{corollary}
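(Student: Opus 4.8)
The plan is to derive the statement directly from the Main Theorem together with the algebraic characterisation of the order-only logic in Lemma \ref{lem_logicalgb}. First I would instantiate Theorem \ref{thm_cbconj} with $\mathcal{S} = \mathcal{M}$, the set of all finite monoids. This immediately gives
$$\mathcal{L_M}[<,+] \cap \NL = \mathcal{L_M}[<] \cap \NL,$$
so every neutral-letter language definable in $\mathcal{L_M}[<,+]$ is already definable in $\mathcal{L_M}[<]$. It therefore suffices to show that every language definable in $\mathcal{L_M}[<]$ is regular.

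For this step I would appeal to Lemma \ref{lem_logicalgb}. The lemma characterises $\mathcal{L_S}[<]$, for an arbitrary $\mathcal{S} \subseteq \mathcal{M}$, as exactly those languages whose smallest recognising monoid divides a member of $bpc(S)$; in particular every such language is recognised by a finite monoid and hence is regular. Taking $\mathcal{S} = \mathcal{M}$, and noting that $\mathcal{M}$ is already closed under block products so that $bpc(\mathcal{M}) = \mathcal{M}$, the divisibility condition is vacuously satisfied, and the characterisation tells us precisely that $\mathcal{L_M}[<]$ is the class of all regular languages. Combining this with the first step completes the argument: any $L \in \mathcal{L_M}[<,+] \cap \NL$ lies in $\mathcal{L_M}[<]$ and is therefore regular.

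The corollary is thus a short consequence of the main result rather than an independent theorem, and I do not expect a genuine obstacle. The only point requiring care is the implicit use, inside Lemma \ref{lem_logicalgb}, of the standard fact that monoidal-quantifier logic with order as its sole numerical predicate defines only regular languages: the lemma is phrased for languages already possessing a finite syntactic monoid, so one must read it as the complete logic-to-algebra correspondence obtained by the inductive translation of $\mathcal{L_S}[<]$ formulas into finite monoids. Granting that reading, the passage from the Main Theorem to the regularity of every neutral-letter language in $\mathcal{L_M}[<,+]$ is immediate.
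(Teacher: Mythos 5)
Your proposal is correct and follows exactly the paper's own two-step argument: apply Theorem \ref{thm_cbconj} with $\mathcal{S}=\mathcal{M}$ to pass from $\mathcal{L_M}[<,+]\cap\NL$ to $\mathcal{L_M}[<]$, then invoke Lemma \ref{lem_logicalgb} to conclude that $\mathcal{L_M}[<]$ contains only regular languages. Your added remark about reading Lemma \ref{lem_logicalgb} as the full logic-to-algebra correspondence is a fair point of care, but the route is the same as the paper's.
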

\begin{proof}
 By Theorem \ref{thm_cbconj} we know that all languages with a neutral letter in $\mathcal{L_M}[<,+]$ can be defined in $\mathcal{L_M}[<]$ which by Lemma \ref{lem_logicalgb} is the set of all regular languages.
\qed \end{proof}

Recall that a monoid $M$ \emph{divides} a monoid $N$ if $M$ is a morphic image of a submonoid of $N$.

\begin{corollary}
  \label{cor_simGrp}
Let $\mathcal{S \subseteq G}$. Let $G$ be a simple group that does not divide any monoid $M$ in $\mathcal{S}$. Then the word problem over $G$ is not definable in $\mathcal{L_S}[<,+]$.
\end{corollary}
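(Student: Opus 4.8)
The plan is to argue by contradiction, exploiting that the word problem over $G$ carries a neutral letter and then combining the Main Theorem with the algebraic characterisation of $\mathcal{L_S}[<]$. Write $L_G$ for the word problem over $G$, i.e.\ the language of words over $G$ whose product is the identity. First I would observe that $L_G \in \NL$: the identity element $1 \in G$ is a neutral letter, since inserting or deleting an occurrence of $1$ in a word over $G$ leaves the product of its letters unchanged, hence does not affect membership in $L_G$.

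Next, suppose toward a contradiction that $L_G$ is definable in $\mathcal{L_S}[<,+]$. Then $L_G \in \mathcal{L_S}[<,+] \cap \NL$, so by Theorem \ref{thm_cbconj} we get $L_G \in \mathcal{L_S}[<] \cap \NL$; in particular $L_G$ is definable in $\mathcal{L_S}[<]$. A short syntactic-congruence computation then shows that the smallest monoid recognising $L_G$ is $G$ itself: two words over $G$ are syntactically equivalent exactly when they have the same product (taking the left and right contexts to be single group elements forces equality of products). Applying Lemma \ref{lem_logicalgb} with this monoid, definability of $L_G$ in $\mathcal{L_S}[<]$ forces $G$ to divide some monoid in $bpc(\mathcal{S})$.

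The final, and main, step is to show that this is impossible. Here I would invoke the standard algebraic fact that block products do not create new simple group divisors: if a simple group $H$ divides a block product $M \Box N$, then $H$ divides $M$ or $H$ divides $N$ (this follows from the behaviour of group divisors under wreath products in Krohn--Rhodes theory; see \cite{str_cirBook}). By induction on the construction of $bpc(\mathcal{S})$ — the base case being the members of $\mathcal{S}$ and the inductive step a single block product of two monoids already in $bpc(\mathcal{S})$ — every simple group dividing a monoid in $bpc(\mathcal{S})$ already divides some monoid in $\mathcal{S}$. Since $G$ is simple, it would then divide some $M \in \mathcal{S}$, contradicting the hypothesis. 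Hence $L_G$ is not definable in $\mathcal{L_S}[<,+]$.

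I expect the main obstacle to be this block-product step: justifying that the simple-group divisors of $bpc(\mathcal{S})$ coincide with those of the members of $\mathcal{S}$ is precisely where simplicity of $G$ is essential and where the nontrivial algebra (preservation of group divisors under wreath and block products) enters. By comparison, the Crane Beach reduction via Theorem \ref{thm_cbconj} and the computation of the syntactic monoid of $L_G$ are routine.
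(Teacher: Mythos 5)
Your proposal is correct and follows essentially the same route as the paper, which likewise notes that the identity of $G$ is a neutral letter and then cites Theorem \ref{thm_cbconj} and Lemma \ref{lem_logicalgb}. The paper leaves implicit the two details you spell out (that the syntactic monoid of the word problem is $G$ itself, and that a simple group dividing a block product must divide one of the factors), but these are exactly the standard facts the paper's citation of \cite{str_cirBook} is meant to cover.
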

\begin{proof}
The word problem over $G$ has a neutral letter. The result now follows from Theorem \ref{thm_cbconj} and Lemma \ref{lem_logicalgb}.
\qed \end{proof}


The majority quantifier, $\mathrm{Maj} ~x ~\phi(x)$ is given as follows.
$$w \vDash \mathrm{Maj} ~x ~\phi(x) \Leftrightarrow |\{i \mid w \vDash \phi(i), ~i \leq |w| \}|>\frac{|w|}{2}$$
$\MAJ[<]$  denotes the logic closed under majority quantifiers.
It is known that the majority quantifier can be simulated by the non-solvable group $S_5$ if both multiplication and addition are available \cite{vollmer_book}. We show that multiplication is necessary to simulate majority quantifiers.
\begin{corollary}
$\MAJ[<] \nsubseteq \mathcal{L_M}[<,+]$.
\end{corollary}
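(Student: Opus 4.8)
The plan is to use Corollary~\ref{cor_regLang} as a black box: every language with a neutral letter that lies in $\mathcal{L_M}[<,+]$ is regular. Hence, to separate $\MAJ[<]$ from $\mathcal{L_M}[<,+]$ it suffices to exhibit a single language $L$ that (i)~possesses a neutral letter, (ii)~is not regular, and (iii)~is definable in $\MAJ[<]$. Any such $L$ witnesses the corollary, since $L\in\mathcal{L_M}[<,+]$ together with (i) would force $L$ to be regular by Corollary~\ref{cor_regLang}, contradicting (ii); so $L\in\MAJ[<]\setminus\mathcal{L_M}[<,+]$.

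For the witness I would take the alphabet $\Sigma=\{a,b,\lambda\}$ and the language $L=\{w\in\Sigma^*: |w|_a=|w|_b\}$, where $|w|_c$ denotes the number of occurrences of the letter $c$ in $w$. Properties (i) and (ii) are immediate. Inserting or deleting a $\lambda$ changes neither $|w|_a$ nor $|w|_b$, so $\lambda$ is a neutral letter for $L$. For non-regularity, note that $L\cap a^*b^*=\{a^nb^n : n\ge 0\}$; since the regular languages are closed under intersection with the regular set $a^*b^*$ and $\{a^nb^n\}$ is not regular, $L$ is not regular either.

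The real work is property (iii): producing a $\MAJ[<]$ sentence for $L$. I would write $L$ as the conjunction of ``$|w|_a\le|w|_b$'' and ``$|w|_a\ge|w|_b$'', so that it is enough to define the comparison ``$|w|_a>|w|_b$''. When the word carries only $a$'s and $b$'s this is literally a majority quantifier, since $\mathrm{Maj}\,x\,a(x)$ then asserts $|w|_a>\tfrac12|w|=\tfrac12(|w|_a+|w|_b)$, i.e.\ $|w|_a>|w|_b$. The presence of the neutral letter is exactly what makes the statement delicate, and I expect it to be the main obstacle: the majority quantifier measures a count against $\tfrac12|w|$, and here $|w|=|w|_a+|w|_b+|w|_\lambda$ includes the neutral positions, so a naive majority over the $a$-positions compares $|w|_a$ against $\tfrac12(|w|_a+|w|_b+|w|_\lambda)$ rather than against $|w|_b$. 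The key step is therefore to relativize the majority quantifier to the non-neutral positions, so that the neutral letters are discounted from both the measured set and the threshold and one genuinely expresses ``more than half of the $\{a,b\}$-positions are $a$''. Carrying this out with only the order predicate available (no addition and no $\mathrm{BIT}$) is precisely the subtle point: isolating the sub-universe of non-neutral positions inside a majority count, i.e.\ directly comparing the cardinalities of the two disjoint definable sets $\{x:a(x)\}$ and $\{x:b(x)\}$, is where the difficulty concentrates. I would attempt this via the linear order together with nested majority quantifiers, falling back if needed on the known fact that majority logic over ordered words already defines non-regular neutral-letter languages. Once the relativized comparison is in hand, conjoining the two inequalities gives a $\MAJ[<]$ definition of $L$, and the separation follows from the first paragraph.
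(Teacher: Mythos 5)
Your proposal matches the paper's proof essentially verbatim: the paper uses the same witness language (equal numbers of $a$'s and $b$'s over a three-letter alphabet with a neutral letter), the same non-regularity argument, and the same appeal to Corollary~\ref{cor_regLang}. The one step you flag as delicate --- definability of the equal-counts language in $\MAJ[<]$ --- is likewise left unproved in the paper, which simply asserts that ``$L$ can be proven to be definable in $\MAJ[<]$'', so your treatment is if anything slightly more explicit about where the work lies.
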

\begin{proof}
 Consider the language $L \subseteq \{a,b,c\}^*$ consisting of all words with an equal number of $a$'s and $b$'s. $L$ can be proven to be definable in $\MAJ[<]$. Also note that $c$ is a neutral element for $L$. By Corollary \ref{cor_regLang}, and the fact that $L$ is nonregular,  we know that $L$ is not definable in $\mathcal{L_M}[<,+]$.
\qed \end{proof}

Barrington's theorem \cite{barr_NC1} says that the word problem of any finite group can be defined in the logic $\mathcal{L}_{S_5}[<,+,*]$. The following theorem shows that multiplication is necessary for Barrington's theorem to hold.
\begin{corollary}
  The word problem over the group $S_6$ is not definable in $\mathcal{L}_{S_5}[<,+]$. Infact there does not exist any one finite monoid $M$
  such that all regular languages can be defined in $\mathcal{L}_M[<,+]$.
\end{corollary}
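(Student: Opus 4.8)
The plan is to derive the final corollary directly from Corollary \ref{cor_simGrp} together with the algebraic characterization in Lemma \ref{lem_logicalgb}. I will treat the two assertions of the statement separately, since the first is a concrete instance and the second is a general non-existence claim.

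For the first assertion, I would take $\mathcal{S} = \{S_5\}$ and show that $S_6$ contains a simple group not dividing any monoid in $bpc(\{S_5\})$. The key algebraic fact is that the relevant simple group here is $A_6$ (the alternating group on $6$ points, which is simple and sits inside $S_6$), and that $A_6$ does not divide $S_5$. Because $bpc(\{S_5\})$ is built from $S_5$ under block products, and block products of groups stay within the variety generated by the composition factors of $S_5$, the only nonabelian simple group arising as a composition factor of anything in $bpc(\{S_5\})$ is $A_5$. Since $A_6 \neq A_5$ and $A_6$ is simple, $A_6$ cannot divide any monoid in $bpc(\{S_5\})$. Applying Corollary \ref{cor_simGrp} with $G = A_6$ then shows the word problem over $A_6$, and hence over $S_6$ (which recognizes it), is not definable in $\mathcal{L}_{S_5}[<,+]$.

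For the second assertion I would argue by contradiction. Suppose some single finite monoid $M$ had the property that every regular language is definable in $\mathcal{L}_M[<,+]$. By Corollary \ref{cor_regLang}, every neutral-letter language definable in $\mathcal{L}_M[<,+]$ is regular, but more to the point I would use Corollary \ref{cor_simGrp}: for this $M$ to define the word problem over \emph{every} finite simple group $G$, each such $G$ would have to divide some monoid in $bpc(\{M\})$. But $bpc(\{M\})$ contains only finitely many simple groups as divisors (namely those among the composition factors of $M$), whereas there are infinitely many pairwise non-isomorphic finite simple groups, for instance the family $\{A_n : n \geq 5\}$. Choosing any simple group $G$ outside this finite list and applying Corollary \ref{cor_simGrp} with $\mathcal{S} = \{M\}$ yields a regular language (the word problem over $G$, which carries a neutral letter) not definable in $\mathcal{L}_M[<,+]$, contradicting the assumption.

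The main obstacle I anticipate is the algebraic bookkeeping around $bpc$: namely establishing precisely which simple groups can divide a monoid in $bpc(\{M\})$. The clean statement I want is that the simple group divisors of $bpc(\{M\})$ are exactly the simple group divisors of $M$ itself, which should follow from the fact that the block product operation does not introduce new composition factors beyond those of its factors (a standard property of the pseudovariety generated by $M$). Verifying this closure property carefully — and in particular that taking block products cannot manufacture $A_6$ from $A_5$ — is the crux; once it is in hand, both assertions reduce to counting simple groups and invoking Corollary \ref{cor_simGrp}.
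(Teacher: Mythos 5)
Your proposal is correct and follows essentially the same route as the paper: the paper likewise observes that $A_6$ is a simple subgroup of $S_6$ not dividing $S_5$ and invokes Corollary \ref{cor_simGrp}, and for the second assertion picks, for any finite monoid $M$, a simple group not dividing $M$. Your extra bookkeeping about simple-group divisors of $bpc(\{M\})$ is sound but is really a justification of Corollary \ref{cor_simGrp} itself (via Lemma \ref{lem_logicalgb} and the Krohn--Rhodes fact that block products introduce no new simple group divisors), which the paper leaves implicit.
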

\begin{proof}
  $A_6$ is a simple subgroup of $S_6$, which does not divide $S_5$. From Corollary \ref{cor_simGrp} it follows that the word problem over
  $S_6$ is not definable in $\mathcal{L}_{S_5}[<,+]$. \\
  For any finite monoid $M$, there exists a simple group $G$ such that $G$ does not divide $M$ and hence the word problem over $G$ is
  not definable in $\mathcal{L}_M[<,+]$.
\qed \end{proof}

Let $L_p$ be the set of all words $w \in \{0,1\}^*$ such that the number of occurrences of $1$ in $w$ is equal to $0 \pmod p$. Then we
get the result in \cite{roy_defGenFO} that $L_p$ is not definable in $\FO+\MOD_m[<,+]$, if $p$ is a prime which does not divide
$m$.
\begin{corollary}[\cite{roy_defGenFO}]
  If $p$ is a prime which does not divide $m$, then $L_p$ is not definable in $\FO+\MOD_m[<,+]$.
\end{corollary}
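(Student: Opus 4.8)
The plan is to realise $\FO+\MOD_m$ as $\mathcal{L_S}$ for the monoid set $\mathcal{S}=\{U_1,C_m\}$, so that $\FO+\MOD_m[<,+]=\mathcal{L_S}[<,+]$ (the existential quantifiers come from $U_1$ and the modular quantifiers from $C_m$), and then to push the claim through our Main Theorem onto a purely algebraic statement about block products. First I would note that the letter $0$ is neutral for $L_p$: inserting or deleting a $0$ leaves the number of $1$'s unchanged, so $L_p\in\NL$. Hence Theorem \ref{thm_cbconj} applies and yields that $L_p\in\mathcal{L_S}[<,+]$ if and only if $L_p\in\mathcal{L_S}[<]$; it therefore suffices to show that the latter membership fails.

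Next I would identify the syntactic monoid of $L_p$. Sending $0$ to the identity and $1$ to a generator of $C_p$ gives a morphism onto $C_p$ that recognises $L_p$, and since the number of $1$'s modulo $p$ is exactly the information needed to decide membership in every context, $C_p$ is the smallest monoid recognising $L_p$. As $p$ is prime, $C_p$ is a simple group. By Lemma \ref{lem_logicalgb}, $L_p\in\mathcal{L_S}[<]$ would force $C_p$ to divide some monoid in $bpc(\{U_1,C_m\})$, so the goal reduces to ruling this out.

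The heart of the argument, and the step I expect to be the main obstacle, is the algebraic fact that block products introduce no new group factors: every group dividing a monoid in $bpc(\{U_1,C_m\})$ has order whose prime divisors all divide $m$. This is standard in the theory of finite monoids (see \cite{str_cirBook}) and follows because the group divisors of a block product have composition factors among those of the factors, while $U_1$ is aperiodic (contributing no nontrivial group factor) and the simple group divisors of $C_m$ are exactly the $C_q$ with $q\mid m$. Granting it, no group dividing a monoid in $bpc(\{U_1,C_m\})$ has order divisible by $p$. Since a group divides a monoid only if it divides one of the monoid's maximal subgroups, if $C_p$ divided some $M\in bpc(\{U_1,C_m\})$ it would divide a maximal subgroup $H$ of $M$, forcing $p\mid|H|$ and contradicting $p\nmid m$. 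Hence $C_p$ divides no monoid in $bpc(\{U_1,C_m\})$.

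Combining the pieces, Lemma \ref{lem_logicalgb} gives $L_p\notin\mathcal{L_S}[<]$, and then Theorem \ref{thm_cbconj} together with $L_p\in\NL$ gives $L_p\notin\mathcal{L_S}[<,+]=\FO+\MOD_m[<,+]$. The only nonroutine input beyond the Main Theorem is the block-product fact above; note that it is precisely this observation that lets us extend the reasoning of Corollary \ref{cor_simGrp} (which requires $\mathcal{S}\subseteq\mathcal{G}$) to the present setting, where the extra aperiodic generator $U_1$ is present but harmless.
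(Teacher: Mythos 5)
Your proposal is correct and follows essentially the same route as the paper: observe that $0$ is neutral, apply Theorem \ref{thm_cbconj} to collapse to $\FO+\MOD_m[<]$, and then derive a contradiction from Lemma \ref{lem_logicalgb} together with the standard block-product fact from \cite{str_cirBook}. You simply spell out the algebraic step (that $C_p$ cannot divide any monoid in $bpc(\{U_1,C_m\})$ when $p\nmid m$) that the paper leaves to the citation.
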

\begin{proof}
  Let $L_p$ be definable in $\FO+\MOD_m[<,+]$. Since $0$ is a neutral letter in $L_p$, Theorem \ref{thm_cbconj} says
  $L_p$ is also definable in $\FO+\MOD_m[<]$.  Due to Lemma \ref{lem_logicalgb} and \cite{str_cirBook}, this is a contradiction.
\qed \end{proof}

It is an open conjecture whether the language $1^*$ can be accepted by the circuit complexity class CC$^0$ \cite{str_cirBook}. It is also known that languages accepted by CC$^0$ circuits are exactly those which are definable by $\mathcal{L}_{\MOD}[<,+,*]$ formulas \cite{vollmer_book}. 

To progress in this direction Roy and Straubing \cite{roy_defGenFO} had posed the question of whether $1^* \notin \mathcal{L}_{\MOD}[<,+]$. Below we show that this is the case. 

\begin{corollary}
 $1^* \notin \mathcal{L}_{\MOD}[<,+]$. In fact $1^* \notin \mathcal{L_{G}}[<,+]$.
\end{corollary}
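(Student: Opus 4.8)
The plan is to reduce everything to the Main Theorem (Theorem~\ref{thm_cbconj}) and the algebraic characterisation of Lemma~\ref{lem_logicalgb}, and to prove the stronger statement $1^* \notin \mathcal{L_{G}}[<,+]$; the assertion $1^* \notin \mathcal{L}_{\MOD}[<,+]$ then follows at once, since $\MOD \subseteq \mathcal{G}$ gives $\mathcal{L}_{\MOD}[<,+] \subseteq \mathcal{L_{G}}[<,+]$. The first point I would make is that, despite appearances, $1^*$ is a neutral letter language: over the alphabet $\{0,1\}$, membership in $1^*$ only requires that no $0$ occur, and inserting or deleting a $1$ never creates or destroys a $0$. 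Hence $1$ is a neutral letter for $1^*$, so $1^* \in \NL$ and the Crane Beach machinery applies directly.

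So I would argue by contradiction: assume $1^* \in \mathcal{L_{G}}[<,+]$. Then $1^* \in \mathcal{L_{G}}[<,+] \cap \NL$, and Theorem~\ref{thm_cbconj} applied with $\mathcal{S}=\mathcal{G}$ yields $1^* \in \mathcal{L_{G}}[<]$. By Lemma~\ref{lem_logicalgb}, the smallest monoid $M$ recognising $1^*$ must then divide some monoid in $bpc(\mathcal{G})$. Next I would compute this syntactic monoid: a word acts as the identity exactly when it contains no $0$, and acts as a single absorbing element as soon as it contains a $0$, so $M$ is the two-element monoid $\{0,1\}$ under multiplication, i.e. $U_1$. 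Since $U_1$ has two idempotents it is not a group.

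The contradiction then comes from the key algebraic fact that every monoid in $bpc(\mathcal{G})$ is a finite group: the block product of two finite groups is again a finite group, and this property is preserved under the block-product closure (this is exactly what underlies the remark, recorded earlier, that $\mathcal{L}_{\MOD}[<]$ captures precisely the solvable-group languages). Since a submonoid of a finite group is a subgroup and a morphic image of a group is a group, every divisor of a monoid in $bpc(\mathcal{G})$ is itself a group. But $U_1$ is not a group, so it divides nothing in $bpc(\mathcal{G})$, contradicting the conclusion drawn from Lemma~\ref{lem_logicalgb}. Hence $1^* \notin \mathcal{L_{G}}[<,+]$, and in particular $1^* \notin \mathcal{L}_{\MOD}[<,+]$.

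I expect the only genuinely delicate ingredient to be the block-product fact, namely that $bpc(\mathcal{G})$ consists of groups; the rest is a routine chain of the already-established results together with the elementary observation that $1$ is neutral for $1^*$. Everything else — identifying $U_1$ as the syntactic monoid and noting that divisors of groups are groups — is immediate.
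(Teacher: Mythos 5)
Your proof is correct and follows essentially the same route as the paper: observe that $1$ is a neutral letter for $1^*$, apply Theorem~\ref{thm_cbconj} to drop to $\mathcal{L_G}[<]$, and then use Lemma~\ref{lem_logicalgb} to conclude that the syntactic monoid $U_1$ would have to divide a group, which is impossible. The only difference is that you spell out why $bpc(\mathcal{G})$ consists of groups and why divisors of groups are groups, whereas the paper delegates this contradiction to a citation of Straubing's book.
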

\begin{proof}
  The minimal monoid which can accept $1^*$ is $U_1$ and clearly the language is in $\NL$. By Theorem \ref{thm_cbconj} if
  there is a formula in $\mathcal{L_G}[<,+]$ which can define $1^*$, then $\mathcal{L_G}[<]$ can also define $1^*$.
  From Lemma \ref{lem_logicalgb} it follows that the monoid $U_1$ divides a group. But this is a contradiction \cite{str_cirBook}.
\qed \end{proof}

Behle and Lange \cite{behle_foLessUniform} give a notion of interpreting $\mathcal{L_S}[<,+]$ as highly uniform circuit classes.  As a consequence we can interpret the following results as a separation of the corresponding circuit classes.
\begin{corollary}
The following separation results hold, for all $m>1$
\begin{itemize}
\item $\FO[<,+] \not \subseteq \MOD[<,+]$.
\item $\MOD_m[<,+] \not \subseteq \FO[<,+]$.
\item $\FO[<,+] \subsetneq \FO+\MOD_m[<,+] \subsetneq \FO+\MOD[<,+]$
\item $\FO+\MOD[<,+] \subsetneq \FO+\GROUP[<,+]$
\item $\MAJ[<,+] \not \subseteq \FO+\GROUP[<,+]$
\end{itemize}
\end{corollary}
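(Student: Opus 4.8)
The plan is to derive every item directly from Theorem~\ref{thm_cbconj} together with the algebraic characterisation of Lemma~\ref{lem_logicalgb}, so that no new machinery is needed. In each case I exhibit a witness language $L$ lying in the larger class, observe that $L$ has a neutral letter, and then argue that $L$ cannot lie in the smaller class: if it did, Theorem~\ref{thm_cbconj} would push $L$ down into the corresponding order-only logic $\mathcal{L}_T[<]$, and Lemma~\ref{lem_logicalgb} would then require the syntactic monoid of $L$ to divide a monoid in $bpc(T)$, which I contradict using the standard descriptions of these block-product closures: $bpc(\{U_1\})$ consists of aperiodic monoids, $bpc(\{U_1\}\cup\MOD)$ of solvable monoids, and $bpc(\mathcal{M})$ of all finite monoids.

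For the first two items and the left inclusion of the third I use two elementary witnesses. The language $1^*$ sits in $\FO[<]\subseteq\FO[<,+]$ and lies in $\NL$, yet we have already shown $1^*\notin\mathcal{L}_{\MOD}[<,+]$; this gives $\FO[<,+]\not\subseteq\MOD[<,+]$. Dually, the language $L_m$ of words over $\{0,1\}$ whose number of $1$'s is $\equiv 0\pmod m$ lies in $\MOD_m[<]$, has $0$ as a neutral letter, and has syntactic monoid $C_m$, which is a nontrivial group and hence not aperiodic; were $L_m\in\FO[<,+]$, Theorem~\ref{thm_cbconj} and Lemma~\ref{lem_logicalgb} would make $C_m$ divide an aperiodic monoid, a contradiction. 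This yields $\MOD_m[<,+]\not\subseteq\FO[<,+]$, and since $L_m\in\FO+\MOD_m[<,+]$ while the inclusion itself is syntactic, it also gives the proper inclusion $\FO[<,+]\subsetneq\FO+\MOD_m[<,+]$.

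The remaining separations use richer witnesses. For the right inclusion $\FO+\MOD_m[<,+]\subsetneq\FO+\MOD[<,+]$, I pick a prime $p\nmid m$ (one always exists); then $L_p\in\MOD_p[<]\subseteq\FO+\MOD[<,+]$, while the preceding corollary of \cite{roy_defGenFO} gives $L_p\notin\FO+\MOD_m[<,+]$. For $\FO+\MOD[<,+]\subsetneq\FO+\GROUP[<,+]$, I take the word problem of a non-abelian simple group such as $A_5$: it lies in $\mathcal{L}_{A_5}[<]\subseteq\FO+\GROUP[<,+]$ and has the identity as a neutral letter, but if it were in $\FO+\MOD[<,+]$ then by Theorem~\ref{thm_cbconj} it would be in $\FO+\MOD[<]$, forcing $A_5$ to divide a solvable monoid of $bpc(\{U_1\}\cup\MOD)$, contradicting the non-solvability of $A_5$. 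Finally, for $\MAJ[<,+]\not\subseteq\FO+\GROUP[<,+]$, I reuse the language of words over $\{a,b,c\}$ with equally many $a$'s and $b$'s: it is definable in $\MAJ[<]\subseteq\MAJ[<,+]$, has neutral letter $c$, and is non-regular, so by Corollary~\ref{cor_regLang} it cannot lie in $\mathcal{L_M}[<,+]\supseteq\FO+\GROUP[<,+]$.

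I expect the only genuine subtlety to be bookkeeping at the algebraic level rather than anything logical: one must match each smaller logic $\mathcal{L}_T[<]$ with the correct variety and check that the chosen witness's syntactic monoid escapes it. In particular, the step for $\FO+\MOD[<,+]\subsetneq\FO+\GROUP[<,+]$ hinges on the fact that $\FO+\MOD[<]$ captures exactly the regular languages with solvable syntactic monoid, so that a non-solvable simple group genuinely separates the two classes; this is close in spirit to Corollary~\ref{cor_simGrp}, though since here $\mathcal{S}=\{U_1\}\cup\MOD$ contains the non-group monoid $U_1$, I prefer to invoke Theorem~\ref{thm_cbconj} and Lemma~\ref{lem_logicalgb} directly. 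All the real difficulty has already been absorbed into Theorem~\ref{thm_cbconj}, so once that theorem and the cited corollaries are available these derivations are routine.
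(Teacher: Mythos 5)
Your proof is correct and follows exactly the route the paper intends: the corollary is stated without proof precisely because each item is assembled from the preceding results ($1^*\notin\mathcal{L}_{\mathcal G}[<,+]$, the $L_p$ corollary, Corollary~\ref{cor_simGrp}/Corollary~\ref{cor_regLang}) combined with Theorem~\ref{thm_cbconj} and Lemma~\ref{lem_logicalgb}, using the same witness languages you chose. Your extra care in bypassing Corollary~\ref{cor_simGrp} for the $\FO+\MOD$ versus $\FO+\GROUP$ separation (since $U_1$ is not a group) and instead invoking closure of solvability under block products is a sound and slightly more explicit version of the intended argument.
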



\subsection{Regular languages in $L_{\mathcal S}[<,+]$}
We now look at regular languages definable by the logic $L_{\mathcal{S}}[<,+]$, for an $\mathcal{S \subseteq M}$. We first show that this logic is closed under quotienting.

\begin{lemma} \label{lem_qnt}
Let $S \subseteq \mathcal{M}$ and $\Sigma$ be a finite alphabet. Let $L \subseteq \Sigma^*$ be definable in $L_{\mathcal{S}}[<,+]$ and $u, v \in \Sigma^*$. Then $u^{-1}Lv^{-1}$ is also definable in $L_{\mathcal{S}}[<,+]$.
\end{lemma}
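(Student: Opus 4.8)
The plan is to build, from a sentence $\phi$ defining $L$, a sentence $\psi$ defining $u^{-1}Lv^{-1} = \{w \mid uwv \in L\}$ by a purely syntactic translation that relativises every quantifier of $\phi$ to the positions of $w$ while handling the finitely many positions contributed by the fixed prefix $u$ and suffix $v$ explicitly. First I would reduce to the one-sided, single-letter case. Since $u^{-1}Lv^{-1} = u^{-1}(Lv^{-1})$, and since $(ba)^{-1}K = a^{-1}(b^{-1}K)$ and $K(ab)^{-1} = (Ka^{-1})b^{-1}$ for letters $a,b \in \Sigma$, an induction on $|u|$ and $|v|$ reduces the statement to closure under the single-letter quotients $a^{-1}K$ and $Kb^{-1}$. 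I would treat the right quotient $Kb^{-1}$ first (here $wb$ simply adds one new last position $n=|w|$, leaving the positions of $w$ fixed), and then the left quotient $a^{-1}K$ (here $aw$ shifts every position of $w$ up by one).

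The heart of the construction is a translation defined by induction on subformulas. For a subformula $\theta$ I track, for each of its free variables, whether that variable is assigned to the extra boundary position or to a genuine position of $w$; accordingly I translate each atom ($c(x)$, $x=y$, $x<y$, and the addition atom) into a formula over $w$, using that a variable pinned to the new letter forces $c(\cdot)$ to the truth value $[c=b]$ (resp.\ $[c=a]$) and that the boundary position is $<$-maximal (resp.\ $<$-minimal). For a monoid quantifier $\mon{M}{m} x \langle \phi_1,\dots,\phi_K\rangle$ evaluated on $wb$, I split the in-order product as $(\text{product over the positions of }w)\cdot g$, where $g\in M$ is the contribution of the boundary position; since $M$ is finite I case-split over the value of $g$ and, for each $g$, assert $\mon{M}{m'} x \langle \widetilde{\phi_1},\dots,\widetilde{\phi_K}\rangle$ relativised to $w$-positions for every $m'$ with $m'g=m$, conjoined with the translated formulas certifying the value of $g$ at the boundary. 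No new quantifier is introduced, only the quantifier being replaced, which keeps us inside $\mathcal{L_S}[<,+]$. A routine induction then yields $w\models\psi \iff uwv\models\phi$.

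The step I expect to be the main obstacle is the translation of the addition atom across the boundary. For the right quotient the only delicate atoms are those whose sum-slot is pinned to the new position $n$, which forces the relation $i_x+i_y=n$, i.e.\ ``the two positions sum to the length''; for the left quotient the $+1$ shift means that every interior atom $x+y=z$ must be rewritten as the shifted relation $i_x+i_y+1=i_z$. Both reduce to expressing, over $w$, the successor relation and a length-complement relation, together with the constant position $0$. The constant $0$ is harmless, being quantifier-free definable by $x+x=x$. The genuine subtlety is that these bounded affine and boundary relations, while clearly first-order $[<,+]$-definable, must be realised \emph{without} assuming $U_1\in\mathcal{S}$, since $\mathcal{L_S}[<,+]$ need not contain existential quantifiers. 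I would resolve this by showing that only finitely many offsets occur (all bounded by $|u|+|v|$) and that each required relation can be captured by combining the definable constant $0$, the order predicate, and the $+$ predicate, falling back on an auxiliary relativised monoid quantifier built from an element of some $M\in\mathcal{S}$ where a threshold is otherwise unavoidable; verifying the correctness of this boundary gadget is the technical crux, after which the inductive equivalence above completes the proof.
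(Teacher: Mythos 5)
The paper gives no proof of this lemma (the proof environment is empty), so there is nothing to compare your route against; your overall strategy -- reduce to single-letter quotients, relativise each $\mon{M}{m}$ quantifier to the $w$-positions, and absorb the boundary position's contribution by a finite case split over $g\in M$ with a disjunction over all $m'$ satisfying $m'g=m$ (note: a disjunction, not a single $m'$, since right multiplication need not be injective in a monoid) -- is the natural one and is sound in outline. The problem is that the step you yourself flag as the crux is left unresolved, and the resolution you sketch does not work as stated. You propose to realise the shifted relations ($x+y+1=z$, ``$x+y$ equals the length'', the constants $0,1,\dots,|u|+|v|$) by ``falling back on an auxiliary relativised monoid quantifier \dots where a threshold is otherwise unavoidable.'' But when $\mathcal{S}$ consists only of groups (e.g.\ $\mathcal{S}=\{C_2\}$, so the logic is $\MOD_2[<,+]$), a quantifier $\mon{G}{g}$ cannot compute any threshold: it evaluates $g^{k}$ for $k$ the number of witnesses and hence cannot distinguish $0$ witnesses from $|G|$ witnesses. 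So the obvious definitions of successor and of ``maximal position'' via $\neg\exists z\,(x<z<y)$ and $\neg\exists q\,(q>p)$ are unavailable, and saying the relations are ``clearly first-order $[<,+]$-definable'' begs the question, since $U_1$ need not be in $\mathcal{S}$.

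The gap is closable, but by a different mechanism than thresholds. First, the term language this paper actually works with (see the proof of Lemma~\ref{lem_normalform}) already admits linear terms with integer constant offsets, so $x+y+c=z$ and $x=c$ are atomic and the left-quotient shift costs nothing. Second, every remaining existential you need has \emph{at most one} witness -- the maximal position $p$ (characterised by $\neg\exists q\,(q=p+1)$ once successor is atomic), the witness $r=x+y$ in $r\le c$, the midpoint in parity tests -- and a unique-witness existential $\exists z\,\theta(z)$ is expressible as $\mon{M}{m_1}z\,\langle\theta,false,\dots,false\rangle$ for any nontrivial $M\in\mathcal{S}$, because $m_1^k=m_1$ holds for $k\in\{0,1\}$ exactly when $k=1$. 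This is precisely the trick the paper uses in Section~\ref{sec_lemmaProof} (``since in each $B$ there is at most one such element, we can use the group quantifier to simulate the existential quantifier''), and with it the relation $x+y=|w|+c$ becomes $\exists p\,(p\text{ maximal}\wedge x+y=p+1+c)$ with unique witness $p$. You should also dispose of the degenerate cases (e.g.\ $w$ empty, where all quantifiers return the identity and the formula must still decide whether $uv\in L$) by a separate quantifier-free or unique-witness disjunct. Without spelling out the unique-witness simulation, the boundary gadget -- the one step you identify as carrying the whole proof -- is not established.
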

\begin{proof}
\qed \end{proof}

We now show that the logic is also closed under inverse length perserving morphisms.
\begin{lemma} \label{lem_morph}
Let $S \subseteq \mathcal{M}$. Let $\Sigma,\Gamma$ be finite alphabets and let $h:\Gamma^* \rightarrow \Sigma^*$ be a homomorphism such that $h(\Gamma) \subseteq \Sigma^r$ for some fixed $r>0$. If $L \subseteq \Sigma^*$ is definable in $L_{\mathcal{S}}[<,+]$, then $h^{-1}(L) \subseteq \Gamma^*$ is also definable in $L_{\mathcal{S}}[<,+]$.
\end{lemma}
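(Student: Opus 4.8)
The plan is to construct, by induction on the structure of a defining formula, a syntactic translation that pulls a formula over $\Sigma$-words back to a formula over $\Gamma$-words. Fix the constant $r$ with $h(\Gamma)\subseteq\Sigma^r$, and for $\gamma\in\Gamma$ write $h(\gamma)=h(\gamma)(0)\cdots h(\gamma)(r-1)$. The key correspondence is that if $w\in\Gamma^*$ has length $n$ then $h(w)$ has length $rn$, and every position $j$ of $h(w)$ factors uniquely as $j=ri+k$ with $0\le k<r$, its letter being $h(w(i))(k)$. So I would represent each variable of the $\Sigma$-formula by a pair consisting of a position $i$ of $w$ together with a fixed offset $k\in\{0,\dots,r-1\}$, and define a translation $\phi\mapsto\phi^{(\bar k)}$ relative to a choice of offsets $\bar k$ for the free variables, with intended meaning $w,\bar\imath\models\phi^{(\bar k)}\iff h(w),(r\bar\imath+\bar k)\models\phi$. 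Since $r$ is a constant there are only finitely many offset choices, so a single $\Sigma$-variable is simulated by one $\Gamma$-variable plus a finite Boolean case distinction on its offset.

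For the atomic and Boolean steps the translation is routine. A unary predicate $c(x)$ at offset $k$ becomes $\bigvee\{\gamma(x)\mid \gamma\in\Gamma,\ h(\gamma)(k)=c\}$; equality and the order predicate become finite case distinctions on the offsets (for $=$ only equal offsets survive; for $<$ one compares $i$-values and breaks ties by offset), all quantifier-free. The pleasant step is the monoid quantifier. Here I would group the product defining $\mon{M}{m}$ on $h(w)$ into consecutive blocks of $r$ factors, so that $\prod_{j=0}^{rn-1}u(j)=\prod_{i=0}^{n-1}B(i)$ where $B(i)=\prod_{k=0}^{r-1}u(ri+k)\in M$ is the product of the block at $w$-position $i$. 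A $\mon{M}{m}$ over $h(w)$ then becomes the same $\mon{M}{m}$ over $w$ whose per-position element is $B(i)$; using the priority semantics, the condition $B(i)=m_t$ is expressed as a Boolean combination, namely a disjunction over the finitely many $r$-tuples of monoid elements whose product is $m_t$, of the translated subformulas $\phi_s^{(k)}$. Crucially this introduces no new quantifier and reuses only the quantifier being translated, so the construction stays inside $L_{\mathcal S}[<,+]$ for an arbitrary $\mathcal S$.

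The main obstacle is the addition predicate. Translating $+(x,y,z)$ under offsets $(k_1,k_2,k_3)$ amounts to $(ri_1+k_1)+(ri_2+k_2)=ri_3+k_3$, i.e.\ $r(i_1+i_2)+(k_1+k_2)=ri_3+k_3$. When $k_1+k_2<r$ this is just $i_1+i_2=i_3$ with $k_3=k_1+k_2$, translating to the atom $+(x,y,z)$; but when $k_1+k_2\ge r$ (unavoidable for $r\ge2$, e.g.\ $k_1=k_2=r-1$) a carry appears and the condition becomes $i_1+i_2+1=i_3$, a bounded shift $z=x+y+1$. This relation is not quantifier-free definable from $<$ and $+$, and when $\mathcal S$ does not contain the existential quantifier (as for $\mathcal S=\MOD$ or $\mathcal S=\mathcal G$) it cannot be obtained merely by adding an $\exists$. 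Handling this carry across block boundaries is therefore the crux: one must exhibit the bounded-shift relations $z=x+y+c$ (for $0\le c<r$) as formulas of $L_{\mathcal S}[<,+]$, for instance by factoring $h$ through the length-preserving morphism $g\colon\Gamma^*\to(\Sigma^r)^*$ (whose inverse image is trivially handled by relabelling) and treating the remaining spreading step with a dedicated argument, or by exploiting additional structure available in the setting where the lemma is applied. I expect the clean cases above to be straightforward and essentially all of the difficulty to reside in this single point.
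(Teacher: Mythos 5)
The paper does not actually contain a proof of this lemma -- the proof environment is left empty -- so there is nothing to compare your argument against; it has to stand on its own. Your overall strategy (decompose each position $j$ of $h(w)$ as $j=ri+k$, carry an offset $k\in\{0,\dots,r-1\}$ alongside each variable, and translate atoms by a finite case analysis on offsets) is the natural one, and your treatment of the monoid quantifier is correct and is the part that matters most for this paper: grouping the product into blocks of length $r$ and expressing ``the block product at $i$ equals $m_t$'' as a disjunction over the finitely many factorizations of $m_t$ in $M^r$ reuses only the quantifier being translated, so the construction does not leave $\mathcal{L_S}[<,+]$ for arbitrary $\mathcal{S}$. (One small point worth stating explicitly: since the identity of $M$ is not among $m_1,\dots,m_K$, a block whose product is $1_M$ makes all $K$ translated arguments false and hence correctly contributes the identity.)

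The gap is the one you name yourself and then do not close: the carry case, i.e.\ the relations $z=x+y+c$ for $0<c<r$. As written, your proof is incomplete at exactly the point you flag as the crux, so it cannot be accepted as a proof of the lemma. Whether the gap is real depends on a convention the lemma's statement leaves implicit but that the paper fixes in Section \ref{sec_cranebeach}: there $+$ is treated as a binary \emph{function} symbol and atomic formulas freely involve linear terms with integer constant parts (e.g.\ $z=\rho$ with $\rho=\alpha_0+\alpha_1x_1+\dots$, and formulas such as $f(\vec x,\vec y)+l\equiv_p 0$). Under that convention $z=x+y+c$ is itself an atomic formula and your translation goes through with no further work; you should simply adopt this reading and say so. If instead you insist on $+$ as a bare ternary relation, then you have not shown that the bounded-shift relations are definable, and your worry is well founded: a quantifier-free formula over $\{<,=,+\}$ in the three variables cannot express $z=x+y+1$, and when $\mathcal{S}$ contains only groups the usual trick of simulating $\exists$ by a group quantifier over a \emph{uniquely witnessed} property does not apply to the successor relation (whose complement counts an unbounded set). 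Either resolve the convention explicitly or supply a definition of $z=x+y+c$ in the relational signature; without one of these the argument is not complete.
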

\begin{proof}
\qed \end{proof}

We now give an algebraic characterization for regular languages definable by $L_{\mathcal S}[<,+]$. 
\begin{theorem}
Let $\mathcal{S \subseteq M}$. Let $L \subseteq \Sigma^*$ be a regular language. Then $L$ is definable in $L_{\mathcal{S}}[<,+]$ iff there exists a semigroup $\mathcal{V}$ and a morphism, $h: \Sigma^* \rightarrow \mathcal{V}$, such that for all $k \in \Nat$, every monoid in $h(\Sigma^k)$ is also in $bpc(S)$.
\end{theorem}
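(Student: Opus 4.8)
The plan is to prove both directions through a recognizing morphism $h:\Sigma^*\to\mathcal{V}$, reading the condition ``every monoid in $h(\Sigma^k)$ is in $bpc(S)$'' as a statement about the local submonoids of $\mathcal{V}$ sitting inside the length-slice $h(\Sigma^k)$, that is, the submonoids $N\subseteq h(\Sigma^k)$ with an idempotent identity $e=e^2$. Throughout I write $L=h^{-1}(T)$.

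\textbf{Forward direction.} Suppose $L\in\mathcal{L_S}[<,+]$ and let $\eta:\Sigma^*\to\mathcal{V}$ be the syntactic morphism onto the syntactic semigroup. Fix $k$ and a submonoid $N\subseteq\eta(\Sigma^k)$ with identity $e$. For each $v\in N$ choose a block $b_v\in\Sigma^k$ with $\eta(b_v)=v$, a block $c\in\Sigma^k$ with $\eta(c)=e$, and define the length-preserving morphism $\gamma:N^*\to\Sigma^*$ by $\gamma(v)=c\,b_v\,c$, so that $r=3k$. A direct computation gives $\eta(\gamma(v_1\cdots v_m))=e v_1 e\cdots e v_m e=v_1\cdots v_m$, the product taken inside $N$; hence $\gamma^{-1}(L)=\{u\in N^*:\prod u\in T\}$ is exactly the word problem of $N$, and $e$ is a \emph{neutral letter} for it because $e$ is the identity of $N$. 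By Lemma \ref{lem_morph} this pullback lies in $\mathcal{L_S}[<,+]$; being a neutral-letter language, Theorem \ref{thm_cbconj} puts it in $\mathcal{L_S}[<]$, and Lemma \ref{lem_logicalgb} then forces $N$ into $bpc(S)$. As $k$ and $N$ were arbitrary, $\eta$ witnesses the stated condition.

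\textbf{Backward direction.} Conversely, assume the slice condition and build a formula for $L$. The sets $h(\Sigma^{jp})$ stabilise for a suitable period $p$ and large $j$, yielding a stable idempotent $e$ and a stable local monoid $N_\infty=e\mathcal{V}e$, which by hypothesis lies in $bpc(S)$. On an input $w$ of length $n$ the plan is to use the addition predicate only to supply the modular predicates ``position $\equiv i\pmod p$'', to cut $w$ into consecutive $p$-blocks aligned to its left end, to read each interior block's $\eta$-image as an element of $N_\infty$ (a function of $p$ consecutive letters, hence $\FO[<]$-definable), and to multiply these elements in order with the monoid quantifiers for $N_\infty$. Since $N_\infty\in bpc(S)$, Lemma \ref{lem_logicalgb} makes this product $\mathcal{L_S}[<]$-expressible; the bounded prefix and suffix of length $<p$, together with the residue $n\bmod p$, are handled by first-order quantification, and closure under quotients and inverse length-preserving morphisms (Lemmas \ref{lem_qnt} and \ref{lem_morph}) is used to normalise the boundary blocks. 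The whole formula then lives in $\mathcal{L_S}[<,+]$.

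\textbf{Main obstacle.} The delicate point is the backward direction: a priori $h(w)$ is a product passing through the \emph{distinct} slices $h(\Sigma^1),h(\Sigma^2),\dots$, so it is not a product inside a single monoid. The crux is to show that, after the $p$-block grouping, the bulk of the computation collapses into the one stable local monoid $N_\infty$ -- this is where the stabilisation of $h(\Sigma^{jp})$ and the idempotent $e$ do the real work, and where each block must be multiplied exactly once and in the correct order, the role played by the sorting-tree construction of Section \ref{sec_lemmaProof}. A secondary nuisance is the bookkeeping between ``divides'' and ``belongs to'' $bpc(S)$; this is absorbed by taking $bpc(S)$ closed under division, or by choosing the accepting set so that the word problem of $N$ has syntactic monoid exactly $N$.
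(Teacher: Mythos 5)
First, a caveat: the paper's proof environment for this theorem (and for the two supporting Lemmas \ref{lem_qnt} and \ref{lem_morph}) is empty, so there is no written argument to compare yours against; I can only assess your proposal on its merits. Your forward direction has the right skeleton and is almost certainly the intended one: pad each letter $v$ of the submonoid $N\subseteq\eta(\Sigma^k)$ into a block $c\,b_v\,c$ so that the idempotent $e$ becomes a genuine neutral letter, pull back through Lemma \ref{lem_morph}, apply Theorem \ref{thm_cbconj}, and finish with Lemma \ref{lem_logicalgb}. One gap there is more than bookkeeping, though: the syntactic monoid of $\gamma^{-1}(L)=\{u\in N^*:\prod u\in T\}$ is the quotient of $N$ by the congruence induced by contexts \emph{inside} $N^*$, which can be strictly coarser than $N$, so you only conclude that this quotient divides $bpc(S)$. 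This is exactly what Lemma \ref{lem_qnt} is in the paper for: you must run the same argument on all quotients $u^{-1}Lv^{-1}$, whose syntactic congruences jointly separate the elements of $N$, and then use closure of $bpc(S)$ under direct products and division. You cite Lemma \ref{lem_qnt} only in the backward direction, where it does little work, and omit it here, where it is needed.

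The backward direction is where the real gap lies. After choosing $p$ with $W=h(\Sigma^p)=h(\Sigma^{2p})=\cdots$, the image of each interior $p$-block is an element of the \emph{semigroup} $W$ (which satisfies $W^2=W$), not of the local monoid $eWe$; your object $N_\infty=e\mathcal{V}e$ is not even contained in a single slice $h(\Sigma^k)$, so the hypothesis does not apply to it, and the product of the blocks does not take place inside any one monoid covered by the hypothesis. The passage from ``every monoid contained in $W$ lies in $bpc(S)$'' to ``the word problem of the semigroup $W$ is definable'' is the crux and is not addressed. Moreover, you repeatedly invoke ``first-order quantification'' to glue the prefix, suffix and residue $n\bmod p$, but $\FO$ is not available in $\mathcal{L_S}[<,+]$ for general $\mathcal{S}$ (e.g.\ $\mathcal{S}=\MOD$, where $bpc(S)$ contains no divisor of $U_1$). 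Consider $L=\Sigma^*a$: its syntactic semigroup is right-zero, every monoid in every slice is trivial, so the theorem asserts $L\in\MOD[<,+]$; this is true, but only via the trick that a property with at most one witness can be tested by counting witnesses modulo $2$ using the addition predicate to pin down the last position. That unique-witness simulation of bounded existential quantification by group quantifiers (which the paper itself uses in Lemma \ref{lem_formbase}) is the mechanism that must replace your appeal to $\FO$, and without it the construction does not go through for logics lacking $U_1$. Finally, the sorting tree of Section \ref{sec_lemmaProof} plays no role here: it belongs to the elimination of $+$ from neutral-letter languages, i.e.\ to the machinery behind the forward direction, not to assembling a $[<,+]$-formula in the backward one.
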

\begin{proof}

\qed \end{proof}

Let $\mathcal{S}$ be a set of monoids such that, given a monoid $M$, it is decidable if $M$ divides a block product of monoids in $\mathcal{S}$. 
Then, given a regular language $L$, it is decidable if $L \in L_{\mathcal{S}}[<]$.
Together with our main theorem we get that it is decidable if $L\in\mathcal{L_{S}}[<,+]$.

\begin{corollary}
Let $\mathcal{S}$ be a set of monoids such that, given a monoid $M$, it is decidable if $M$ divides a block product of monoids in
$\mathcal{S}$. Then, given a regular language $L$, it is decidable if $L \in L_{\mathcal{S}}[<,+]$.
\end{corollary}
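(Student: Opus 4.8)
The plan is to combine the algebraic characterization of the regular languages in $L_{\mathcal S}[<,+]$ established above with two effectiveness arguments: one eliminating the existential quantifier over the recognising semigroup, the other eliminating the universal quantifier over all lengths. By that characterization, $L$ is definable in $L_{\mathcal S}[<,+]$ iff there is a semigroup $\mathcal V$ and a morphism $h\colon\Sigma^*\to\mathcal V$ recognising $L$ such that for every $k\in\Nat$ every monoid contained in $h(\Sigma^k)$ divides a block product of monoids in $\mathcal S$; call this property $P(h)$. I will first show that it suffices to test $P$ on the \emph{syntactic} morphism $\eta\colon\Sigma^*\to M_L$, which is computable from $L$, and then that $P(\eta)$ is decidable.

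To eliminate the existential quantifier, note that if $P(\eta)$ holds then $(M_L,\eta)$ is itself a witness, so only the converse is at issue. Suppose $P(h)$ holds for some recogniser $h$. Since $h$ recognises $L$, its congruence refines the syntactic congruence, so $\eta=\psi\circ h$ for a surjective morphism $\psi$ from the submonoid $h(\Sigma^*)$ of $\mathcal V$ onto $M_L$; in particular $\eta(\Sigma^k)=\psi(h(\Sigma^k))$ for every $k$ and $M_L$ divides $\mathcal V$. Given a monoid $N\subseteq\eta(\Sigma^k)$ with identity $f$, the key step is to relocate $N$ into a single length class that is an honest subsemigroup, and then lift it. The sequence $m\mapsto h(\Sigma^m)$ is a forward orbit under right multiplication by the fixed set $h(\Sigma)$ inside the finite power semigroup of $\mathcal V$, hence eventually periodic; so for every large multiple $\ell$ of its period the set $X:=h(\Sigma^\ell)$ is a subsemigroup of $\mathcal V$, and $Y:=\eta(\Sigma^\ell)=\psi(X)$ is a subsemigroup of $M_L$. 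Using that $f$ is idempotent, every element of $N$ can be written as $f\,n\,f$ and realised by words whose length is a multiple of $k$, so in fact $N\subseteq\eta(\Sigma^{\ell})$ for all large $\ell$ divisible by $k$; choosing such an $\ell$ that is also a large multiple of the period places $N$ inside the stable class $Y$. Finally I would lift the submonoid $N$ of $Y$ through the surjection $\psi\colon X\to Y$: picking an idempotent $e\in\psi^{-1}(f)$ and setting $N':=e\,\psi^{-1}(N)\,e$ gives a submonoid $N'\subseteq X=h(\Sigma^\ell)$ with $\psi(N')=N$. Thus $N$ is a quotient of the monoid $N'$, which lies in a single length class of $h$; since $P(h)$ forces $N'$ to divide a block product of monoids in $\mathcal S$ and division is transitive, so does $N$. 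Hence $P(h)\Rightarrow P(\eta)$.

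For decidability of $P(\eta)$, observe that $M_L$ and $\eta$ are computable from any presentation of the regular language $L$, and that by the eventual periodicity just used the family $\{\eta(\Sigma^k):k\in\Nat\}$ is a \emph{finite}, effectively computable set of subsets of $M_L$. For each of these finitely many subsets I would enumerate its finitely many submonoids, and for each submonoid $N$ decide whether $N$ divides a block product of monoids in $\mathcal S$, which is possible by the hypothesis on $\mathcal S$. Membership $L\in L_{\mathcal S}[<,+]$ then holds iff all of these finitely many tests succeed.

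The step I expect to be the main obstacle is the reduction to the syntactic morphism. A naive lift of a monoid from $\eta(\Sigma^k)$ scatters its elements across several length classes of $h$ — the idempotent identity is forced into a length class different from the other elements — so the argument genuinely relies on the observation that at length classes which are large multiples of the period the image $h(\Sigma^\ell)$ is a subsemigroup, into which a whole monoid can be moved before lifting. The remaining parts are routine manipulations of finite semigroups and their power semigroups.
\qed
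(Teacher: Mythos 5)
Your argument is correct and follows the route the paper intends but leaves entirely implicit: the paper gives no proof of this corollary (nor of the characterization theorem it rests on), only the remark that decidability follows from the algebraic characterization together with the hypothesis on $\mathcal S$. You supply exactly the two effectiveness steps that remark glosses over. The reduction to the syntactic morphism is sound: factoring $\eta=\psi\circ h$, using eventual periodicity of $m\mapsto h(\Sigma^m)$ in the power semigroup to find a length class $h(\Sigma^\ell)$ that is a subsemigroup, observing $N=f^{j}Nf^{j}\subseteq\eta(\Sigma^{\ell})$ for $\ell$ a suitable common multiple, and lifting $N$ through $\psi$ via an idempotent $e\in\psi^{-1}(f)$ to a submonoid $e\,\psi^{-1}(N)\,e$ of $h(\Sigma^\ell)$ -- each of these checks out, and the eventual periodicity also gives the finiteness needed for the decision procedure. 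One small point worth flagging: you silently read the condition "every monoid in $h(\Sigma^k)$ is in $bpc(S)$" as "divides a block product of monoids in $\mathcal S$." That reading is the right one (it is what the decidability hypothesis and Lemma~\ref{lem_logicalgb} are phrased in terms of, and your syntactic-morphism reduction only produces divisors, not literal members of $bpc(S)$), but since the paper's $bpc(S)$ is not defined to be division-closed, you should state explicitly that you are using the division-closed form of the characterization.
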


For $\FO+\MOD[<,+]$ this was proved in \cite{roy_defGenFO}. Here we prove this for the special case when $\mathcal{S}=\MOD$.
\begin{corollary}
Given a regular language $L$, the question whether $L$ is definable in $\MOD[<,+]$ is decidable.
\end{corollary}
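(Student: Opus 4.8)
The plan is to reduce the decidability of $\MOD[<,+]$-definability to the decidability of $\MOD[<]$-definability via the Main Theorem, and then to establish the latter using the algebraic characterization from Lemma~\ref{lem_logicalgb}. Concretely, given a regular language $L$, I would first observe that $L$ is definable in $\MOD[<,+]$ if and only if $L$ is definable in $\MOD[<]$: the ``if'' direction is trivial since $\mathcal{L}_{\MOD}[<] \subseteq \mathcal{L}_{\MOD}[<,+]$, and for the ``only if'' direction I would invoke Theorem~\ref{thm_cbconj}. Here the subtlety is that the Main Theorem only equates the two logics on languages \emph{with a neutral letter}, so the naive reduction does not immediately apply to an arbitrary regular $L$. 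I would handle this by first adjoining a fresh neutral letter, or more carefully by using the closure properties established above (Lemma~\ref{lem_qnt} and Lemma~\ref{lem_morph}) to reduce the general regular case to the neutral-letter case in a definability-preserving and effective manner.

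Once the problem is reduced to deciding membership in $\mathcal{L}_{\MOD}[<]$, I would appeal to Lemma~\ref{lem_logicalgb}: a regular language $L$ with smallest recognizing monoid $M$ is definable in $\mathcal{L}_{\MOD}[<]$ exactly when $M$ divides a monoid in $bpc(\MOD)$. Since $L$ is regular and given effectively, its syntactic monoid $M$ can be computed. The remaining question is therefore purely algebraic: is it decidable whether a given finite monoid $M$ divides some block product of cyclic groups? This is the instance of the hypothesis of the previous corollary with $\mathcal{S}=\MOD$, so I would verify that $\MOD$ satisfies that decidability hypothesis. The class $bpc(\MOD)$ is precisely the variety of solvable groups (equivalently, the monoids whose syntactic monoid is a solvable group), and deciding whether a finite monoid divides a solvable group is a decidable algebraic condition: one checks whether $M$ embeds into (a quotient of a submonoid of) a finite solvable group, which can be tested from the multiplication table of $M$.

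The main obstacle I anticipate is the neutral-letter gap in applying Theorem~\ref{thm_cbconj}. For an arbitrary regular $L$ there may be no neutral letter, so I cannot directly conclude $\mathcal{L}_{\MOD}[<,+]$-definability collapses to $\mathcal{L}_{\MOD}[<]$-definability for $L$ itself. The plan to circumvent this is to pass to a padded version $\widehat{L}$ over an extended alphabet $\Sigma \cup \{\lambda\}$ in which $\lambda$ is declared neutral, arrange via Lemma~\ref{lem_morph} (inverse length-preserving morphisms) and Lemma~\ref{lem_qnt} (quotients) that $\widehat{L}$ is $\MOD[<,+]$-definable iff $L$ is, and that the algebraic characterization of $\widehat{L}$ coincides with that of $L$ up to the computable syntactic data. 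The secondary obstacle is confirming effectiveness throughout: computing the syntactic monoid of $L$, and deciding division into $bpc(\MOD)$. Both are standard once $bpc(\MOD)$ is identified with the variety of solvable groups, so the decidability of the division test reduces to a finite search over quotients of submonoids of bounded-size solvable groups, which terminates.
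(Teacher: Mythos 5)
There is a genuine gap, and it is the central step. Your reduction rests on the claim that a \emph{regular} language $L$ is definable in $\MOD[<,+]$ iff it is definable in $\MOD[<]$. This is false: Theorem~\ref{thm_cbconj} equates the two logics only on $\NL$, and for regular languages without a neutral letter the class $\mathcal{L_S}[<,+]\cap\REG$ is strictly larger than $\mathcal{L_S}[<]\cap\REG$. Already for $\mathcal{S}=\{U_1\}$ the language $(aa)^*$ is definable in $\FO[<,+]$ (one can express that the length is even using $+$) but is not star-free, hence not in $\FO[<]$; the same phenomenon occurs for $\MOD$. Your proposed repair --- padding $L$ to $\widehat{L}=\pi^{-1}(L)$ over $\Sigma\cup\{\lambda\}$ with $\pi$ erasing $\lambda$ --- does not close the gap, because the implication you need, namely $L\in\mathcal{L}_{\MOD}[<,+]\Rightarrow\widehat{L}\in\mathcal{L}_{\MOD}[<,+]$, fails: $\pi$ is an erasing morphism, so Lemma~\ref{lem_morph} (which only covers inverse images under length-multiplying morphisms) does not apply, and indeed the Crane Beach collapse itself shows that definability with $+$ is in general destroyed by inserting neutral letters (e.g.\ $\widehat{(aa)^*}$ is the set of words with an even number of $a$'s, which has a neutral letter and so by the collapse cannot be in $\FO[<,+]$, even though $(aa)^*$ is).

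The paper takes a different route that avoids the neutral-letter issue entirely: it first proves an algebraic characterization of the \emph{regular} languages in $\mathcal{L_S}[<,+]$, namely that $L$ is definable iff there is a morphism $h:\Sigma^*\rightarrow\mathcal{V}$ such that for every $k$ all monoids in $h(\Sigma^k)$ lie in $bpc(\mathcal{S})$. This condition is strictly weaker than requiring the syntactic monoid itself to divide a member of $bpc(\mathcal{S})$ (which is the $\mathcal{L_S}[<]$ criterion of Lemma~\ref{lem_logicalgb}), and it is effectively checkable from the syntactic morphism once division into $bpc(\mathcal{S})$ is decidable. Your last step --- that $bpc(\MOD)$ is the solvable groups and that division into this class is decidable from a multiplication table --- is correct and is exactly what instantiates the general decidability corollary at $\mathcal{S}=\MOD$; but it must be applied to the stable monoids $h(\Sigma^k)$, not to the syntactic monoid of $L$ via a purported collapse to $\MOD[<]$.
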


\bigskip
\section{Proof of the Main Theorem}\label{sec_cranebeach}
In this section we handle the general proof steps as in Libkin or Roy and Straubing of removing the plus predicate from the formula in the presence of a neutral letter. We show that all these results go through even in the presence of general Lindstr\"om quantifiers. The new crucial step is Lemma \ref{lem_acd4grpqnt} where we convert a group quantifier to an active domain formula without introducing any other quantifiers. The proof of this lemma is deferred to the next section.

Let $\mathcal{S \subseteq M}$ be any nonempty set. To prove Theorem \ref{thm_cbconj} we will consider the more general logic, $\mathcal{L_S}[<,+,0,\{\equiv_q:q>1\}]$ over 
the alphabet $\Sigma$. In this logic $+$ is a binary function, $0$ is a constant, and $a \equiv_q b$ means $q$ divides $b-a$. 
The reason for introducing these new relations (which are definable using $+$) is to use a quantifier elimination procedure.
All languages recognized by this logic are in $\mathcal{L_S}[<,+]$. 

The formulas we consider will usually define languages with a neutral letter. 
Let an \emph{active domain formula} over a letter $\lambda \in \Sigma$ be a formula where all quantifiers are of the form:
$\mon{M}{m} x ~\neg \lambda(x) \langle \phi_1,\dots,\phi_K \rangle$.
That is the quantifiers, quantify only over the ``active domain'', the positions which does not contain the letter $\lambda$.
For the purpose of the proof we assume that the neutral letter language defined by a formula $\phi \in \mathcal{L_S}[<,+]$ is a subset of $\Sigma^*\lambda^{\omega}$. 
The idea is to work with infinite words, where the arguments are easier, since the variable range is not bounded by the word length.

For $r\in\mathbb N$ we define the set $\mathcal{D}_r = \{r^i \mid 0<i \in \Nat\}$. We say that a formula $\phi(x_1,\dots,x_t) \in
\mathcal{L_S}[<,+]$ \emph{collapses} to $\phi'$, if $\phi'$ is an active domain formula in $\mathcal{L_S}[<,+]$ and there exists an 
$\mathcal{R}_{\phi} \in \Nat$ such that for all $r \geq \mathcal{R}_{\phi}$, $w\in \Sigma^*\lambda^{\omega}$ with 
$\nnp(w) \subseteq \mathcal{D}_r$ and for all $a_1,\dots,a_t \in \Nat$ we have that 
$$w \models \phi(a_1,\dots,a_t) \Leftrightarrow w \models \phi'(a_1,\dots,a_t)$$ 
In the above definition we say that $\mathcal{R}_{\phi}$ collapse $\phi$ to $\phi'$.

The results by Benedikt and Libkin \cite{libkin_relIntrStruc}, and Roy and Straubing \cite{roy_defGenFO} show that for all formulas $\phi \in \mathcal{L}_{\MOD \cup U_1}[<,+]$ 
there exists an active domain formula $\phi'$ in that logic, such that for all words $w \in \Sigma^*\lambda^{\omega}$, $w \vDash \phi \Leftrightarrow w \vDash \phi'$. 
They assume no restriction on the non-neutral positions of $w$. Observe that our collapse result is different from theirs. We prove that if we consider only words, 
whose non-neutral positions are in $\mathcal{D}_r$, then any formula $\phi \in \mathcal{L_S}[<,+]$ is equivalent to the active domain
formula $\phi' \in \mathcal{L_S}[<,+]$. 
That is, we are not concerned about the satisfiability of those words with non-neutral positions not in $\mathcal{D}_r$.

We show that formulas with a group quantifier, $G \in \mathcal{S}$ can be collapsed.
\begin{lemma}
 \label{lem_acd4grpqnt}
Let $\phi = \mon{G}{m} z \langle \phi_1,\dots,\phi_K \rangle$ be in $\mathcal{L_S}[<,+]$. Assume formulas\linebreak $\phi_1,\dots,\phi_K$
collapse. Then $\phi$ collapses to an active domain formula $\phi'$.
\end{lemma}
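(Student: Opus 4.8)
The plan is to convert the group quantifier $\mon{G}{m} z \langle \phi_1,\dots,\phi_K\rangle$ into an active domain formula by reorganizing the product that the quantifier computes. By assumption, each $\phi_j$ already collapses to an active domain formula, so I may freely assume $\phi_1,\dots,\phi_K$ are active domain formulas (taking $\mathcal{R}_\phi$ to be the maximum of the relevant thresholds). The quantifier $\mon{G}{m} z$ runs over \emph{all} positions $0 \le z < |w|$ and multiplies, in increasing order of $z$, the group element determined by which $\phi_j$ holds at $z$. The key observation is that on the neutral positions (those not in $\nnp(w)$), each $\phi_j$ contributes the identity $1 \in G$, because the $\phi_j$ are active domain formulas that quantify only over non-neutral positions and the outer free variable $z$ at a neutral position cannot distinguish it. Thus the only positions contributing a non-identity factor are in $\nnp(w) \subseteq \mathcal{D}_r$, and the value of the product is exactly the ordered product over the non-neutral positions.

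\textbf{Key steps.}
First I would make precise the claim that a position $z \notin \nnp(w)$ contributes the identity: since $w(z) = \lambda$ and each $\phi_j$ is an active domain formula, the truth of $\phi_j(z)$ with $z$ ranging over a neutral position must be independent of the exact value $z$ in a way that forces the contributed element to be $1$ — this needs the interaction between the neutral-letter semantics and the active domain restriction, and I would argue it using the fact that inserting/deleting neutral letters does not change satisfaction. Second, having reduced to the product over $\nnp(w)$, I would replace the original quantifier by the active domain quantifier $\mon{G}{m} z ~\neg\lambda(z)\langle \phi_1,\dots,\phi_K\rangle$, which by the shorthand convention multiplies the neutral element $1$ at every position where $\lambda(z)$ holds and otherwise evaluates $\phi_1,\dots,\phi_K$ exactly as before. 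Since the non-identity contributions and their order coincide on both formulas for every $w$ with $\nnp(w) \subseteq \mathcal{D}_r$, the two agree, giving $w \models \phi \Leftrightarrow w \models \phi'$ with $\phi'$ active domain.

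\textbf{Main obstacle.}
The hard part will be justifying that the \emph{order} of multiplication is preserved and that restricting to $\neg\lambda(z)$ genuinely leaves the product unchanged in the noncommutative (group) setting. For a commutative monoid one could argue position by position, but for a general group $G$ the factors do not commute, so I must verify that removing the identity factors contributed by neutral positions — which are interspersed arbitrarily among the non-neutral ones — does not alter the ordered product. This is where the observation that neutral positions contribute exactly $1$ is essential: since $1$ is a two-sided identity, deleting these factors from the ordered product is harmless regardless of commutativity. I expect the genuinely delicate point to be Step one, namely formally establishing that active domain subformulas evaluated at a neutral position $z$ yield the identity element; once this is clear, the collapse follows by matching the two ordered products term by term over $\nnp(w)$, using that $\nnp(w) \subseteq \mathcal{D}_r$ only insofar as the subformulas' collapse thresholds require it.
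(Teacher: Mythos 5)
Your proposal breaks at the step you yourself identify as the crux: the claim that a neutral position $z \notin \nnp(w)$ contributes the identity element is false, and it is false for essentially every interesting instance of the lemma. The subformulas $\phi_1,\dots,\phi_K$ are active domain only in the sense that their \emph{bound} variables range over non-neutral positions; the \emph{free} variable $z$ supplied by the outer quantifier still ranges over all positions, and the truth of $\phi_j(z)$ at a neutral position genuinely depends on $z$. Already $\phi_1(z) = (z \equiv_2 0)$, or the purely order-theoretic $\phi_1(z)=\mon{U_1}{0}x\ \neg\lambda(x)\ \langle x>z\rangle$ (``some non-neutral position lies to the right of $z$''), makes every neutral position up to the last active position contribute a non-identity element. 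So replacing $\mon{G}{m}z\langle\phi_1,\dots,\phi_K\rangle$ by $\mon{G}{m}z\ \neg\lambda(z)\langle\phi_1,\dots,\phi_K\rangle$ changes the product, and no amount of care about noncommutativity or ordering repairs this: the two formulas simply define different languages. The neutral-letter hypothesis is about the \emph{language} $L(\phi)$ and cannot be applied subformula-by-subformula to force identity contributions.

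Handling the contributions of the (unboundedly many) neutral positions is the entire content of the lemma, and it is where all the structure of the paper's proof lives. The paper first normalizes the atomic formulas in $z$ to comparisons and congruences against linear terms $\rho(\vec x,\vec y)$, which partitions $\Nat$ into intervals whose endpoints (``boundary points'') are linear combinations of active domain positions; inside each interval the tuple of truth values of the $\phi_i$ is periodic with period $q$ (Lemma \ref{lem_witInIntrv}), so the product over a long interval reduces, via $g^{|G|}=1_G$, to a short prefix and suffix. The ordered product over all positions is then reassembled by a telescoping scheme $N_k(b)(\hat N_k(b'))^{-1}=\prod_{i=b+1}^{b'-1}u(i)$ that crucially uses inverses in $G$, and the boundary points are traversed in increasing order by nested active domain quantifiers organized by the sorting tree of Section \ref{subsec_sortingtree} (this is where $\nnp(w)\subseteq\mathcal{D}_r$ with $r$ large is really needed, to keep the encodings order-preserving). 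Your argument uses none of this because it assumes the problem away at the start; the gap is not fixable within your outline.
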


The proof of Lemma \ref{lem_acd4grpqnt} will be given in Section \ref{sec_lemmaProof}. 
Benedikt and Libkin \cite{libkin_relIntrStruc} gives a similar theorem for the monoid $U_1$ (the existential quantifier).
\begin{lemma}[\cite{libkin_relIntrStruc}]
 \label{lem_acd4u1}
Let $\phi = \mon{U_1}{m} z \langle \phi_1,\dots,\phi_K \rangle$ be a formula in $\mathcal{L_S}[<,+]$. Let us assume that formulas
$\phi_1,\dots \phi_K$ collapse. Then $\phi$ collapses to an active domain formula $\phi'$.
\end{lemma}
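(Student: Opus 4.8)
The plan is to reduce to the existential case and then classify potential witnesses by their arithmetic relationship to the active domain. Since $U_1=\{0,1\}$ has identity $1$, here $K=1$ and there are only two quantifiers to treat: $\mon{U_1}{0}z\langle\phi_1\rangle$, which is $\exists z\,\phi_1$, and $\mon{U_1}{1}z\langle\phi_1\rangle$, which is $\forall z\,\neg\phi_1$. Both the class of active domain formulas and the collapse relation are preserved under negation (negation introduces no quantifiers and transports the defining equivalence of a collapse), so it suffices to collapse $\exists z\,\phi_1$ and negate to obtain the $m=1$ case. Using the hypothesis that $\phi_1$ collapses to an active domain formula $\phi_1'$ with threshold $\mathcal{R}_{\phi_1}$ --- treating $z$ as one of the free variables --- for every $r\geq\mathcal{R}_{\phi_1}$ and every $w\in\Sigma^*\lambda^\omega$ with $\nnp(w)\subseteq\mathcal{D}_r$ we have $\exists z\,\phi_1\equiv\exists z\,\phi_1'$. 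Hence the real task is to convert $\exists z\,\phi_1'(z,x_1,\dots,x_t)$, whose body is already active domain but whose $z$ ranges over all positions, into a formula every quantifier of which ranges only over $\nnp(w)$.

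First I would split the witness set for $z$ according to whether it is a non-neutral position:
$$\exists z\,\phi_1' \;\equiv\; \big(\mon{U_1}{0} z\,\neg\lambda(z)\,\langle \phi_1' \rangle\big) \;\vee\; \big(\exists z\,(\lambda(z)\wedge\phi_1')\big).$$
The left disjunct is already an active domain formula, so the whole difficulty sits in the right disjunct: an existential witness at a neutral position. The key observation is that, for a fixed $w$ and fixed values of $x_1,\dots,x_t$, the truth of $\phi_1'(z)$ depends on $z$ only through its \emph{numerical type} relative to the finitely many landmark values --- the elements of $\nnp(w)$, the constant $0$, the free values, and the bounded $+$-combinations of these that occur in the atomic subformulas. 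Because the logic contains $<,+,0$ and all $\equiv_q$, the inner active domain monoid quantifiers can, for fixed $w$, be unfolded over the finite set $\nnp(w)$; by induction on formula structure the set $\{z : w\models\phi_1'(z)\}$ is then Presburger-definable in $z$ with these landmarks as parameters, hence a finite union of pieces $\{z : \alpha<z<\beta,\ z\equiv c \pmod q\}$ whose endpoints $\alpha,\beta$ are terms in the landmarks. On each piece $\phi_1'$ is constant, so $\exists z(\lambda(z)\wedge\phi_1')$ holds iff some piece carries value \emph{true} and contains a neutral position.

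I would then express this condition as a single active domain formula, and this is the step I expect to be the main obstacle. The plan is to exploit the genericity of $\mathcal{D}_r=\{r^i : i>0\}$ for large $r$: as $r$ grows, all comparisons and congruences among bounded $+$-combinations of the powers $r^i$ are governed only by the order of the exponents and their residues modulo a fixed period, so the finite list of possible numerical types, the formula computing the constant value of $\phi_1'$ on each type, and the test for which types meet a neutral position all \emph{stabilize} and become uniform in $r$ and $w$. Concretely, each type is described by naming its relevant endpoint positions with active domain quantifiers and asserting the defining inequalities and congruences on those named positions together with $x_1,\dots,x_t$ and $0$; the constant value of $\phi_1'$ on the type is read off by substituting a symbolic representative of the type for $z$, leaving only active domain quantifiers. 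Since $\nnp(w)$ is finite while neutral positions are cofinite, every non-degenerate interval type is automatically met by a neutral position once $r$ is large, so realizability collapses to a non-degeneracy condition on the named landmarks. Setting $\phi'$ to the disjunction of the left disjunct above with the finite disjunction over types, and choosing $\mathcal{R}_{\phi}\geq\mathcal{R}_{\phi_1}$ large enough that $\phi_1\equiv\phi_1'$ holds and that no accidental arithmetic coincidences occur among the landmarks, yields that $\phi$ collapses to the active domain formula $\phi'$. The crux is exactly making the type analysis under $+$ uniform across all $w$ with $\nnp(w)\subseteq\mathcal{D}_r$; restricting to the geometric domain $\mathcal{D}_r$ with $r$ above threshold is what tames the arithmetic of $+$ and produces one fixed active domain formula.
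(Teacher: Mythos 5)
The paper gives no proof of this lemma; it is cited from Benedikt and Libkin, and the machinery you would need is exactly what the paper develops in Section \ref{sec_lemmaProof} for the harder group case. Your outline is correct and follows that same route: reduce to $\exists$ via negation (correct, since $\mon{U_1}{1}z\langle\phi_1\rangle$ is $\forall z\,\neg\phi_1$ and collapsing commutes with negation), put the atomic occurrences of $z$ in normal form, observe that the witness set is a union of congruence classes within intervals cut out by boundary points that are bounded linear combinations of active domain positions (this is Lemma \ref{lem_normalform} plus Lemma \ref{lem_witInIntrv}), and then name boundary points by active domain tuples and test symbolic representatives $f(\vec x)+j$ for $1\le j\le q$. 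Two soft spots. First, your justification that every non-degenerate interval meets a neutral position (``neutral positions are cofinite'') is not the right reason; the right reason is that $\nnp(w)$ is itself contained in the set of boundary points (Lemma \ref{bplemma}), so the \emph{interior} of every interval consists entirely of neutral positions. Second, your appeal to ``stabilization'' over $\mathcal{D}_r$ is doing less work than you suggest: for the existential quantifier no case analysis on $r$ is needed, because comparisons and congruences among the terms $f(\vec x)+j$ are directly expressible in the logic with $+$ and $\equiv_q$, and a witness exists iff one exists within distance $q$ of the left endpoint of its interval (or at a neutral boundary point, a case you should not forget); indeed Benedikt--Libkin prove this collapse with no restriction on $\nnp(w)$ at all, and the geometric domain is only needed later for the sorting tree in the group case. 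With those points tightened, the argument is the intended one.
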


Recall the $3$ steps for proving the main theorem given in Introduction. The following theorem proves the first step.

\begin{theorem}
\label{thm_acd4ls}
 Let $\phi \in \mathcal{L_S}[<,+]$. Then there exists an active domain formula $\phi' \in \mathcal{L_S}[<,+]$ such that $\phi$ collapses to $\phi'$.
\end{theorem}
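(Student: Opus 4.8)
The plan is to prove this by structural induction on the formula $\phi$, using the two collapse lemmas as the inductive steps for the quantifiers. The statement to be established is that every $\phi \in \mathcal{L_S}[<,+]$ collapses to some active domain formula $\phi' \in \mathcal{L_S}[<,+]$, in the precise sense defined above: there is a threshold $\mathcal{R}_\phi$ so that for all $r \geq \mathcal{R}_\phi$ and all words $w \in \Sigma^*\lambda^\omega$ with $\nnp(w) \subseteq \mathcal{D}_r$, the formula $\phi$ and its active domain version $\phi'$ agree on $w$ under every assignment to the free variables. The induction will follow the syntactic structure given in the logic's definition: atomic formulas, Boolean connectives, and the monoid quantifiers $\mon{M}{m}$.

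First I would handle the base case of atomic formulas. These are the unary predicates $c(x)$, the equality $x = y$, the numerical predicates (here $<$, the graph of $+$, the constant $0$, and the congruences $\equiv_q$), and possibly atoms involving free variables only. Since atomic formulas contain no quantifiers, they are already active domain formulas in the trivial sense (there is nothing to relativize), so each collapses to itself with threshold $\mathcal{R}_\phi$ arbitrary, say $1$. The Boolean cases are equally routine: if $\phi = \neg\psi$ and $\psi$ collapses to $\psi'$ with threshold $\mathcal{R}_\psi$, then $\neg\psi'$ is again an active domain formula and witnesses the collapse of $\phi$ with the same threshold; if $\phi = \psi_1 \vee \psi_2$ (and likewise for $\wedge$), set $\phi' = \psi_1' \vee \psi_2'$ and take $\mathcal{R}_\phi = \max(\mathcal{R}_{\psi_1}, \mathcal{R}_{\psi_2})$, so that for every $r$ above this maximum both subformulas agree with their active domain versions and hence so does the disjunction.

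The quantifier case is where the real work sits, and it is exactly what the two collapse lemmas supply. Suppose $\phi = \mon{M}{m} z \langle \phi_1, \dots, \phi_K \rangle$ for some $M \in \mathcal{S}$. By the induction hypothesis each inner formula $\phi_i$ collapses to an active domain formula $\phi_i'$. If $M = U_1$ (the existential quantifier) I invoke Lemma \ref{lem_acd4u1}; otherwise, $M$ is a monoid and I must reduce the general case to the group case covered by Lemma \ref{lem_acd4grpqnt}. The key algebraic fact I would use is that every finite monoid's quantifier can be simulated using group quantifiers (for the groups dividing $M$) together with existential/first-order quantifiers, so that a $\mon{M}{m}$ quantifier is expressible by a combination of the cases already treated; alternatively, the cleanest route is to argue that it suffices to prove the collapse for the generating quantifiers $U_1$ and the groups $G$, since the block-product closure and Lemma \ref{lem_logicalgb} let one rewrite an arbitrary $\mathcal{L_S}$ formula in terms of these. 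Applying the appropriate lemma, $\phi$ collapses to an active domain formula $\phi'$ with some threshold $\mathcal{R}_\phi$ that depends on the thresholds of the $\phi_i'$ and on the monoid.

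The main obstacle I anticipate is not the induction's bookkeeping but the hidden reduction in the quantifier case: the two collapse lemmas are stated only for $U_1$ and for \emph{group} quantifiers $G$, whereas an arbitrary $M \in \mathcal{S}$ need be neither. I would need to argue carefully that a general monoid quantifier can be decomposed — via the structure of $M$ (its groups and its aperiodic part) — into quantifiers the lemmas cover, while keeping every intermediate formula inside $\mathcal{L_S}[<,+]$ and preserving the neutral-letter and active-domain discipline. A secondary subtlety is that the collapse definition fixes a single radix family $\mathcal{D}_r$ and requires agreement for \emph{all} sufficiently large $r$ simultaneously; so when combining subformulas I must ensure the thresholds can be taken uniformly (a finite maximum), which is fine since formulas are finite objects with finitely many subformulas, but it is the place where one must be precise that the threshold is a genuine natural number and not merely an eventual property.
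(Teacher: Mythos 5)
Your proposal matches the paper's proof: the paper likewise inducts on the formula (by quantifier depth), handles Boolean connectives by taking the maximum of the thresholds $\mathcal{R}_{\phi_1},\mathcal{R}_{\phi_2}$, and dispatches the quantifier case to Lemmas \ref{lem_acd4grpqnt} and \ref{lem_acd4u1}. The ``main obstacle'' you flag is resolved exactly as your ``cleanest route'' suggests — the paper first rewrites $\phi$, via the Krohn--Rhodes decomposition into block products of groups and $U_1$, as a formula using only group and $U_1$ quantifiers, so that the two lemmas cover all quantifier cases.
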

\begin{proof}
Let $\phi \in \mathcal{L_S}[<,+]$. We first claim that we can convert $\phi$ into a formula which uses only groups and $U_1$ as quantifiers. 
This follows from the Krohn-Rhodes decomposition theorem for monoids that every monoid can be decomposed into block products over groups and $U_1$. This decomposition can then be converted back into a formula using the groups and $U_1$ as quantifiers \cite{str_cirBook}. 

So without loss of generality we can assume $\phi$ has only group or $U_1$ quantifiers.
The proof is by induction on the quantifier depth. For the base case, let $\phi$ be a quantifier free formula. 
It is an active domain formula and therefore the claim holds. Let the claim be true for all formulas with quantifier depth $< d$. 
Lemma \ref{lem_acd4grpqnt} and Lemma \ref{lem_acd4u1} show that the claim is true for formulas of type $\phi = \mon{M}{m} z \langle \phi_1,\dots,\phi_K \rangle$ with 
quantifier depth $d$, when $M$ is a group or $U_1$ respectively. We are now left with proving that the claim is closed under conjunction and negation. So assume that 
formulas $\phi_1,\phi_2$ collapse to $\phi_1', \phi_2'$ respectively. That is there exist $\mathcal{R}_{\phi_1},\mathcal{R}_{\phi_2}\in \Nat$ such that $\mathcal{R}_{\phi_1}$ collapses
$\phi_1$ to $\phi_1'$ and $\mathcal{R}_{\phi_2}$ collapses $\phi_2$ to $\phi_2'$. Let $\mathcal{R}=max \{\mathcal{R}_{\phi_1},\mathcal{R}_{\phi_2}\}$. 
Then it is easy to see that $\mathcal{R}$ collapses $\phi_1 \wedge \phi_2$ to $\phi_1' \wedge \phi_2'$ and $\mathcal{R}_{\phi_1}$ collapses $\neg \phi_1$ to $\neg \phi_1'$.
\qed \end{proof}

We have shown above that all formulas in $\mathcal{L_S}[<,+]$ can be collapsed to active domain formulas. Now using a Ramsey type argument we obtain that addition is useless, giving us a formula in $\mathcal{L_S}[<]$. This corresponds to the second step in our three step proof strategy.

Let $R$ be any set of relations on $\Nat$ and let $\phi(x_1,\dots,x_t)$ be an active domain formula in $\mathcal{L_S}[R]$. We say that
$\phi$ has the \emph{Ramsey property} if for all infinite subsets $X$ of $\Nat$, there exists an infinite set $Y \subseteq X$ and an active
domain formula $\psi \in\mathcal{L_S}[<]$ that satisfies the following conditions. 
If $w \in \Sigma^*\lambda^{\omega}$ and $\nnp(w) \subseteq Y$, then for all $a_1,\dots,a_t \in Y$,
$$w \vDash \phi(a_1,\dots,a_t) \Leftrightarrow w \vDash \psi(a_1,\dots,a_t)$$ 

The Ramsey property for first order logic has been considered by Libkin \cite{libkin_FMT}. These results can be extended
to our logic.

\begin{theorem}
\label{thm_ramsey}
Let $R$ be a set of relations on $\mathbb{N}$. Every active domain formula in $\mathcal{L_S}[R]$ satisfies 
the Ramsey property.
\end{theorem}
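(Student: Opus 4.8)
The plan is to prove the statement by induction on the structure of the active domain formula $\phi \in \mathcal{L_S}[R]$, strengthening the induction hypothesis so that we can pass through the monoid quantifiers. The key technical device is the classical combinatorial fact that for any coloring of the $t$-element subsets of $\Nat$ by finitely many colors there is an infinite monochromatic subset (the infinite Ramsey theorem); Libkin \cite{libkin_FMT} uses exactly this for first-order active domain formulas, and the point here is that monoid quantifiers do not spoil the argument. First I would fix some linear order on the tuples we quantify over and observe that, for an active domain formula with free variables $x_1,\dots,x_t$, the truth value $w \vDash \phi(a_1,\dots,a_t)$ for $\nnp(w)\subseteq Y$ depends only on (i) the letters sitting at the positions $a_1,\dots,a_t$ and at the non-neutral positions, and (ii) the $R$-type of the relevant tuple of numbers. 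The inductive claim I would carry is: for every active domain formula there is an infinite $Y\subseteq X$ on which the formula is equivalent (on words with $\nnp(w)\subseteq Y$ and free-variable assignments drawn from $Y$) to an active domain formula $\psi\in\mathcal{L_S}[<]$, i.e. one in which $R$ has been replaced by the order relation alone.

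The base case is a quantifier-free formula. Here $\phi$ is a Boolean combination of the unary letter predicates $c(x_i)$, equalities, and atomic $R$-predicates on the variables. I would define a coloring of the $t$-element subsets (or ordered $t$-tuples) of $X$ by the $R$-type that the atomic predicates assign to that tuple; since $R$ is applied only to finitely many argument patterns appearing syntactically in $\phi$, and each predicate is Boolean-valued, there are finitely many colors. The infinite Ramsey theorem yields an infinite $Y\subseteq X$ on which every such atomic $R$-predicate is constant across all increasing tuples of a fixed order type. On $Y$ the value of each $R$-atom therefore depends only on the $<$-order type of its arguments, so $\phi$ becomes equivalent to a Boolean combination of letter predicates and $<$-formulas, which is an active domain formula in $\mathcal{L_S}[<]$.

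For the inductive step the interesting case is a quantifier $\phi = \mon{M}{m}\, y\; \neg\lambda(y)\,\langle \phi_1,\dots,\phi_K\rangle$. By the induction hypothesis, applied successively, I would obtain a single infinite set $Y_0\subseteq X$ and formulas $\psi_1,\dots,\psi_K\in\mathcal{L_S}[<]$ that are equivalent to $\phi_1,\dots,\phi_K$ on $Y_0$; since a quantifier binds one extra variable $y$, the hypothesis must be invoked with $t+1$ free variables so that $y$ too ranges over the Ramsey set. Crucially, the quantifier sums the monoid elements over the non-neutral positions \emph{in order}, and on words with $\nnp(w)\subseteq Y_0$ the active-domain product is taken exactly over positions in $Y_0$, whose relative order is governed by $<$; so once the matrix formulas are $\mathcal{L_S}[<]$-definable, the whole quantified formula is again an active domain $\mathcal{L_S}[<]$ formula, namely $\mon{M}{m}\, y\; \neg\lambda(y)\,\langle \psi_1,\dots,\psi_K\rangle$. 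Boolean connectives are handled by intersecting the two Ramsey sets and taking the corresponding Boolean combination of the resulting $\mathcal{L_S}[<]$ formulas, exactly as in the proof of Theorem \ref{thm_acd4ls}.

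The main obstacle I anticipate is bookkeeping rather than a conceptual gap: I must ensure that one \emph{single} infinite set $Y$ works simultaneously for all subformulas and for the free-plus-bound variable tuple, and that the $R$-type stabilized by Ramsey is fine enough to determine the truth value of every atomic predicate that occurs, including predicates that mix free variables with the bound variable $y$. I would address this by coloring $(t+1)$-tuples (or larger, bounded by the maximum arity times the number of variables in scope) once and for all, so that all atoms appearing anywhere in $\phi$ are simultaneously trivialized on $Y$; because the number of subformulas and the arities in $R$ that actually appear are finite, the color set stays finite and a single application of the infinite Ramsey theorem suffices. The fact that $M$ may be noncommutative or a group plays no role here, since the Ramsey reduction only touches the numerical predicates and leaves the quantifier prefix intact.
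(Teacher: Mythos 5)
Your proposal is correct and follows essentially the same route as the paper's proof: an infinite Ramsey argument that stabilizes each atomic $R$-predicate to a function of the order type of its arguments, followed by a structural induction in which Boolean connectives are handled by refining the Ramsey set and the monoid quantifier is handled by replacing its matrix formulas $\phi_1,\dots,\phi_K$ with their $\mathcal{L_S}[<]$ equivalents on a common infinite subset (the paper obtains this common subset by successive nested refinements $Y_1\supseteq\dots\supseteq Y_K$ rather than one large coloring, but this is only a bookkeeping difference). Your key observation that the quantifier prefix is untouched because the active-domain product depends only on the order of the non-neutral positions is exactly the point the paper makes.
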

\begin{proof}
  Let $\phi \in \mathcal{L_S}[R]$ be a formula. We now prove by induction on the structure of the formula.
  Let $P(x_1,\dots,x_k)$ be a term in $\phi$. We assume without loss of generality that for all $i \neq j, x_i \neq x_j$.
  Now consider the infinite complete hypergraph, whose vertices are labelled by numbers from $X$ and whose edges are $k$ tuple of vertices.
 Let $i_1,\dots,i_k$ be some permutation of numbers from $1$ to $k$. Consider the edge formed by the vertices $v_1<v_2<\dots<v_k$. 
 We color this edge by the formula $x_{i_1}<x_{i_2}<\dots<x_{i_k}$ if $P(v_{i_1},\dots,v_{i_k})$ is true. Observe that each edge can
 have multiple colors and therefore the total number of different colorings possible is $k!$. Ramsey theory gives us that there exists
 an infinite set $Y \subseteq X$, such that the induced subgraph on the vertices in $Y$ will have a monochromatic color, ie. all the
 edges will be colored using the same color. Let us assume that the edges in $Y$ are colored $x_1<x_2<\dots<x_k$. Then for all
 $a_1,\dots,a_t \in Y$
 $$a_1,\dots,a_t \models R(x_1,\dots,x_k) \Leftrightarrow a_1,\dots,a_t \models x_1<x_2<\dots<x_k$$
This shows that $P(x_1,\dots,x_k)$ satisfies the Ramsey property and thus all atomic formulas satisfy the Ramsey property.
We now show that Ramsey property is preserved while taking Boolean combination of formulas. Consider the formula $\phi_1(x_1,\dots,x_k)
\wedge \phi_2(x_1,\dots,x_k)$. We know that by induction hypothesis there exists a formula $\psi_1$ and an infinite set $X$ such that
for all $a_1,\dots,a_t \in X$, $w \models \phi_1(a_1,\dots,a_t) \Leftrightarrow w \models \psi(a_1,\dots,a_t)$. We can now find an
infinite set $Y \subseteq X$ and a formula $\psi_2$ such that the Ramsey property holds for the formula $\phi_2$. Therefore for all
$a_1,\dots,a_t \in Y$
$$w,a_1,\dots,a_k \vDash \phi_1 \wedge \phi_2 \Leftrightarrow w,a_1,\dots,a_k \vDash \psi_1 \wedge \psi_2$$ 
Similarly we can show that the Ramsey property holds for disjunctions and negations.
We need to now show that active domain quantification also preserves
Ramsey property. So let $X$ be an infinite subset of $\mathbb N$ and let
$$\phi(\vec x) = \mon{M}{m} z ~\neg \lambda(z) ~\langle \phi_1(z,\vec x),\dots,\phi_K(z,\vec x)\rangle$$
be a formula in $\mathcal{L_S}[R]$.
By induction hypothesis we know that there exists an infinite set $Y_1 \subseteq X$ and an active domain formula $\psi_1
\in \mathcal{L}[<]$ such that for all $\vec a \in Y_1^t$ the Ramsey property is satisfied. That is 
$w \models \phi_1(\vec a) \Leftrightarrow w \models \psi_1(\vec a)$. Now for $\phi_2$, using the infinite set $Y_1$ we can find an infinite 
set $Y_2 \subseteq Y_1$ and a formula $\psi_2$ satisfying the Ramsey property. Continuing like this will give us a set $Y_K$ and formulas
$\psi_1,\dots,\psi_K$ such that $\forall j\leq K$ and for all $w \in \Sigma^*\lambda^{\omega}$ with $\nnp(w) \subseteq Y_K$, we have
that $\forall b \in Y_K, \vec a \in Y_K^t,~ w \vDash \phi_j(b,\vec a) \Leftrightarrow w \vDash \psi_j(b, \vec a)$. Hence we also have 
that $\forall j \leq K$
$$\{b\in Y_K \mid w \vDash \phi_j(b,\vec a)\} = \{b\in Y_K \mid w \vDash \psi_j(b,\vec a)\}$$
Therefore for the formula $\psi = \mon{M}{m} z ~\neg \lambda(z) ~\langle \psi_1,\dots,\psi_K \rangle$, we
have $\forall w$ where $\nnp(w) \subseteq Y_K$ and $a_1,\dots,a_t \in Y_K$  that
$$w \vDash \phi(a_1,\dots,a_t) \Leftrightarrow w \vDash \psi(a_1,\dots,a_t)$$
Observe that $\psi$ is an active domain formula in $\mathcal{L_S}[<]$.
\qed \end{proof}

We continue with the third step of our three step proof strategy. 
\begin{lemma}
\label{lem_acdneutr}
 Every active domain sentence in $\mathcal{L_S}[<]$ define a language with a neutral letter.
\end{lemma}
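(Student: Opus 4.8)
The plan is to show directly that the distinguished letter $\lambda$ is neutral for $L=L(\phi)$, i.e.\ that for all $u,v\in\Sigma^{\infty}$ we have $uv\models\phi\Leftrightarrow u\lambda v\models\phi$. The single idea driving the whole argument is that an active domain sentence can only ``see'' the non-neutral positions together with the order between them, and inserting or deleting one $\lambda$ does not disturb this restricted view.

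First I would fix $u,v$ and compare $w=uv$ with $w'=u\lambda v$, where the extra $\lambda$ is inserted at position $p=|u|$. There is an obvious map $\beta:\nnp(w)\to\nnp(w')$ that is the identity on positions below $p$ and shifts positions by one at or above $p$ (so $\beta(i)=i$ for $i<p$ and $\beta(i)=i+1$ for $i\geq p$, restricted to non-neutral positions). The two properties I need are that $\beta$ is a strictly order-preserving bijection and is label-preserving: $i<j\Leftrightarrow\beta(i)<\beta(j)$, and $w(i)=w'(\beta(i))$ for every $i\in\nnp(w)$. Both are immediate from the definition of $\beta$.

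The core of the proof is an induction on the structure of the active domain formula, establishing the stronger statement that for every subformula $\psi(x_1,\dots,x_t)$ and every tuple $a_1,\dots,a_t\in\nnp(w)$,
$$w,a_1,\dots,a_t\models\psi \Leftrightarrow w',\beta(a_1),\dots,\beta(a_t)\models\psi.$$
In the atomic cases the unary predicate $c(x)$ is preserved because $\beta$ is label-preserving, while $x<y$ and $x=y$ are preserved because $\beta$ is an order-isomorphism; the Boolean connectives are immediate. The one real case is a monoid quantifier $\mon{M}{m}\,z~\neg\lambda(z)\langle\psi_1,\dots,\psi_K\rangle$. Since the quantifier is active domain, $z$ ranges exactly over $\nnp(w)$ in $w$ and over $\nnp(w')$ in $w'$, and $\beta$ matches these ranges in the same order. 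By the induction hypothesis applied to each $\psi_j$, a position $b\in\nnp(w)$ and its image $\beta(b)$ are assigned the same monoid element, so the two ordered products defining the quantifier are letter-for-letter identical and hence evaluate to the same element of $M$. Thus the quantifier yields $m$ at $w$ iff it does at $w'$. Taking $t=0$ in the claim gives $w\models\phi\Leftrightarrow w'\models\phi$, which is exactly the neutral-letter condition.

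The step I expect to require the most care is the monoid quantifier case, for two reasons. First, I must argue that only the active domain contributes to the product: every position outside $\nnp$ (including the inserted $\lambda$ in $w'$) carries the identity $1$ and therefore drops out, which is precisely why the insertion changes nothing. In the setting of this paper words lie in $\Sigma^*\lambda^{\omega}$ with finitely many non-neutral positions, so the active-domain product is a genuine finite product of monoid elements and the letter-for-letter comparison is well defined. Second, I must ensure that the correspondence respects order, since the monoid product is non-commutative in general; this is guaranteed exactly by $\beta$ being an order-isomorphism of $(\nnp(w),<)$ onto $(\nnp(w'),<)$. Everything else is routine bookkeeping over the formula structure.
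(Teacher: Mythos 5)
Your proposal is correct and follows essentially the same route as the paper: both define an order- and label-preserving bijection between $\nnp(w)$ and $\nnp(w')$ and then prove by structural induction that every subformula is preserved under this correspondence, with the active-domain quantifier case resting on the fact that the two ordered products over the respective active domains coincide element by element. The only cosmetic difference is that the paper allows $\lambda$ to be inserted at arbitrarily many positions at once while you insert a single $\lambda$, which is exactly what the definition of a neutral letter requires.
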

\begin{proof}
Let $\phi \in \mathcal{L_S}[<]$ be an active domain formula over letter $\lambda \in \Sigma$. Let $w \in \Sigma^{\omega}$. 
Let $w' \in \Sigma^{\omega}$ got by inserting letter $\lambda$ in $w$ at some positions. Let $n_1 < n_2 < \dots$ belong to $\nnp(w)$ and 
$m_1< m_2 <\dots$ be in $\nnp(w')$. Let $\rho: \nnp(w) \rightarrow \nnp(w')$ be the bijective map $\rho(n_i) = m_i$. We show that for any subformula 
$\psi$ of $\phi$ and any $\vec t \in \nnp(w)^s$, we have that $w,\vec t \vDash \psi \Leftrightarrow w',\rho(\vec t) \vDash \psi$.
The claim holds for the atomic formula $x>y$, because $n_i > n_j$ iff $\rho(n_i) > \rho(n_j)$ for an $i,j$. Similarly the claim also
hold for all other atomic formulas $x<y, x=y$ and $a(x)$ for an $a \in \Sigma$. The claim remains to hold under
conjunctions, negations and active domain quantifications. Hence $w \models \phi \Leftrightarrow w' \models \phi$.
This proves that $\lambda$ is a neutral letter for $L(\phi)$.
\qed \end{proof}

Now we can prove our main theorem.

\begin{proof}[Proof of Theorem \ref{thm_cbconj}]
Let $\phi \in \mathcal{L_S}[<,+]$, such that $L(\phi)$ is a language with a neutral letter, $\lambda$. By Theorem \ref{thm_acd4ls} there exists an active domain sentence $\phi' \in \mathcal{L_S}[<,+]$ over $\lambda$ and a set $\mathcal{D}_\mathcal{R}$ such that $\mathcal{R}$ collapses 
$\phi$ to $\phi'$. Theorem \ref{thm_ramsey} now gives an active domain formula $\psi \in \mathcal{L_S}[<]$ and an infinite set $Y \subseteq \mathcal{D}_\mathcal{R}$. 
We now show that $L(\phi)=L(\psi)$.
Let $w \in \Sigma^*\lambda^{\omega}$. Consider the word $w' \in \Sigma^*\lambda^{\omega}$ got by inserting the neutral letter $\lambda$ in $w$ in such a way 
that $\nnp(w') \subseteq Y$. Since $L(\phi)$ is a language with a neutral letter we have that $w \models \phi \Leftrightarrow w' \vDash \phi$. From Theorem \ref{thm_acd4ls} and 
Theorem \ref{thm_ramsey} we get
$w' \vDash \phi \Leftrightarrow w' \vDash \phi' \Leftrightarrow w' \vDash \psi$.
Finally as shown in Lemma \ref{lem_acdneutr}, $\psi$ defines a language with a neutral letter and hence $w' \models \psi \Leftrightarrow
w \models \psi$. 
\qed \end{proof}


\bigskip
\section{Proof of Lemma \ref{lem_acd4grpqnt}} \label{sec_lemmaProof}
In this section we replace a group quantifier by an active domain formula. Here we make use of the fact that we can a priory restrict our domain as shown in the previous section.

Recall that $\phi = \mon{G}{m} z \langle \phi_1,\dots,\phi_K \rangle$ and $G=\{m_1,\dots, m_K,1\}$. 
We know that for all $i \leq K$, there exists $\mathcal{R}_{\phi_i}$ and a formula $\phi_i'$ such that $\mathcal{R}_{\phi_i}$ collapse $\phi_i$ to $\phi_i'$. 
Then clearly $max \{ \mathcal{R}_{\phi_i} \}$ collapse $\phi_i$ to $\phi_i'$ for all $i\leq K$.
So without loss of generality we assume $\phi_i$s are active domain formulas. 

Before we go in the details we will give a rough overview of the proof idea.
The group quantifier will evaluate a product $\prod_j u(j)$ where $u(j)$ is a group element that depends on the set of $i$ such that $w,j\models\phi_i$. So we start and analyze the sets $J_i=\{j\mid w,j\models\phi_i\}$. Since the formulas $\phi_i$ are active domain formulas, we will see that there exists a set of intervals such that inside an interval the set $J_i$ is periodic. Boundary points for these intervals are either points in the domain, or linear combinations of these. 
In the construction of the active domain formula for $\phi$ we will show how to iterate over all these boundary points in a strictly increasing order. An active domain quantifier can only iterate over active domain positions, hence we will need nested active domain quantifiers, and a way how to ``encode'' the boundary points by tuples of active domain positions in a unique and order preserving way. Additionally we need to deal with the periodic positions inside the intervals, without being able to compute the length of such an interval, or even check if the length is zero. Here will make use of the inverse elements that always exist in groups.
 
We start by analyzing the intervals which occur.
We will pick an $\mathcal{R}_{\phi} \geq max \{ \mathcal{R}_{\phi_i} \}$ to collapse the formula $\phi$. During the course of the proof we will require $\mathcal{R}_{\phi}$
 to be greater than a few others constants, which will be specified then. But always observe that $\mathcal{R}_{\phi}$ will depend only on $\phi$.

Since we consider a fixed set $\mathcal S$ for the rest of the paper, we will write $\mathcal{L}[<,+]$ for the logic $\mathcal{L_S}[<,+,0,\{\equiv_q:q>1\}]$.

\subsection{Intervals and Linear Functions} 
\label{linFunc}
We first show that every formula $\psi$ with at least one free variable has a normal form.
\begin{lemma}
 \label{lem_normalform}
Let $\psi(z) \in \mathcal{L}[<,+]$. Then there exists a formula $\hat \psi(z) \in \mathcal{L}[<,+]$ such that $\psi$ is equivalent to $\hat \psi$, where all 
atomic formulas in $\hat \psi$ with $z$ are of the form $z > \rho, z = \rho, z< \rho, z \equiv_n \rho$, where $\rho$ is a linear function on variables other than $z$.
 \end{lemma}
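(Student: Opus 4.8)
The plan is to establish the normal form by a purely local rewriting of atomic subformulas, which reduces everything to the single nontrivial task of isolating $z$ in an arithmetic atom. First I would observe that it suffices to exhibit, for each atomic subformula $A$ occurring in $\psi$, an equivalent formula $\hat A$ all of whose atoms containing $z$ have one of the listed shapes: replacing every $A$ by $\hat A$ then yields the desired $\hat\psi$, since the Boolean connectives preserve equivalence and a monoid quantifier $\mon{M}{m} y$ (with $y \ne z$) depends only on the truth values of its argument formulas, so substituting equivalent subformulas cannot change its truth value. Two preliminary remarks trim the work. Every term built from variables, the constant $0$ and $+$ equals a $\Nat$-linear combination $az + \sum_j b_j y_j$ of variables; and, since the unary predicates are applied to variables (as in the semantics $w,i\models c(x) \Leftrightarrow w(i)=c$), the only unary atom containing $z$ is $c(z)$, in which $z$ already stands isolated. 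Hence the substantive cases are the arithmetic atoms $\tau_1 \bowtie \tau_2$ with ${\bowtie}\in\{<,=,\equiv_q\}$.

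For such an atom I would write $\tau_1 = az+s_1$ and $\tau_2 = bz+s_2$ with $a,b\in\Nat$ and $s_1,s_2$ linear functions on the variables other than $z$. After swapping the two sides if necessary (which may turn a ${<}$-atom into a ${>}$-atom, and leaves $=$ and the symmetric $\equiv_q$ unchanged) I may assume $a\ge b$ and set $c=a-b$. If $c=0$ the atom contains no $z$ and nothing needs to be done. If $c>0$, then over the integers the atom is equivalent to $cz \bowtie (s_2-s_1)$, and dividing by the positive integer $c$ isolates $z$: for the order atoms this gives $z<\rho$, respectively $z>\rho$, with $\rho=(s_2-s_1)/c$ a linear function with signed rational coefficients, the subtraction being absorbed into $\rho$; for equality it gives $z=\rho$, where the requirement that $z$ be a natural number already forces the integrality and nonnegativity of $\rho$, so the equivalence holds over $\Nat$ without any auxiliary condition.

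The remaining and, I expect, only delicate case is the congruence atom, which reduces to solving the linear congruence $cz \equiv_q (s_2-s_1)$ for $z$. Writing $d=\gcd(c,q)$, this congruence is solvable exactly when $d \mid (s_2-s_1)$, i.e.\ when $s_1 \equiv_d s_2$ — an atom not containing $z$, which is already admissible — and, when solvable, its solution set is $z \equiv_{q/d} \rho$ with $\rho \equiv (c/d)^{-1}(s_2-s_1)/d \pmod{q/d}$, using the inverse of $c/d$ modulo $q/d$. Thus the atom is replaced by the conjunction of the solvability congruence and a congruence of the admissible form $z\equiv_{n}\rho$ with $n=q/d$. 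The main obstacle is precisely this congruence manipulation: correctly extracting the modulus $n=q/d$, the solvability condition, and a particular solution, and verifying that the resulting $\rho$ is a genuine linear function of the remaining variables; the order and equality cases, together with the propagation of equivalence through connectives and quantifiers, are routine. Assembling the rewritten atoms then gives a formula $\hat\psi$ equivalent to $\psi$ in which every atom containing $z$ is in normal form.
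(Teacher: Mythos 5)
Your local, atom-by-atom strategy (collect $z$ on one side, split on the sign of the resulting coefficient, treat $<,=,\equiv_q$ separately) is reasonable, and your analysis of the congruence atoms is actually more careful than the paper's. But the central step --- ``dividing by the positive integer $c$ isolates $z$'' --- is a genuine gap whenever $c>1$. The resulting $\rho=(s_2-s_1)/c$ has rational coefficients, so $z<\rho$ (and likewise $z=\rho$) is not an atomic formula of $\mathcal{L}[<,+]$ at all: terms of the logic are $\Nat$-linear combinations, extended in this proof only by subtraction, and the lemma is not merely semantic --- $\hat\psi$ must again be a formula of the logic. Worse, the normal form is consumed in Section~\ref{linFunc}, where the terms $\rho$ must be integer linear functions $\Nat^{s+r}\rightarrow\Nat$ with coefficients bounded by $\alpha'$; the boundary-point sets $B^{w,\vec a}$, the bound $\Delta=s\alpha'$, and the sorting tree of Section~\ref{subsec_sortingtree} all presuppose integral coefficients. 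There is no way to round $(s_2-s_1)/c$ to an admissible linear term without a floor, which is not linear.

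The missing idea is to clear denominators globally rather than divide locally: let $n$ be the lcm of all coefficients of $z$ in $\psi$, multiply each atom so that $z$ appears with coefficient exactly $n$ (for a congruence $cz\equiv_q\rho$ this also scales the modulus, giving $nz\equiv_{(n/c)q}(n/c)\rho$), then replace $nz$ by $z$ and conjoin the guard $z\equiv_n 0$. All coefficients stay integral. The price is that $\hat\psi(z)$ is no longer pointwise equivalent to $\psi(z)$ but rather encodes $\psi$ on the rescaled domain of multiples of $n$; this suffices for the intended use, since $z$ is bound by a group quantifier and the positions with $z\not\equiv_n 0$ contribute the identity element. A smaller point: your treatment of the letter predicates silently assumes they are applied only to variables, whereas the paper's syntax admits $c(\sigma)$ for a compound term $\sigma$ and eliminates it by introducing a relativized monoid quantifier $\mon{M}{m_1}x~\neg\lambda(x)\langle(x=\sigma)\wedge c(x),\mathrm{false},\dots,\mathrm{false}\rangle$.
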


\begin{proof}
Terms in our logic are expressions of the form $$\alpha_0+\alpha_1x_1+\dots+\alpha_s x_s ~~~~~, \mbox{where} ~~\alpha_i \in \Nat$$ and atomic formulas are of the form
$$\sigma = \gamma, \sigma < \gamma, \sigma > \gamma, \sigma \equiv_m \gamma, c(\sigma)$$
whee $\sigma, \gamma$ are linear functions, $c \in \Sigma$ and $m > 1$. \\
Now using any $M \in \mathcal{S}$, where $m_1\in M$ is not the neutral element, we can rewrite $c(\sigma)$ as
$$\mon{M}{m_1} x ~\neg \lambda(x) \langle (   x = \sigma) \wedge c(x), false,\dots,false\rangle$$
Now consider the atomic formulas containing the free variable $z$ in $\psi(z)$. 
By multiplying with appropriate numbers, we can re-write these atomic formulas as $nz = \rho, nz < \rho, nz > \rho, nz \equiv_l \rho$ 
for one particular $n$, which is the least common multiple (lcm) of all the coefficients in $\psi$. Here $\rho$ does not contain $z$ and also it might contain subtraction. That is $nz=\rho$ might stand for $nz+\rho_1 = \rho_2$. Now we replace $nz$ by $z$ and 
conjunct the formula with $z \equiv_n 0$. 
\qed \end{proof}

For any formula $\psi(z)$, the notation $\hat \psi(z)$ denotes the normal form as in Lemma \ref{lem_normalform}.
Let $x_1,\dots,x_s$ be the bounded variables occurring in $\hat\phi_i(z)$ and $y_1,\dots,y_r$  be the free variables other than $z$ in $\hat\phi_i(z)$.
Hence the terms $\rho$ that appear in the formula $\hat\phi_i(z)$ can be identified as functions, $: ~\mathbb N^{s+r}\rightarrow\mathbb N$.

We collect all functions $\rho(\vec x,\vec y)$ that occur in the formulas $\hat\phi_i(z)$ for an $i\leq K$:  
$$R=\{\rho\mid \text{where $\rho$ is a linear term occurring in $\hat\phi_i(z)$}, i\leq K\}$$
We define the set $T$ of offsets as a set of terms which are functions using the variables $y_1,\dots,y_r$ as parameters:
$$T=\{\rho(0,\dots,0,y_1,\dots,y_r)\mid \rho\in R\}\cup\{0\}$$
Consider the set of absolute values of all the coefficients appearing in one of the functions in $R$. Let $\alpha' \in \Nat$ be the maximum value among these. That is 
$\alpha' = max \{ |\gamma| \mid f \in R, \gamma \mbox{ is a coefficient in } f\}$. Let $\Delta=s\cdot\alpha'$.
Now we can define our set of extended functions.
For a $t \in T$ we define a set of terms which are functions using the variables $x_1,\dots,x_s,y_1,\dots,y_r$ as parameters:
$$F_t=\big\{\sum_i^{s'} \alpha_i x_i + t\mid s'\leq s,-\Delta \leq \alpha_i\leq\Delta, \alpha_i \in \mathbb{N}\big\}.$$
We denote by $F=\cup_{t \in T} F_t$.

For a fixed word $w \in \Sigma^*\lambda^{\omega}$ and a fixed assignment of the free variables $\vec y$ to $\vec a$ we let 
$$B^{w,\vec a}=\{f(\vec d,\vec a)\mid t \in T, f\in F_t, \vec d \in\nnp(w)^{s'},d_1>d_2>\dots>d_{s'}\}$$
be the set of {\em boundary points}. Note that the assignments to the functions are of strictly decreasing order. Let $b_1<b_2<\ldots<b_l$ be the boundary points 
in $B^{w,\vec a}$. Then the following sets are called \emph{intervals}: $(-1,b_1),(b_1,b_2),\dots,(b_{l-1},b_l),(b_l,\infty)$. Here 
$(a,b) = \{x \in \Nat \mid a<x<b\}$.
We also split the set of points in $B^{w,\vec a}$ depending on the offset
$$B_t^{w,\vec a}=\{f(\vec d,\vec a)\mid f\in F_t,\vec d \in\nnp(w)^{s'},d_1>d_2>\dots>d_{s'}\}.$$


In the following Lemma we fix a word $w \in \Sigma^*\lambda^{\omega}$ and an $\vec a \in \Nat^r$.
\begin{lemma}\label{bplemma}
$\{\rho(d_1,\ldots,d_s,\vec a) \mid \rho \in R, d_i \in nnp(w)\} \cup \nnp(w) \subseteq B^{w,\vec a}$
\end{lemma}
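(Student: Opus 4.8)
The statement is the conjunction of two inclusions into $B^{w,\vec a}$, and since every element of $B^{w,\vec a}$ is by definition a value $f(\vec d,\vec a)$ for some offset $t\in T$, some $f\in F_t$, and some strictly decreasing tuple $\vec d\in\nnp(w)^{s'}$, the whole proof reduces to producing, for each target number, such a witnessing triple. I would treat the two inclusions in turn.

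The inclusion $\nnp(w)\subseteq B^{w,\vec a}$ is immediate: for $p\in\nnp(w)$ take the offset $t=0\in T$, the length $s'=1$, the coefficient $\alpha_1=1$, and the function $f=x_1\in F_0$; then $\vec d=(p)$ is trivially strictly decreasing and $f(p,\vec a)=p$. This only needs $1\le\Delta$, which holds as soon as some $\hat\phi_i$ binds a variable with a nonzero coefficient; the remaining degenerate case can be disposed of directly and does not arise in the use of the lemma.

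For the main inclusion, fix $\rho\in R$ and positions $d_1,\dots,d_s\in\nnp(w)$, which are arbitrary and in particular may coincide or be out of order. Write $\rho(\vec x,\vec y)=\sum_{i=1}^s\beta_i x_i+\rho(\vec 0,\vec y)$, so that $t:=\rho(0,\dots,0,y_1,\dots,y_r)$ lies in $T$ and $\rho(d_1,\dots,d_s,\vec a)=\sum_{i=1}^s\beta_i d_i+t(\vec a)$. The plan is to rewrite $\sum_i\beta_i d_i$ as a combination over the \emph{distinct} values occurring among the $d_i$: for each distinct value $e$ put $\alpha_e=\sum_{i:d_i=e}\beta_i$. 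At most $s$ indices contribute to any one $e$ and each $|\beta_i|\le\alpha'$, so $|\alpha_e|\le s\alpha'=\Delta$. Ordering the distinct values strictly decreasingly as $e_1>\dots>e_{s'}$ with $s'\le s$ and setting $f=\sum_{j=1}^{s'}\alpha_{e_j}x_j+t$, we get $f\in F_t$ and $f(e_1,\dots,e_{s'},\vec a)=\sum_i\beta_i d_i+t(\vec a)=\rho(d_1,\dots,d_s,\vec a)$, which is the desired witness.

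I expect the coefficient-merging step to be the crux. The constraint in the definition of $B^{w,\vec a}$ that admissible tuples be strictly decreasing (hence have distinct entries) is exactly what prevents plugging the raw $d_i$ in, and grouping equal arguments both restores distinctness and keeps each coefficient inside the window $\{-\Delta,\dots,\Delta\}$ --- which is precisely why $\Delta$ is set to $s\alpha'$ and why $F_t$ must permit negative coefficients (a grouped coefficient can be negative when $\rho$ involves subtraction).
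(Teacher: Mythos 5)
Your proof is correct and follows essentially the same route as the paper's: the identity function $x_1\in F_0$ handles $\nnp(w)$, and for a general $\rho\in R$ you merge the coefficients of equal arguments into $\alpha_e=\sum_{i:d_i=e}\beta_i$, bounded by $s\alpha'=\Delta$, so that the resulting function lies in $F_t$ and admits a strictly decreasing assignment. Your version is in fact slightly cleaner, since the paper's bound is misstated as $|\beta_i|\le\Delta\cdot s$ where $|\beta_i|\le\alpha'\cdot s=\Delta$ is what is meant and needed.
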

\begin{proof}
Let $S=\{\rho(d_1,\ldots,d_s,\vec a) \mid \rho \in R, d_i \in nnp(w)\} \cup \nnp(w)$. 
Since $\rho(x_1)=x_1$ is in $F_t$, for some $t\in T$, we have $nnp(w) \subseteq B^{w,\vec a}$. Let $b\in S$. Then there is a function
$\rho=\sum_i^{s'} \alpha_i x_i+t(\vec y)$ in $F_t$ and values $p_1,\dots,p_{s'}\in nnp(w)$ such that $b=\rho(p_1,\dots,p_{s'},\vec a)$. Let $p_1'> p_2' >\dots>p_l'$ be the ordered set of all $p_i$s in the above assignment. We let $\rho'(x_1,\dots,x_l)=\sum_i \beta_{i} x_i+t$, where 
$\beta_i=\sum_{j:p_j=p'_i} \alpha_j$. Therefore $b=\rho'(p'_1,\dots,p'_l)$.
Since $|\beta_i|\leq \Delta\cdot s$ we have $\rho'\in F_t$ and hence $b\in B^{w,\vec a}_t$.
\qed \end{proof}

Let $q$ be the lcm of all $q'$ where $\equiv_{q'}$ occurs in one of the $\phi_i$. 
We need the following lemma, that inside an interval with only neutral letters, the congruence relations decide the truth of an active domain formula. 
\begin{lemma}
\label{lem_witInIntrv}
Let $a_1,\dots,a_r \in \Nat$ and let $c,d \in \Nat$ belong to the same interval in $B^{w, \vec a}$ such that $c \equiv_q d$. Then for all $i \leq K$:
$w,c \vDash \phi_i(z,\vec a) \Leftrightarrow w,d \vDash \phi_i(z, \vec a)$.
\end{lemma}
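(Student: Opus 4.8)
The plan is to prove a stronger statement by structural induction on the active domain formula, generalizing from the $\phi_i$ to every subformula together with the variables bound above it. Concretely, I would fix $w$ and $\vec a$ (hence the boundary set $B^{w,\vec a}$ and its intervals) once and for all, and show: for every subformula $\psi(z,\vec x)$ occurring in some $\hat\phi_i$, where $\vec x$ are the variables quantified by active domain quantifiers enclosing $\psi$, and for every assignment of $\vec x$ to non-neutral positions $\vec e \in \nnp(w)^{|\vec x|}$, if $c,d$ lie in the same interval of $B^{w,\vec a}$ and $c\equiv_q d$, then $w,c,\vec e\models\psi \Leftrightarrow w,d,\vec e\models\psi$. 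Taking $\psi=\hat\phi_i$ with $\vec x$ empty then gives the lemma. The reason for carrying the extra free variables $\vec x$ is that they let the induction pass through nested quantifiers; the key point, supplied by Lemma \ref{bplemma}, is that no matter how these variables are later instantiated by non-neutral positions, every threshold value they produce is already a boundary point.

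For the base case I would use the normal form of Lemma \ref{lem_normalform}: every atomic formula mentioning $z$ is one of $z>\rho$, $z=\rho$, $z<\rho$, or $z\equiv_n\rho$, where $\rho$ is a linear function of the remaining variables, while atoms not mentioning $z$ are trivially unaffected when $z$ switches from $c$ to $d$. For the order/equality atoms I would set $b=\rho(\vec e,\vec a)$ and note $b\in B^{w,\vec a}$ by Lemma \ref{bplemma} (since $\rho\in R$ and $\vec e$ are non-neutral positions). As $c$ and $d$ lie in a common \emph{open} interval, neither equals any boundary point and both lie on the same side of $b$; hence $z=b$ is false at both, while $z<b$ and $z>b$ agree at $c$ and $d$. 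For a congruence atom $z\equiv_n\rho$, since $q$ is the lcm of all moduli occurring in the $\phi_i$ we have $n\mid q$, so $c\equiv_q d$ forces $c\equiv_n d$ and therefore $n\mid(\rho-c)\Leftrightarrow n\mid(\rho-d)$.

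In the inductive step the Boolean connectives $\neg,\wedge,\vee$ are immediate from the induction hypothesis, so the only remaining case is an active domain quantifier $\psi=\mon{M}{m}x'~\neg\lambda(x')~\langle\psi_1,\dots,\psi_{K'}\rangle$. Writing $u^c(j)$ and $u^d(j)$ for the monoid elements contributed at position $j$ when $z$ is set to $c$ respectively $d$, I would observe that for neutral positions $j\notin\nnp(w)$ the guard $\neg\lambda(x')$ fails, forcing $u^c(j)=u^d(j)=1$, so that only non-neutral positions contribute to the (finite) product $\prod_j u(j)$. For $j\in\nnp(w)$ the tuple $(\vec e,j)$ is again a non-neutral assignment, so the induction hypothesis applied to each $\psi_\ell$ yields $w,c,\vec e,j\models\psi_\ell\Leftrightarrow w,d,\vec e,j\models\psi_\ell$, whence $u^c(j)=u^d(j)$. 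Consequently the two products coincide and the quantified formula takes the same value at $c$ and at $d$.

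I expect the main obstacle to be the bookkeeping of the generalized induction hypothesis rather than any single hard computation: one must phrase it so that the bound variables range exactly over $\nnp(w)$ and invoke Lemma \ref{bplemma} to guarantee that every instantiated threshold is a boundary point. Once that is set up, the order atoms are handled by the ``same interval'' hypothesis, the congruence atoms by $c\equiv_q d$ together with $n\mid q$, and the quantifier case by the relativization to non-neutral positions, so no further argument is needed.
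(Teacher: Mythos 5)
Your proposal is correct and follows essentially the same route as the paper's proof: a structural induction on the normal-form subformulas carrying the enclosing active-domain variables as non-neutral parameters, with Lemma \ref{bplemma} guaranteeing that every instantiated threshold $\rho(\vec e,\vec a)$ is a boundary point, $c\equiv_q d$ handling the congruence atoms, and the relativization to $\nnp(w)$ handling the quantifier case. The only cosmetic difference is that the paper phrases the quantifier step as equality of witness sets rather than equality of the resulting monoid products, and it also notes the (trivial) case of a letter predicate $a(z)$, which holds because interior points of an interval are neutral positions.
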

\begin{proof}
Proof is by induction on the structure of the formula $\hat \phi_i$. We will now show that $\forall b_i \in \nnp(w)$ and all
subformulas $\psi(z,\vec x,\vec y)$ of $\hat \phi_i$ that $w,c,\vec b, \vec a \vDash \psi \Leftrightarrow w,d,\vec b,\vec a \vDash
\psi$. The atomic formulas of $\hat \phi_i(z,\vec a)$ are of the following form: $z < \rho(\vec x,\vec a), z = \rho(\vec x,\vec a), z>
\rho(\vec x,\vec a) , z \equiv_{q'} \rho(\vec x,\vec a), a(z)$ and formulas which does not depend on $z$. It is clear that the truth of formulas which
does not depend on $z$, $a(z)$ and $z \equiv_{q'} \rho$ does not change whether we assign $c$ or $d$ to $z$. Let $\vec b \in nnp(w)^s$.
By Lemma \ref{bplemma} we know that $\rho(\vec b,\vec a)$ is in $B^{w,\vec a}$ and since $c,d$ lies in the same interval it follows that $c<
\rho(\vec b,\vec a) \Leftrightarrow d<(\vec b,\vec a)$. Similarly we can show that the truth of $z>\rho,z=\rho$ does not change on $z$
being assigned $c$ or $d$. Thus we have that the claim holds for atomic formulas.
The claim clearly holds for conjunction and negation of formulas. Now let the claim hold for subformulas $\psi_1,\dots,\psi_K$. Therefore $\forall i\leq K$ we have that
$\{\vec b \in \nnp(w)^s \mid w,c,\vec b,\vec a\vDash \psi_i\} = \{\vec b \in \nnp(w)^s \mid w,d,\vec b,\vec a \vDash \psi_i\}$.
Therefore we have that
\begin{eqnarray*}
w, c,b_2,\ldots,b_s,\vec a \vDash \mon{M}{m} x ~\neg \lambda(x) \langle \psi_1,\dots,\psi_K \rangle \\
\Leftrightarrow w, d,b_2,\ldots,b_s,\vec a \vDash \mon{M}{m} x ~\neg \lambda(x) \langle \psi_1,\dots,\psi_K \rangle
\end{eqnarray*}
And hence it is closed under active domain quantification.
\qed \end{proof}

The following Lemma deals with the infinite interval.

\begin{lemma}
 \label{lem_infInterval}
 Let $b$ belong to the infinite interval and $\vec a \in \Nat^r$. If $w, \vec a \vDash \phi$ then 
$w,b, \vec a \nvDash \phi_i$ for any $i\leq K$.
\end{lemma}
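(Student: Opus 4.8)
The plan is to derive the claim by contradiction, combining the well-definedness of the infinite group product with the periodicity granted by Lemma~\ref{lem_witInIntrv}. Recall that $w,\vec a\vDash\phi$ asserts that the product $\prod_j u(j)$ equals $m$, where $u(j)$ is the element of $G$ determined by which of $\phi_1(z,\vec a),\dots,\phi_K(z,\vec a)$ hold at $z=j$; in particular $u(j)=1$ exactly when none of the $\phi_i$ hold at $j$. Over a word in $\Sigma^*\lambda^{\omega}$ this product is well-defined precisely when the partial products $p_N=\prod_{j<N}u(j)$ are eventually constant, and since $p_{N+1}=p_N\cdot u(N)$ this happens iff $u(N)=1$ for all but finitely many $N$. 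So the first step is simply to record that the hypothesis $w,\vec a\vDash\phi$ forces $u(j)=1$ for cofinitely many $j$.

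The second step is to pin down the behaviour of $u$ on the infinite interval $(b_l,\infty)$, which contains infinitely many positions in every residue class modulo $q$. By Lemma~\ref{lem_witInIntrv}, any two positions $c,d$ that lie in the same interval of $B^{w,\vec a}$ and satisfy $c\equiv_q d$ agree on the truth of every $\phi_i$; hence, restricted to $(b_l,\infty)$, whether $\phi_i$ holds at a position $j$ --- and therefore the value $u(j)$ --- depends only on $j\bmod q$, i.e. $u$ is $q$-periodic on the infinite interval.

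The third step closes the argument. Suppose for contradiction that some $b$ in the infinite interval satisfies $w,b,\vec a\vDash\phi_i$ for some $i\le K$. Letting $i^*$ be the least index with $\phi_{i^*}$ true at $b$, we have $u(b)=m_{i^*}$, and since $m_1,\dots,m_K$ are exactly the non-identity elements of $G=\{m_1,\dots,m_K,1\}$ this gives $u(b)\ne 1$. By $q$-periodicity on $(b_l,\infty)$, every position $j>b_l$ with $j\equiv_q b$ also satisfies $u(j)=u(b)\ne 1$, and there are infinitely many such $j$. This contradicts the cofiniteness obtained in the first step, so no $\phi_i$ can hold at any $b$ in the infinite interval.

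The step I expect to require the most care is the first: it forces one to commit to the semantics of the group quantifier on the infinite words of $\Sigma^*\lambda^{\omega}$, namely that $w\vDash\phi$ presupposes that the infinite product $\prod_j u(j)$ stabilizes (equivalently, that only finitely many factors are non-identity). Once that convention is fixed the remainder is routine. Note that I never use the fact that $(b_l,\infty)$ lies beyond every non-neutral position --- although this does hold, since $\nnp(w)\subseteq B^{w,\vec a}$ by Lemma~\ref{bplemma} --- because the contradiction is produced purely from periodicity together with well-definedness.
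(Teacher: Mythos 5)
Your proposal is correct and follows essentially the same route as the paper: assume a witness $b$ for some $\phi_i$ in the infinite interval, use Lemma~\ref{lem_witInIntrv} to propagate it to all positions $c\equiv_q b$ in that interval, and conclude that infinitely many non-identity factors contradict $w,\vec a\vDash\phi$. You merely make explicit the convention (left implicit in the paper) that satisfaction of the group quantifier over words in $\Sigma^*\lambda^{\omega}$ requires the product $\prod_j u(j)$ to stabilize, i.e.\ cofinitely many factors equal $1_G$.
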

\begin{proof}
Let $i \leq K$ and $b$ be in the infinite interval and $w,b, \vec a \vDash \phi_i$. From Lemma \ref{lem_witInIntrv} we know that all points
$c \equiv_q b$ and such that $c$ is also in the infinite interval will be a witnesses for $\phi_i$. This means the set of witnesses is infinite and hence $w,\vec a \nvDash \phi$.
\qed \end{proof}

Lemma \ref{lem_witInIntrv} says that inside an interval, the congruence relations decide the satisfiability of the formulas $\phi_i$s.
This shows that it is enough to know the truth values of $\phi_i$ at a distance of $\geq q$ from the boundary points, since the truth
values inside an interval are going to repeat after every $q$ positions.
The rest of the proof demonstrates 
\begin{enumerate}
 \item How we can treat each $B_t$ differently. 
 \item There is an active domain formula which goes through the points in $B_t$ in an increasing order
\end{enumerate}
We fix the word $w \in \Sigma^*\lambda^{\omega}$ and assignment $\vec a$. Therefore we drop the superscripts in
$B^{w,\vec a}$ ($B_t^{w,\vec a}$) and call them $B$ ($B_t$). 
\subsection{Treating each $B_t$ differently}
Let $p=q|G|$, where $q$ was defined in the previous section and depends on the $\equiv_{q'}$ predicates.
For an element $g\in G$, we have $g^{|G|}=1_G$, so $g^x=g^{x+|G|}$.
Recall the definitions of $T, B$ from Section \ref{sec_lemmaProof}. 

Recall from the Preliminaries (Section \ref{sec_prelims}) that we denoted by $u(i)$ the group element at position $i$. That is
$u(i)=m_j$ iff $w,\vec a \models \phi_j \wedge \bigwedge_{l<j} \neg \phi_l$. Our aim is to give an active domain formula such that the formula evaluates to true iff the group element $\prod_{i=0} u(i)$ is equal to $m$. The rest of this subsection will be devoted to computing this product in a way which helps in building an active domain formula.

Let $b< b'$ be boundary points in $B$. Below we compute $\prod_{i=b+1}^{b'-1} u(i)$ in a different way:
$$\prod_{i=b+1}^{b'-1} u(i)=\prod_{i>b} u(i) \left( \prod_{i\geq b'} u(i)\right)^{-1}.$$
Observe that we can compute the product of the interval using two terms that both need to know only one boundary of the interval.
It becomes simpler if we note that the two products do not really need to multiply all the elements $u(i)$, for $i \geq b'$ but simply agree on a common set of elements to multiply. 

For a $b \in B$, we define the function $\IL(b)$ to be the length of the interval to the left of $b$. That is if $(b',b)$ form an interval then $\IL(b)= b- b'-1$. Similarly we define $\IR(b)$ to be the length of the interval to the right of $b$. For all $k \leq |T|$, we define functions $N_k(b)$ and $\hat N_k(b)$, which maps points $b\in B$ to a group element. 

$$N_0(b) = 
\begin{cases}
 u(b+1) u(b+2) \dots u(b+\IR(b)) \hfill  \text{if } \IR(b) < p & \\
 u(b+1) u(b+2) \dots u(b+r) & \\
 \hspace{2cm} \text{if } \IR(b)\geq p \text{ and } r < p, b+r \equiv_p 0 & \\
\end{cases}
$$
$$
\hat N_0(p)=
\begin{cases}
1_G \hspace{1.55cm} \text{if } \IL(b)< p & \\
u(b-p) \dots u(b-p+r) & \\
\hspace{2cm} \text{if } \IL(b)\geq p \text{ and } r < p, b+r \equiv_p 0 & \\
\end{cases}
$$

Inductively we define
$$N_k(b)=N_{k-1}(b)\prod_{\substack{b'\in B_{t_k}\\b'>b}} \left(\hat N_{k-1}(b')\right)^{-1} u(b') N_{k-1}(b'),$$
$$\hat N_k(b)=\hat N_{k-1}(b)\prod_{\substack{b'\in B_{t_k}\\b'>b}} \left(\hat N_{k-1}(b')\right)^{-1} u(b') N_{k-1}(b').$$

\noindent We first show how $N_k(b)$ and $N_k(b')$ are related for $b,b' \in B$.
\begin{lemma} Let $0 \leq k \leq |T|$.
Let $b<b'\in B$ such that there are no points $b'' \in \bigcup_{i>k} B_{t_i}$, where $b < b'' < b'$. Then
$N_k(b)(\hat N_k(b'))^{-1}=\prod_{i=b+1}^{b'-1} u(i)$.
\end{lemma}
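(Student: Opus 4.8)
The plan is to prove the identity by induction on $k$, with a single periodicity fact as the engine of the whole argument. Fix $w$ and $\vec a$ and write $\pi(a,b)$ for $\prod_{i=a}^{b}u(i)$, the empty product (when $a>b$) being $1_G$. By Lemma~\ref{lem_witInIntrv}, inside any interval $u(i)$ depends only on $i\bmod q$, so $u$ is $q$-periodic there; hence for any block $[j,j+p-1]$ contained in one interval, $\pi(j,j+p-1)=\big(\pi(j,j+q-1)\big)^{|G|}=1_G$, since $p=q|G|$ and every element of $G$ has order dividing $|G|$. I will call this the \emph{block-cancellation} property. Two identities are used repeatedly in the telescoping: $\pi(a,c-1)\,u(c)=\pi(a,c)$ and $\pi(a,c)\,\pi(c+1,d)=\pi(a,d)$.

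\textbf{Base case $k=0$.} The hypothesis says no point of $\bigcup_{i>0}B_{t_i}=B$ lies strictly between $b$ and $b'$, so $(b,b')$ is an interval with $\IR(b)=\IL(b')=b'-b-1$. If it is short ($<p$), then by definition $N_0(b)=\pi(b+1,b'-1)$ and $\hat N_0(b')=1_G$, and the claim is immediate. If it is long ($\geq p$), then $N_0(b)=\pi(b+1,A)$ where $A$ is the least multiple of $p$ with $A\geq b$, and $\hat N_0(b')=\pi(b'-p,A')$ where $A'$ is the greatest multiple of $p$ with $A'<b'$, both ranges lying inside the interval. Splitting $\pi(b+1,b'-1)=\pi(b+1,A)\,\pi(A+1,A')\,\pi(A'+1,b'-1)$ and applying block-cancellation to the full $p$-blocks comprising $\pi(A+1,A')$ shows $\pi(A+1,A')=1_G$, so it remains to check that $\pi(A'+1,b'-1)\,\hat N_0(b')=1_G$. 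Writing $P=\pi(A'+1,b'-1)$ and $Q=\pi(b'-p,A')$, block-cancellation gives $QP=\pi(b'-p,b'-1)=1_G$, whence $PQ=1_G$ in $G$; this is exactly $N_0(b)\big(\hat N_0(b')\big)^{-1}=\pi(b+1,b'-1)$.

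\textbf{Inductive step.} Write $N_k(b)=N_{k-1}(b)\,P$ and $\hat N_k(b')=\hat N_{k-1}(b')\,P'$, where $P,P'$ are the products, taken in increasing order of $b''$, of the factors $X(b'')=\big(\hat N_{k-1}(b'')\big)^{-1}u(b'')N_{k-1}(b'')$ over $b''\in B_{t_k}$ with $b''>b$ and $b''>b'$ respectively. Since $b<b'$, the factors of $P'$ form the right tail of those of $P$, so $P\,(P')^{-1}$ is the ordered product of $X(c)$ over the points $c_1<\dots<c_n$ of $B_{t_k}$ lying in $(b,b']$; hence $N_k(b)\big(\hat N_k(b')\big)^{-1}=N_{k-1}(b)\,X(c_1)\cdots X(c_n)\,\big(\hat N_{k-1}(b')\big)^{-1}$. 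The crucial observation is that between each consecutive pair among $b,c_1,\dots,c_n,b'$ there is no point of $\bigcup_{i\geq k}B_{t_i}$: nothing from $B_{t_k}$ lies between adjacent $c_j$, and nothing from $\bigcup_{i>k}B_{t_i}$ lies in $(b,b')$ by hypothesis. This is precisely what is needed to invoke the induction hypothesis at level $k-1$, giving $N_{k-1}(b)\big(\hat N_{k-1}(c_1)\big)^{-1}=\pi(b+1,c_1-1)$, $N_{k-1}(c_j)\big(\hat N_{k-1}(c_{j+1})\big)^{-1}=\pi(c_j+1,c_{j+1}-1)$, and $N_{k-1}(c_n)\big(\hat N_{k-1}(b')\big)^{-1}=\pi(c_n+1,b'-1)$. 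Substituting these and collapsing with the two identities above telescopes the entire expression to $\pi(b+1,b'-1)$.

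I expect the main obstacle to be the non-commutative bookkeeping rather than any deep idea. In the base case one must align the two boundary fragments to multiples of $p$ exactly and then exploit that in a group $QP=1_G$ forces $PQ=1_G$, since $N_0$ and $\hat N_0$ produce these fragments in the ``wrong'' order. In the inductive step the delicate points are (i) fixing the increasing product order so that $P'$ is genuinely a suffix of $P$ and the tail cancels, and (ii) the verification that consecutive relevant points enclose no $\bigcup_{i\geq k}B_{t_i}$-point, which is exactly what makes the level-$(k-1)$ hypothesis applicable; once these are settled the telescoping is routine. A final minor check concerns the membership of the endpoints $b,b'$ in $B_{t_k}$ (in particular whether $c_n=b'$), which in the intended use does not occur because $b'$ is a boundary arising from a strictly higher offset, but which one verifies directly in the general statement.
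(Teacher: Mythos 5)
Your proof follows essentially the same route as the paper's: induction on $k$, with the base case handled by $q$-periodicity inside an interval (Lemma~\ref{lem_witInIntrv}) plus cancellation of full $p$-blocks via $g^{|G|}=1_G$ and the group-inverse trick to align the two boundary fragments, and the inductive step by cancelling the common tail of factors over $b''>b'$ and telescoping the level-$(k-1)$ identity over consecutive points of $B_{t_k}$ lying between $b$ and $b'$. The only substantive difference is that you explicitly flag the edge case $b'\in B_{t_k}$, which the paper silently sidesteps by writing the residual product over $b<b''<b'$.
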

\begin{proof}
We prove this by induction over $k$.
Let $k=0$ and let $(b,b')$ form an interval in $B$. If $b'-b \leq p$ then 
$$(N_0(b))(\hat N_0(b'))^{-1}=\big(u(b+1) u(b+2) \dots u(b+\IR(b))\big)(1_G)^{-1}=\prod_{i=b+1}^{b'-1} u(i)$$

If the interval is large, i.e. $b'-b>p$, then let $s,t \in \Nat$, be the smallest, resp. the largest numbers such that $b\leq s\leq
t\leq b'$ and $s\equiv_p t\equiv_p 0$.
Lemma \ref{lem_witInIntrv} shows that inside an interval all positions congruent modulo $q$ satisfy the same formulas. Therefore $u(b'-p) u(b'-p+1) \dots u(b'-1)=1_G$, and hence $(u(b'-p) u(b'-p+1) \dots u(t))^{-1}=(u(t+1)\dots u(b'-1))$. So 
$$N_0(b)(\hat N_0(b'))^{-1}=\big(u(b+1) u(b+2) \dots u(s)\big) \big(u(t+1)\dots u(b'-1)\big)=\prod_{i=b+1}^{b'-1} u(i)$$
The last equality being true since $u(s+1)\dots u(t)=1_G$

As induction hypothesis assume that the lemma is true for all $k'<k$.
Since for all $b''>b'$ the terms $\big(\hat N_{k-1}(b'')\big)^{-1} u(b'') N_{k-1}(b'')$ appear in both $N_k(b)$ and $\hat N_k(b')$ 
they cancel out (whatever they compute to). Thus 
$$N_k(b)(\hat N_k(b'))^{-1}=\bigg(N_{k-1}(b)\hspace*{-1mm}\prod_{\substack{b''\in B_{t_k}\\b<b''<b'}}\hspace*{-3mm}\big(\hat N_{k-1}(b'')\big)^{-1} u(b'') N_{k-1}(b'')\bigg)(\hat N_{k-1}(b'))^{-1}$$

Let $b=b_0<b_1<\dots<b_{x-1}<b_x=b'$ be all positions in $B_{t_k}$ between $b$ and $b'$. By the requirements of the lemma the only positions of $B$
between $b_i$ and $b_{i+1}$ are in $\bigcup_{i<k} B_{t_i}$. Writing out the product we get 
$$N_k(b)(\hat N_k(b'))^{-1} = N_{k-1}(b_0)\left (\hat N_{k-1}(b_1) \right)^{-1} \prod_{i=1}^{x-1} u(b_i) ~N_{k-1}(b_i)\left (\hat N_{k-1}(b_{i+1}) \right)^{-1}$$
By I.H.  $N_{k-1}(b_i) \left(\hat N_{k-1}(b_{i+1})\right)^{-1}=\prod_{i=b_i+1}^{b_{i+1}-1} u(i)$. Hence
$N_k(b)(\hat N_k(b'))=\prod_{i=b+1}^{b'-1} u(i).$
\qed \end{proof}

The following Lemma shows that $u(0)N_{|T|}(0)$ gives the product of the group elements.
\begin{lemma}\label{lem_whattocomp}
We have that $u(0) N_{|T|}(0)=\prod_i u(i)$.
\end{lemma}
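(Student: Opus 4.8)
The plan is to reduce the statement to the single identity $N_{|T|}(0) = \prod_{i=1}^{b_l}u(i)$ and to establish that identity by a secondary induction telescoping the recursive definition with the preceding lemma. First I would fix the finitely many boundary points $0 = b_0 < b_1 < \cdots < b_l$ of $B$ (recall $0 \in B$, since the constant $0$ lies in $F_0$). On the infinite interval $(b_l, \infty)$, Lemma \ref{lem_witInIntrv} makes $u$ periodic of period $q$, so either $u(i) = 1_G$ for all $i > b_l$, or some residue class carries a non-identity value and $\prod_i u(i)$ never stabilises; by Lemma \ref{lem_infInterval} the latter happens precisely when $w, \vec a \nvDash \phi$, and is handled separately. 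In the relevant case the tail is trivial, whence $\prod_i u(i) = u(0)\prod_{i=1}^{b_l}u(i)$ and, cancelling $u(0)$, it suffices to show $N_{|T|}(0) = \prod_{i=1}^{b_l}u(i)$.

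I would derive this from the auxiliary claim, proved by induction on $k$: for every $0 \leq k \leq |T|$ and every $b \in B$ with no point of $\bigcup_{i>k}B_{t_i}$ exceeding $b$, one has $N_k(b) = \prod_{i=b+1}^{b_l}u(i)$. The Lemma is the instance $k = |T|$, $b = 0$, where $\bigcup_{i>|T|}B_{t_i} = \emptyset$ makes the hypothesis vacuous and yields $N_{|T|}(0) = \prod_{i=1}^{b_l}u(i)$.

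For the base case $k = 0$, since $\bigcup_{i>0}B_{t_i} = B$ the hypothesis forces $b = b_l$; then $N_0(b_l)$ is a product of elements $u(i)$ with $i > b_l$, all equal to $1_G$, so $N_0(b_l) = 1_G$ is the empty product $\prod_{i=b_l+1}^{b_l}u(i)$. For the inductive step, list the points $b < e_1 < \cdots < e_r$ of $B_{t_k}$ above $b$ and expand $N_k(b) = N_{k-1}(b)\prod_{j=1}^{r}\big(\hat N_{k-1}(e_j)\big)^{-1}u(e_j)N_{k-1}(e_j)$. Regrouping (in increasing order) places each block $N_{k-1}(e_{j-1})\big(\hat N_{k-1}(e_j)\big)^{-1}$, with $e_0 := b$, immediately before $u(e_j)$, and leaves a trailing factor $N_{k-1}(e_r)$. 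No point of $\bigcup_{i>k-1}B_{t_i} = \bigcup_{i\geq k}B_{t_i}$ lies strictly between $e_{j-1}$ and $e_j$, so the preceding lemma rewrites each block as $\prod_{i=e_{j-1}+1}^{e_j-1}u(i)$; and no such point exceeds $e_r$, so the induction hypothesis at level $k-1$ rewrites the trailing factor as $\prod_{i=e_r+1}^{b_l}u(i)$. Concatenating, the whole expression telescopes to $\prod_{i=b+1}^{b_l}u(i)$. The degenerate case $r = 0$ gives $N_k(b) = N_{k-1}(b)$, handled directly by the induction hypothesis.

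I expect the main difficulty to be organisational rather than conceptual: one must check at every step that consecutive level-$k$ points $e_{j-1}, e_j$ genuinely satisfy the ``no higher-level boundary point in between'' hypothesis of the preceding lemma, and keep the non-commutative products strictly left-to-right so that each inverse factor $\big(\hat N_{k-1}(e_j)\big)^{-1}$ cancels against the adjacent $N_{k-1}$ factor during telescoping. The one genuinely external ingredient is the triviality of the infinite-interval tail, which is what lets the base case bottom out at $1_G$; this in turn rests on the choice $p = q|G|$, ensuring a full period multiplies to $1_G$, together with Lemma \ref{lem_infInterval}.
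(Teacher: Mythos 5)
Your proof is correct and follows essentially the same route as the paper, which simply asserts that ``an appropriate induction hypothesis'' yields $N_{|T|}(0)=\prod_{i=1}^{l}u(i)$ for $l>\max(B)$ and then invokes Lemma~\ref{lem_infInterval} for the tail. Your auxiliary claim (induction on $k$, telescoping via the preceding lemma on $N_k(b)(\hat N_k(b'))^{-1}$) is precisely the induction the paper leaves implicit, spelled out with the correct side conditions.
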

\begin{proof}
Using appropriate induction hypothesis we get that $N_{|T|}(0)=\prod_{i=1}^{l} u(i)$, where $l>\max(B)$. The lemma now follows from Lemma \ref{lem_infInterval} which gives that $u(i)=1_G$ for every $i$ in the infinite interval.
\qed \end{proof}

We now give active domain formulas $\gamma^m$, $m \in G$, such that $\gamma^m$ is true iff $N_{|T|}(0)=m$. For this we make use of the
inductive definition of $N_k$ and show that there exists active domain formulas $\gamma^m$ such that $w \models \gamma^m(b) \Leftrightarrow N_k(b)=m$.
Similarly we give active domain formulas $\hat \gamma^m$ such that $w \models \hat \gamma^m(b) \Leftrightarrow \hat N_k(b)=m$.
Observe that $N_k(b)$ is got by computing the product of 
$\left(\hat N_{k-1}(b')\right)^{-1} u(b') N_{k-1}(b')$, over $b'$, where $b'$ strictly increases. This requires us to traverse the elements in $B_{t_{k-1}}$ in an increasing order. The following section builds a Sorting tree to sort the elements of $B_{t_{k-1}}$ in an increasing order.


\subsection{Sorting Tree} \label{subsec_sortingtree}
Let $t \in T$. The aim of this section is to create a data structure, which can traverse the elements in $B_t$ in an ascending order. 

%
%

For a $t \in T$, we define a tree called \emph{sorting tree}, $\mathcal{T}_t$ which corresponds to $B_t$. The tree satisfies the
following property.
If the leaves of the tree are enumerated from left to right, then we get the set $B_t$ in ascending order.
A node in $\mathcal{T}_t$ is labeled by a tuple $(f,A)$, where $f(x_1,\dots, x_l)$ is a function in $F_t$, $A$ an assignment for the variables in $f$ 
such that $A(x_1)>A(x_2)>\dots>A(x_l)$ and $\forall i\leq l: A(x_i) \in nnp(w)$. 

We show how to inductively built the tree. The root is labeled by the tuple $(t,\{\})$, where $t$ is the function which depends only on $\vec y$ (and hence constant on $\vec x$)
 and $\{\}$ is the empty assignment. The root is not marked a leaf node.

Consider the internal node $(f(x_1,\dots,x_l),A)$. It will have three kinds of children ordered from left to right as follows.
\begin{enumerate}
 \item Left children: These are labeled by tuples of the form $(f_{\alpha}',A'_j)$ 
where\linebreak $f'_{\alpha}(x_1,\dots,x_{l+1})=f(x_1,\dots,x_l)+\alpha x_{l+1}$ and $-\Delta \leq \alpha < 0$, $-\alpha \in \Nat$, $A'_j=A\cup[x_{l+1}\mapsto j]$, where $j < A(x_l)$ and $j \in\nnp(w)$.

 The tuples $(f'_{\alpha_1},A'_{j_1})$ is on the left of $(f'_{\alpha_2},A'_{j _2})$ if $j_1>j_2$ or if $j_1=j_2$ and $\alpha_1<\alpha_2$. 
 \item Middle child: It is labeled by the tuple $(f'',A)$ where $f''(x_1,\dots,x_{l})=f(x_1,\dots,x_l)$. It is marked a {\bf leaf} node.
  \item Right children: These are labeled by tuples of the form $(f_{\alpha}',A'_j)$ 
where $f'_{\alpha}(x_1,\dots,x_{l+1})=f(x_1,\dots,x_l)+\alpha x_{l+1}$ and $0< \alpha \leq \Delta$, $\alpha \in \Nat$, $A'_j=A\cup[x_{l+1}\mapsto j]$, where $j < A(x_l)$ and $j \in\nnp(w)$.

The tuple $(f'_{\alpha_1},A'_{j_1})$ is on the left of $(f'_{\alpha_2},A'_{j_ 2})$ if $j_1<j_2$ or $j_1=j_2$ and $\alpha_1<\alpha_2$. 
\end{enumerate}

Observe that if there is no $j$ such that $j < A(x_l)$ and $j \in\nnp(w)$, then $(f,A)$ will only have the child $(f'',A)$.

Note that in our tree construction the values of the children of a node increase from left to right. The tree is built until
all functions with $s$ variables appear in leaves and hence the depth of the tree is $s+2$.
Figure \ref{fig_sortingTree} shows part of a tree, where $\Delta=2$, $t=0$, $\mathcal{R}=5$ and $\nnp(w) = \{5,25,625\} \subseteq
\mathcal{D}_{\mathcal{R}}$.

\begin{figure*}
\centering
\vspace*{-4mm}
\includegraphics[width=0.9\textwidth]{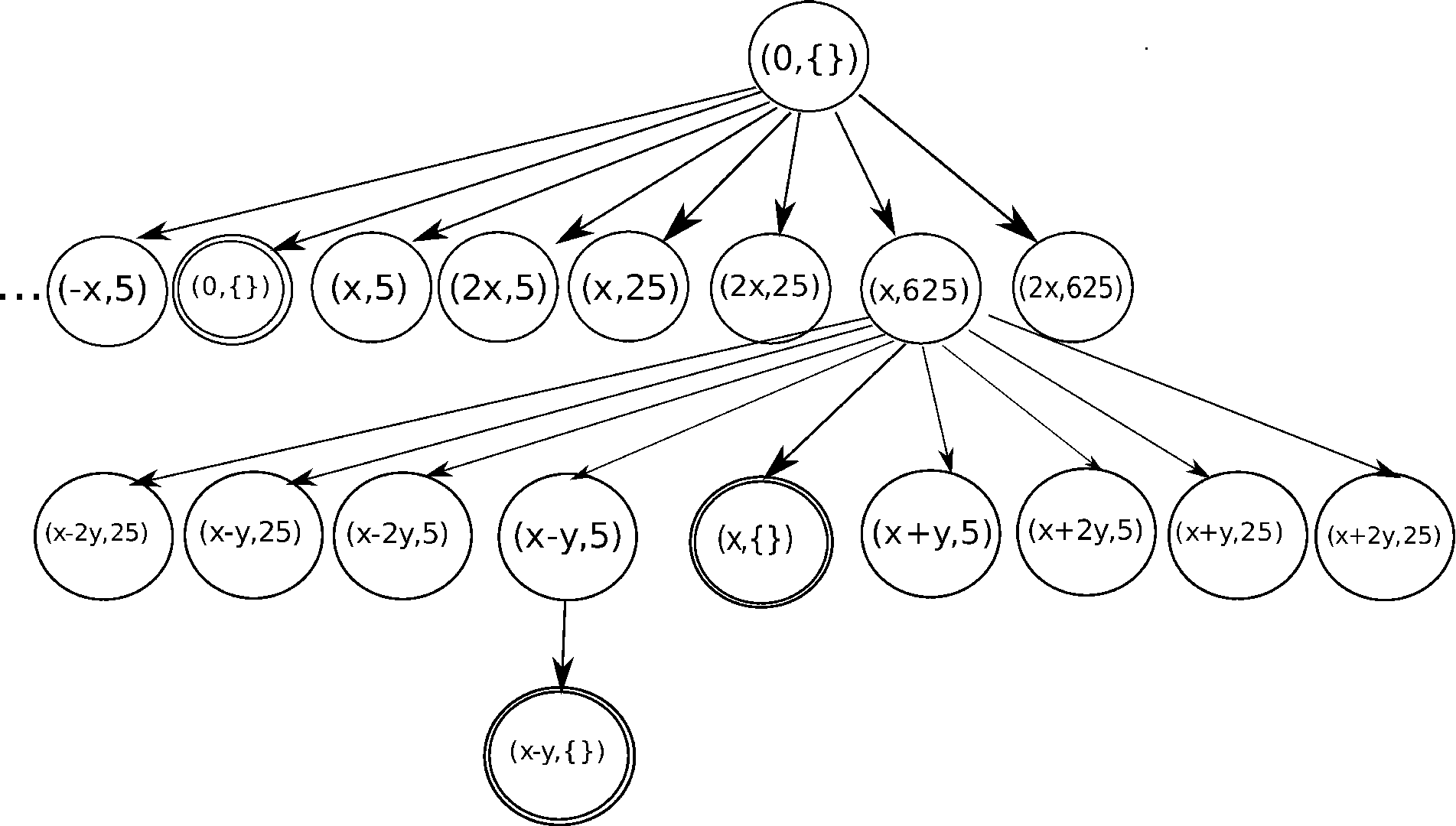} 
\vspace*{-4mm}
\caption{Sorting Tree: The double circles represent leaves of the tree. The nodes of the tree are labelled $(f,A)$, 
where $A$ is an assignment for the function $f$ and $t=0$. For better presentation we only show the assignment to the newly introduced variable in a node. 
For example, the tuple $(x-2y,25)$ assigns $x = 625$ and $y=25$. The assignment to $x$ is given in the node's parent.}
\label{fig_sortingTree}
\vspace*{-5mm}
\end{figure*} 

The following lemma holds if $\mathcal{R} > 3s\Delta$. We also assume that $nnp(w) \subseteq \mathcal{D}_{\mathcal{R}}$.
Given a node $(f,A)$, we say the value of the node is the function $f$ evaluated under the assignment of $A$ (denoted by $f(A)$).
\begin{lemma}
\label{lem_child}
Let $N$ be an internal node labeled by a function $f(x_1,\dots,x_l)$ with $l<s$ and an assignment $A$.
If $A(x_l) = \mathcal{R}^c$ for some $c\geq1$, then the children of this node have values in the range $[f(A)-\Delta \mathcal{R}^{c-1},
f(A)+\Delta \mathcal{R}^{c-1}]$.
Moreover the values of the children increases from left to right.
\end{lemma}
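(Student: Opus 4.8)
The plan is to compute the value of each child explicitly and read off both claims from the arithmetic. A left or right child of $(f,A)$ has the form $(f'_\alpha,A'_j)$ with $f'_\alpha = f + \alpha x_{l+1}$ and $A'_j = A \cup [x_{l+1}\mapsto j]$, so its value is exactly $f'_\alpha(A'_j) = f(A) + \alpha j$, while the single middle child $(f'',A)$ has value $f(A)$. The one structural fact I would exploit is that $j \in \nnp(w) \subseteq \mathcal{D}_{\mathcal{R}}$ is a power of $\mathcal{R}$ strictly below $A(x_l) = \mathcal{R}^c$; since $\mathcal{D}_{\mathcal{R}} = \{\mathcal{R}^i : i > 0\}$ this forces $j \le \mathcal{R}^{c-1}$ (and when $c=1$ there is no admissible $j$ at all, so only the middle child exists and both claims are trivial). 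Combining $j \le \mathcal{R}^{c-1}$ with $|\alpha| \le \Delta$ gives $|\alpha j| \le \Delta\mathcal{R}^{c-1}$, whence every child value lies in $[f(A)-\Delta\mathcal{R}^{c-1}, f(A)+\Delta\mathcal{R}^{c-1}]$; this settles the range claim.

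For the monotonicity claim I would first separate the children into three blocks by the sign of $\alpha$: left children have $\alpha<0$ and hence value $f(A)+\alpha j < f(A)$, the middle child has value exactly $f(A)$, and right children have $\alpha>0$ and value $>f(A)$, so as blocks they already appear in increasing order. It then remains to verify that the prescribed left-to-right order inside each outer block agrees with increasing value. Within a block I would split into the case $j_1=j_2$, where the rule orders by increasing $\alpha$ and the value $f(A)+\alpha j$ is visibly increasing in $\alpha$, and the case $j_1\neq j_2$. The latter is where the geometry of $\mathcal{D}_{\mathcal{R}}$ is essential: if $j_1,j_2$ are distinct powers of $\mathcal{R}$ then one exceeds the other by at least a factor $\mathcal{R}$, and since $\mathcal{R}>\Delta$ (which follows from the standing hypothesis $\mathcal{R}>3s\Delta$) a single step up in the exponent dominates any coefficient of absolute value at most $\Delta$. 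Concretely, for two left children with $j_1>j_2$ one has $\alpha_1 j_1 \le -j_1 \le -\mathcal{R} j_2 < -\Delta j_2 \le \alpha_2 j_2$, and symmetrically for two right children with $j_1<j_2$ one has $\alpha_1 j_1 \le \Delta j_1 < \mathcal{R} j_1 \le j_2 \le \alpha_2 j_2$; in both cases the child placed further left has the strictly smaller value, exactly matching the ordering rule.

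The only delicate point, and the step I expect to be the real content, is this last domination estimate: that a gap of one power of $\mathcal{R}$ in the newly assigned value swamps every admissible coefficient. It is precisely here that restricting $\nnp(w)$ to $\mathcal{D}_{\mathcal{R}}$ and choosing $\mathcal{R}$ large relative to $\Delta$ pays off; without the sparsity of the domain the lexicographic ordering rule on $(j,\alpha)$ would not track the numeric order of the values. Assembling the three blocks together with the within-block estimates then shows that the child values strictly increase from left to right, which, combined with the range bound, proves the lemma.
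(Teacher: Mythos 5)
Your proposal is correct and is exactly the verification that the paper compresses into ``By construction'': compute each child's value as $f(A)+\alpha j$, use $j\le\mathcal{R}^{c-1}$ (from $\nnp(w)\subseteq\mathcal{D}_{\mathcal{R}}$ and $j<A(x_l)=\mathcal{R}^c$) together with $|\alpha|\le\Delta$ for the range bound, and use $\mathcal{R}>\Delta$ to show the lexicographic ordering on $(j,\alpha)$ tracks numeric order within each block. Your explicit handling of the $c=1$ degenerate case and the domination estimates $\alpha_1 j_1\le -j_1\le -\mathcal{R}j_2<-\Delta j_2\le\alpha_2 j_2$ supply precisely the missing details.
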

\begin{proof}
By construction.
\qed \end{proof}

Next we show that for any two neighboring nodes in the tree, the values in the leaves of the subtree rooted at the left node is less 
than the values in the leaves of the subtree rooted at the right node. Let $V_{(f,A)}$ denote the set of values in the leaves of the subtree rooted at $(f,A)$.

\begin{lemma}
 \label{lem_neighNoIntersect}
Let $(f,A)$ and $(f',A')$ be neighboring nodes of the same parent such that $(f,A)$ is to the left of $(f',A')$. Then $u<v$ for every $u \in V_{(f,A)}$ and $v \in V_{(f',A')}$.
\end{lemma}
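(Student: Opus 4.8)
The plan is to reduce the claim to two ingredients: a bound on how tightly the leaf values of a subtree cluster around the value of its root node, and a lower bound on the gap between the node values of two neighbouring siblings. Since $(f,A)$ and $(f',A')$ share a parent $(g,A_0)$ with deepest assigned variable, say, $A_0(x_l)=\mathcal{R}^c$, both sibling node values have the shape $g(A_0)+\alpha\cdot j$ with $|\alpha|\le\Delta$ and $j\in\nnp(w)$, $j<\mathcal{R}^c$ (the middle leaf being the case $\alpha=0$). By Lemma \ref{lem_child} these node values already increase from left to right, so $f(A)<f'(A')$; it remains to show that the subtree spreads on either side are too small to close this gap.

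\emph{Spread bound.} I would first prove, by downward induction on the tree, that for any node $(f,A)$ with deepest assigned variable $A(x_l)$, every $u\in V_{(f,A)}$ satisfies $|u-f(A)|\le\frac{\mathcal{R}}{\mathcal{R}-1}\,\Delta\,e^{\ast}$, where $e^{\ast}$ is the largest element of $\nnp(w)$ strictly below $A(x_l)$ (and the spread is $0$ if no such element exists, in particular at the leaves). The inductive step follows from the construction: passing to a child replaces the deepest variable by some $j\le e^{\ast}$, contributing a term of absolute value at most $\Delta j\le\Delta e^{\ast}$, while by the induction hypothesis the remaining descendants contribute at most $\frac{\Delta e^{\ast}}{\mathcal{R}-1}$, since the child's own largest lower element is at most $e^{\ast}/\mathcal{R}$ (variables drop by at least a factor $\mathcal{R}$ as $\nnp(w)\subseteq\mathcal{D}_\mathcal{R}$); summing $\Delta e^{\ast}+\frac{\Delta e^{\ast}}{\mathcal{R}-1}$ closes the induction. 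The point to stress is that the spread is governed by the \emph{largest element of $\nnp(w)$ below the deepest variable}, not by the deepest variable itself, because $\nnp(w)$ may be sparse.

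\emph{Gap estimate.} I would then split the neighbouring pairs into the four types dictated by the child structure: two left children, a left child and the middle leaf, the middle leaf and a right child, and two right children. Writing the two node values as $g(A_0)+\alpha j$ and $g(A_0)+\alpha' j'$, the orderings fixed in the construction force either $j=j'$ with $\alpha'=\alpha+1$, giving a gap of exactly $j$, or a transition $j'\neq j$ in which $j'$ is the \emph{immediate} successor (resp. predecessor) of $j$ in $\nnp(w)$, so that the larger variable is at least $\mathcal{R}$ times the smaller and the gap is at least $j(\mathcal{R}-\Delta)$ for the relevant $j$. The crucial observation is that across such a transition there is no element of $\nnp(w)$ strictly between $j$ and $j'$, so the quantity $e^{\ast}$ controlling the spread of the larger node equals the smaller variable; hence both subtree spreads are bounded by $\frac{\mathcal{R}}{\mathcal{R}-1}\Delta$ times that smaller variable. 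In each case a short computation shows the gap between node values strictly exceeds the sum of the two spreads as soon as $\mathcal{R}>2\Delta+1$, which is guaranteed by the standing hypothesis $\mathcal{R}>3s\Delta$; this yields $\max V_{(f,A)}<\min V_{(f',A')}$, i.e. $u<v$ for all $u\in V_{(f,A)}$ and $v\in V_{(f',A')}$.

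I expect the transition case $j\neq j'$ to be the main obstacle: there the two siblings' newly introduced variables can differ by an arbitrarily large power of $\mathcal{R}$, and a spread bound phrased naively in terms of the deepest variable of the larger node would dwarf the gap. The resolution is exactly that, for \emph{neighbouring} siblings, the larger node's relevant lower element is pinned to the other sibling's variable, so its spread scales with the smaller variable and is dominated by the gap. The remaining cases (equal $j$, or pairs involving the middle leaf, whose spread is $0$) are routine once the spread bound is established.
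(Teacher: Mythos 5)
Your proposal is correct and follows essentially the same route as the paper: bound how far the leaf values of a subtree can stray from its root node's value, then check that the gap between the node values of two neighbouring siblings exceeds the sum of the two spreads, using $\nnp(w)\subseteq\mathcal{D}_{\mathcal{R}}$ to make successive variable assignments drop by a factor of $\mathcal{R}$. The only differences are cosmetic: you use a geometric-series spread bound $\frac{\mathcal{R}}{\mathcal{R}-1}\Delta e^{\ast}$ where the paper uses the cruder $(s-l)\Delta\mathcal{R}^{c-1}$ (which, contrary to your worry, already suffices even in the transition case $j\neq j'$), and you spell out the case analysis that the paper compresses into ``a similar analysis can be given for other combinations.''
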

\begin{proof}
Let $f=\sum_{i=1}^{l-1} \alpha_i x_i+\alpha_lx_l+t$ and $f' = \sum_{i=1}^{l-1} \alpha_i x_i+\alpha'_lx_l+t$. We show that the rightmost element, $u$ in $V_{(f,A)}$ is less 
than the left most element, $v$ in $V_{(f',A')}$. From Lemma \ref{lem_child} and applying induction on the depth of the tree, one can show that $u \leq
f(A)+(s-l)\Delta \mathcal{R}^{c-1}$ and 
$v \geq f'(A')-(s-l) \Delta \mathcal{R}^{c'-1}$. Here $\mathcal{R}^c,\mathcal{R}^{c'}$ are the minimum assignments in $A$ and $A'$ respectively.
Let us assume that both coefficients $\alpha_l',\alpha_l > 0$. A similar analysis can be given for other combinations of $\alpha_{l}'$ and
$\alpha_l$. 
Now since $(f,A)$ is the left neighbor of $(f',A')$ we have $\mathcal{R}^{c}<\mathcal{R}^{c'}$.  Then 
$v-u \geq \alpha'_l\mathcal{R}^{c'}-(s-l) \Delta \mathcal{R}^{c'-1}-\alpha_l\mathcal{R}^c-(s-l)\Delta \mathcal{R}^{c-1} \geq \mathcal{R}^{c'}
-3s\Delta \mathcal{R}^{c'-1} > 0$. The claim follows, since $\mathcal{R} > 3s\Delta$.
\qed \end{proof}
The next lemma says that the values of the leaves of the tree increases as we traverse from left to right.

\begin{lemma} \label{lem_tree}
Let $(f,A)$ and $(f',A')$ be two distinct nodes such that $f(A)<f'(A')$. Then $(f,A)$ appear to the left of $(f',A')$.
\end{lemma}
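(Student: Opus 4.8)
The plan is to deduce the statement from the two preceding lemmas, which already pin down how values behave across sibling subtrees, together with a careful reading of what ``to the left of'' means for an ancestor--descendant pair. First I would record the one structural fact the proof really rests on: every subtree of $\mathcal{T}_t$ contains at least one leaf (its root's middle child when the root is internal, or the root itself when it is a leaf), so $V_{(g,C)}$ is never empty, and the value of an internal node $(g,C)$ equals the value of its middle child. This lets me attach to every node a \emph{representative leaf} --- the middle child if the node is internal, the node itself if it is a leaf --- whose value is exactly the node's value $g(C)$ and whose position in the left-to-right leaf enumeration is, by the construction of the tree (the middle child is placed directly below its parent, left children to the left and right children to the right), the horizontal position we assign to the node. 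Thus ``$(f,A)$ appears to the left of $(f',A')$'' means ``the representative leaf of $(f,A)$ precedes that of $(f',A')$.''

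Next I would strengthen Lemma~\ref{lem_neighNoIntersect} from neighbouring siblings to arbitrary siblings: if $C$ is a child strictly to the left of a sibling $C'$, then $u<v$ for every $u\in V_C$ and $v\in V_{C'}$. This follows by chaining Lemma~\ref{lem_neighNoIntersect} along the consecutive siblings lying between $C$ and $C'$, using the nonemptiness of each intermediate subtree to carry the inequality through by transitivity, while Lemma~\ref{lem_child} guarantees that the sibling order and the value order agree at each step.

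With these in hand the lemma reduces to the case of two leaves. Since $f(A)<f'(A')$ the two representative leaves have distinct values and are therefore distinct leaves; by definition ``to the left of'' for the original nodes is ``to the left of'' for these leaves, whose values are $f(A)$ and $f'(A')$. For two distinct leaves $L_1,L_2$ the least common ancestor $N$ is a proper internal ancestor of both, so $L_1$ and $L_2$ lie in the subtrees of two distinct children $C,C'$ of $N$. By the strengthened sibling statement the child lying to the left has the smaller value range; were the leaf of value $f(A)$ contained in the right child we would obtain $f'(A')<f(A)$, contradicting the hypothesis. Hence the leaf of value $f(A)$ sits in the left child and precedes the other, which is exactly the claim.

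The only real subtlety I expect is fixing the ancestor--descendant convention cleanly: for nodes at different depths one must be explicit that an internal node is placed at the position of its middle child, so that a descendant lying in a right (resp.\ left) subtree counts as being to the right (resp.\ left) of that ancestor. Once this is folded into the representative-leaf reduction, every comparison becomes a comparison of leaf values and is settled by Lemmas~\ref{lem_child} and~\ref{lem_neighNoIntersect}; no estimate beyond the standing assumption $\mathcal{R}>3s\Delta$ is needed.
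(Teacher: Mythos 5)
Your proof is correct and follows the same route as the paper, which disposes of this lemma with the single sentence ``This follows from Lemma~\ref{lem_neighNoIntersect}.'' You have simply supplied the details the paper leaves implicit: the chaining of the neighbouring-sibling separation across arbitrary siblings, the reduction to representative leaves via the middle child, and the least-common-ancestor case analysis, all of which are sound.
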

\begin{proof}
This follows from Lemma \ref{lem_neighNoIntersect}.
\qed \end{proof}


\begin{lemma}[Tree Lemma]\label{lem_treelemma}
Fix $t\in T$. Assume that for every $b\in B_{t}$ we have an element $g_b\in G$.
For all $f\in F_t$, $m\in G$ let $\gamma_f^m(\vec x)$ be active domain formulas  such that $w,\vec d\models\gamma_f^m(\vec x)\text{ iff }g_{f(\vec d)}=m.$
Then there are active domain formulas $\Gamma^{m'}$ such that $w\models\Gamma^{m'}\text{ iff }\prod_{b\in B_t} g_b=m'.$
\end{lemma}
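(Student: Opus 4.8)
The plan is to evaluate $\prod_{b\in B_t} g_b$ by traversing the leaves of the sorting tree $\mathcal{T}_t$ from left to right, which by Lemma~\ref{lem_tree} visits $B_t$ in ascending order. For a node $(f,A)$ write $P(f,A)$ for the product of the $g_b$ over the leaves of the subtree rooted at $(f,A)$, taken left to right. The recursive shape of $\mathcal{T}_t$ gives
$$P(f,A)=P_L(f,A)\cdot g_{f(A)}\cdot P_R(f,A),$$
where $g_{f(A)}$ is the contribution of the middle (leaf) child, and $P_L,P_R$ are the products over the left- and right-children subtrees. By Lemma~\ref{lem_neighNoIntersect} the left subtrees carry values $<f(A)$ and the right subtrees values $>f(A)$, so this factorisation is exactly the ascending product, and at the root $P(t,\{\})=\prod_{b\in B_t}g_b$.

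I would build, by induction on $s-l$ (deepest nodes first), active domain formulas $\Pi_f^m(x_1,\dots,x_l)$ with $w,\vec d\models\Pi_f^m$ iff $P(f,A)=m$, where $A$ assigns $x_i\mapsto d_i$. The base case is a node with $l=s$: it has only its middle child, so $P(f,A)=g_{f(A)}$ and I take $\Pi_f^m:=\gamma_f^m$, available by hypothesis. For $l<s$, since $G$ is finite I set $\Pi_f^m:=\bigvee_{m_Lm_gm_R=m}\big(\Lambda_f^{m_L}\wedge\gamma_f^{m_g}\wedge\Xi_f^{m_R}\big)$, reducing matters to active domain formulas $\Lambda_f^{m_L}$ and $\Xi_f^{m_R}$ that check $P_L(f,A)=m_L$ and $P_R(f,A)=m_R$.

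For the right part I iterate the new variable with $\Xi_f^{m_R}:=\mon{G}{m_R}x_{l+1}\,\neg\lambda(x_{l+1})\langle\cdots\rangle$, relativised by $x_{l+1}<x_l$. A monoidal quantifier multiplies in increasing position order, which matches the increasing-$j$ order of the right children; at each $j$ the element contributed is the finite product $\prod_{\alpha=1}^{\Delta}P(f+\alpha x_{l+1},A'_j)$, whose value in $G$ is pinned down by a disjunction over the inductively available formulas $\Pi_{f+\alpha x_{l+1}}^{\,\cdot}$. The left part is the only real difficulty: the left children must be multiplied with $j$ \emph{decreasing}, while the quantifier can only scan the active domain increasingly. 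I resolve this with the group-inverse trick: writing $P_L=Q_{j_{\max}}\cdots Q_{j_{\min}}$ with $Q_j=\prod_{\alpha=-\Delta}^{-1}P(f+\alpha x_{l+1},A'_j)$, one has $P_L^{-1}=Q_{j_{\min}}^{-1}\cdots Q_{j_{\max}}^{-1}$, which is an increasing-$j$ product of the inverse elements. Hence I take $\Lambda_f^{m_L}:=\mon{G}{m_L^{-1}}x_{l+1}\,\neg\lambda(x_{l+1})\langle\cdots\rangle$ contributing $Q_j^{-1}$ at each $j$ (decided via $Q_j^{-1}=g\Leftrightarrow Q_j=g^{-1}$ and the inductive formulas), so that the quantifier evaluates to $P_L^{-1}$ and the test $P_L=m_L$ becomes the check $P_L^{-1}=m_L^{-1}$. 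Setting $\Gamma^{m'}:=\Pi_t^{m'}$ at the root ($l=0$, no free $x$-variables) then gives the statement.

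Every formula produced this way quantifies only over $\neg\lambda$ positions and uses only the quantifier $\mon{G}{\cdot}$ together with the given $\gamma_f^m$ and order comparisons, so it is an active domain formula and no new quantifiers are introduced—matching the claim of the section. The main obstacle is precisely the order reversal for the left children, handled by the group-inverse trick; a minor point to verify is that the leaves of $\mathcal{T}_t$ realise each element of $B_t$ exactly once, so that the leaf product genuinely equals $\prod_{b\in B_t}g_b$, which follows from the strict separation in Lemma~\ref{lem_neighNoIntersect} and Lemma~\ref{lem_tree}.
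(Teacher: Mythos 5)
Your proposal is correct and follows essentially the same route as the paper: induction on the depth of the sorting tree, a three-way split of each node's product into left children, middle leaf, and right children, a group quantifier relativised to $x_{l+1}<x_l$ for the increasing right-hand product, and the same inverse-element trick $P_L^{-1}=Q_{j_{\min}}^{-1}\cdots Q_{j_{\max}}^{-1}$ (the paper's $\mon{G}{m^{-1}}$ applied to the formulas $\pi_f^{-,m_i^{-1},D}$) to handle the decreasing left-hand product. The point you flag about each element of $B_t$ being realised exactly once by a leaf is also implicit in the paper's argument, so nothing essential differs.
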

\begin{proof}
We will use the sorting tree, $\mathcal{T}_t$ corresponding to $B_t$ for the construction of our formula. 
Recall that the nodes are labeled by tuples $(f,A)$, where $f$ is a function and $A$ is the assignment of the parameters of $f$. 
Let $V_{(f,A)}\subseteq B_t$ be the set of values at the leaves of the subtree rooted at the node labeled by $(f,A)$, and $g_{(f,A)}=\prod_{b\in V_{(f,A)}} g_b$.
We will do induction on the depth $D$ of the tree.
Let $\tau^{m,D}_{f}(\vec x)$ be a formula such that 
$w,\vec d \models \tau^{m,D}_{f}(\vec x)~~\text{$~~$iff }\prod_{b\in V_{(f,\vec d)}} g_b=m$ where $(f, \vec d)$ is the label of a node that has a subtree of depth at most $D$.
Hence we multiply all group elements $g_b$ for which $b$ is in $V_{(f,\vec d)}$.

{\bf Base Case (leaves):} 
We define $\tau^{m,0}_{f}(\vec x)=\gamma_f^m(\vec x)$.

{\bf Induction Step:} Let us assume that the claim is true for all nodes with a subtree of depth at most $D$.
Let the node labeled by $(f,A)$ have a subtree of depth $D+1$. We will need to specify the formula $\tau^{m,D+1}_{f}(\vec x)$, where $\vec x$ agrees with the assignment $A$.
For every child $(f',A')$ of $(f,A)$ the depth of the corresponding subtree is less than or equal to $D$. Hence we know we have already formulas by induction.

Recall what the children of $(f,A)$ are: They are of form $(f'_{\alpha},A'_j)$ and $(f'',A_j)$. Moreover all nodes
$(f'_{\alpha},A'_j)$, where $\alpha$ is
negative, come to the left of $(f'',A_j)$ and all nodes $(f'_{\alpha},A'_j)$, where $\alpha$ is positive, come to its right.

We start by grouping some of the children and computing their product.
We let $T^-(A'_j)$ be the product of all subtrees labeled by $(f'_{\alpha},A'_j)$ for $\alpha=-\Delta,-\Delta+1,\dots,-1$.
This is a finite product so we can compute this by a Boolean combination of the formulas $\tau^{m,D}_{f'_{\alpha}}(\vec x,x_{l+1})$.
$$\pi^{-,m,D}_f(\vec x,x_{l+1}) \defs \bigvee_{m_{-\Delta}\dots m_{-1}=m} \bigg(\bigwedge_{\alpha=-\Delta}^{-1} \tau^{m_{\alpha},D}_{f'_{\alpha}}(\vec x,x_{l+1})\bigg)$$
Now we want to compute the product $\left(\prod_{j\in\nnp(w)} (T^-(A'_j))^{-1}\right)^{-1}$ which is the product of the $T^-(A'_j)$ where $j\in\nnp(w)$ is decreasing.
But this can be computed using an active domain group quantifier, $\tau_f^{-,m,D}(\vec x)$ as follows:
$$\tau_f^{-,m,D}(\vec x) = \mon{G}{m^{-1}} x_{l+1} ~\big(\neg \lambda(x_{l+1}) \wedge (x_l > x_{l+1})\big) \hspace*{1.5cm}$$
$$\hspace*{2cm}\langle \pi_f^{-,m_1^{-1},D}(\vec x,x_{l+1}),\dots, \pi_f^{-,m_K^{-1},D}(\vec x,x_{l+1})\rangle$$ 
Recall that the elements of group $G$ are ordered $m_1,\ldots,m_K$.
For the single node $(f'',A)$ we already have the formulas $\tau^{m,D}_{f''}(\vec x)$ by induction (here we have $\vec x$ since the assignment $A$ is the same for $(f,A)$ and $(f'',A)$).

Similarly we define formulas $\pi^{+,m,D}_f(\vec x,x_{l+1})$ for the positive coefficients, and compute their product $\prod_{j\in\nnp(w)} T^+(A'_j)$ in an increasing order.
$$\pi^{+,m,D}_f(\vec x,x_{l+1}) \defs \bigvee_{m_{1}\dots m_{\Delta}=m} \bigg(\bigwedge_{\alpha=1}^{\Delta}
\tau^{m_{\alpha},D}_{f'_{\alpha}}(\vec x,x_{l+1})\bigg)$$
$$\tau^{+,m,D}_{f}(\vec x) \defs \mon{G}{m} x_{l+1} \big(\neg \lambda(x_{l+1}) \wedge (x_l > x_{l+1})\big)\hspace*{1.5cm}$$
$$\hspace*{2cm}\langle \pi_f^{+,m_1,D}(\vec x,x_{l+1}),\dots, \pi_f^{+,m_K,D}(\vec x,x_{l+1})\rangle$$
We have now computed the product of the group elements for the three different groups of children. So by a Boolean combination over
these formulas we get $\tau^{m,D+1}_{f}(\vec x)$:
$$\bigvee_{m'm''m'''=m} \big(\tau^{-,m',D}_{f}(\vec x) \wedge \tau^{m'',D}_{f''}(\vec x) \wedge \tau^{+,m''',D}_{f}(\vec x)\big)$$
So finally we get $\Gamma^{m'}$ which is same as the formula $\tau_ {(0,\{\})}^{m',s+2}$, which is valid at the root of the tree.
\qed \end{proof}

Since the above lemma holds only for $\mathcal{R}>3s\Delta$, our $\mathcal{R}_{\phi}$ should be greater than $3s \Delta$.
 
\subsection{Constructing the active domain formula} 
We know that for every $b\in B$ there is a function $f\in F$, $d_1,\dots,d_{s'}\in\nnp(w)$, such that $b=f(\vec d,\vec a)$, where $\vec
a$ is the fixed assignment to the variables $\vec y$.
We will use this encoding of a position and define a formula $\nu^m_{k,f}$ such that 
$$w,\vec d, \vec a\models \nu^m_{k,f}(\vec x,\vec y) \Leftrightarrow N_k(f(\vec d,\vec a))=m$$
Similarly we define formulas $\hat \nu^m_{k,f}$ such that $w, \vec d, \vec a \models \hat \nu^m_{k,f}(\vec
x, \vec y)$ iff $\hat N_k(f(\vec d, \vec a))=m$.

We show this by induction over $k\leq |T|$. Starting with the base case $k=0$.

\begin{lemma}\label{lem_formbase}
Let $\vec a \in \Nat^r$.  For each $m \in G$, there is an active domain formula $\nu^m_{0,f}(\vec x,\vec y)$ in $\mathcal L[<,+]$, 
such that if $w \models\nu^m_{0,f}(\vec d,\vec a)$ then $N_0(f(\vec d,\vec a))=m$. \\
Similarly there is an active domain formula $\hat \nu^m_{0,f}(\vec x,\vec y)$ in $\mathcal L[<,+]$ such that if  
$w,\vec d\models\hat \nu^m_{0,f}(\vec x,\vec a)$ then $\hat N_0(f(\vec d,\vec a))=m$.
\end{lemma}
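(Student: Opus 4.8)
The plan is to unwind the definitions of $N_0$ and $\hat N_0$ and express the group element they compute directly as an active domain formula, using the fact that each is a product of only boundedly many (fewer than $p$) factors $u(i)$. First I would recall that $N_0(f(\vec d,\vec a))$ is, by definition, either the product $u(b+1)\cdots u(b+\IR(b))$ when the right interval is short ($\IR(b)<p$), or a ``prefix'' $u(b+1)\cdots u(b+r)$ with $r<p$ and $b+r\equiv_p 0$ when the interval is long. In both cases the product runs over at most $p$ consecutive positions $b+1,\dots,b+r$, where $b=f(\vec d,\vec a)$ is a boundary point encoded by the function $f$ and the active-domain tuple $\vec d$ together with the fixed assignment $\vec a$.

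The key observation is that for a position of the form $b+j$ with $0\le j<p$, the group element $u(b+j)$ is determined by which formulas $\phi_i$ hold at $b+j$, and since $\phi_i$ are active domain formulas in normal form, Lemma~\ref{lem_witInIntrv} tells us that truth values of the $\phi_i$ at $b+j$ depend only on the interval containing $b+j$ and its residue modulo $q$. Crucially, $b+j$ is itself expressible as a linear term $f(\vec x,\vec y)+j$ in the encoding variables, so the predicate ``$w,\vec d,\vec a\models\phi_i$ at position $b+j$'' can be written as an active domain formula $\phi_i(f(\vec x,\vec y)+j,\vec y)$ by substituting this term for the quantified variable $z$ in (the normal form of) $\phi_i$. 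Hence I can define, for each $0\le j<p$, an active domain formula asserting that $u(b+j)=m_i$, namely the disjunction over the defining case split of the monoid quantifier: $\phi_i$ holds and $\phi_1,\dots,\phi_{i-1}$ fail at $b+j$.

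With these ``letter formulas'' in hand, I would encode $N_0$ by a finite Boolean combination. For the short case I quantify over the value $r=\IR(b)<p$: the formula states that the right interval has length exactly $r$ (expressible since the next boundary point $b+r+1$ lies in $B$, and membership in $B$ is an active domain condition via the functions in $F$) and that the product $u(b+1)\cdots u(b+r)$ equals $m$. For the long case I instead assert that $\IR(b)\ge p$, identify the unique $r<p$ with $b+r\equiv_p 0$ (a congruence condition $f(\vec x,\vec y)+r\equiv_p 0$, definable using the $\equiv_p$ predicate), and set the product $u(b+1)\cdots u(b+r)$ equal to $m$. Since each product has fewer than $p$ factors and each factor ranges over the finite group $G$, the whole thing is a finite disjunction over tuples $(m_{i_1},\dots,m_{i_r})$ of group elements whose product is $m$, conjoined with the corresponding letter formulas. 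The construction of $\hat\nu^m_{0,f}$ is entirely symmetric, reading off the left interval instead: $\hat N_0(b)$ is either $1_G$ (short left interval) or the product $u(b-p)\cdots u(b-p+r)$ (long left interval), and the positions $b-p+j$ are again linear terms in $f$.

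The main obstacle I expect is expressing the interval-length and ``short versus long'' conditions ($\IR(b)<p$ versus $\IR(b)\ge p$, and locating the adjacent boundary point) purely as active domain formulas over the encoding $\vec x,\vec y$. The resolution is that boundary points are exactly the values of functions in $F$ on decreasing active-domain tuples (by Lemma~\ref{bplemma}), so ``$b+i\in B$'' and ``no boundary point lies strictly between $b$ and $b+i$'' are themselves expressible by quantifying over such tuples; combined with the bounded search $0\le r<p$ this keeps everything first-order over the active domain. Once these predicates are available, the rest is a routine finite Boolean combination, and since $\phi_i$ are assumed to be active domain formulas, substitution of a linear term for $z$ preserves the active domain property, so the resulting $\nu^m_{0,f}$ and $\hat\nu^m_{0,f}$ lie in $\mathcal L[<,+]$ as required.
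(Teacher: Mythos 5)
Your proposal is correct and follows essentially the same route as the paper: the same case split on $\IR(b)<p$ versus $\IR(b)\geq p$, the same ``letter formulas'' ($\tilde\phi_{m_i}$ in the paper) obtained by substituting the linear term $f(\vec x,\vec y)+j$ for $z$, the same finite disjunction over tuples of group elements multiplying to $m$, and the same device of detecting adjacent boundary points by quantifying over decreasing active-domain tuples for the functions in $F$ (the paper's $\delta^l_f$). No gaps.
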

\begin{proof}
  For an $i \leq K$, we denote by $\tilde \phi_{m_i}$ the formula $\bigwedge_{j<i} \neg \phi_j(x,\vec y) \wedge \phi_i(x,\vec y)$.
For a $l \in \Nat$, the following formula checks if there is a point $b'$ in $B$ such that $b+l=b'$. Since in each $B$ there is at most
one such element, we can use the group quantifier to simulate the existential quantifier.
$$\delta^l_{f} \defs \bigvee_{f'\in F \backslash f} \mon{G}{m_1} \vec x' \langle f'(\vec x',\vec y)=f(\vec x,\vec y)+l,false, \dots, false \rangle$$
So we have that $\IR(b)=l$ iff $\delta^{l+1}_f \wedge \bigwedge_{l'<l} \neg \delta^l_f$ is true. We define $\pi^{m,l}_f$ to be true if the product of the first $l$ group elements is $m$. 
$$\pi^{m,l}_f \defs \bigvee_{g_0\dots g_{l}=m}\bigg( \bigwedge_{i=0}^{l} \tilde \phi_{g_i}(f(\vec x,\vec y)+i)\bigg)$$
Now we have two cases to consider. \\
{\bf Case $\IR(b) < p$:} 
For each of the case $b<b'$ such that $l=b'-b \leq p$, the formula $\pi^{m,l}_f$ compute the product of the group elements. 
Hence $\nu^m_{0,f}$ in this case can be given as:
$$\bigwedge_{l=0}^{p-1} \bigg(\big(\delta^l_f(\vec x,\vec y) \wedge \bigwedge_{l'=0}^{l-1} \neg \delta^{l'}_f(\vec x,\vec y)\big) \rightarrow
\pi^{m,l}_f(\vec x,\vec y)\bigg)$$
{\bf Case $\IR(b) \geq p$:} 
When $b'-b > p$ we have to compute the product for the first $r$ group elements, where $b + r \equiv_p 0$ and $r<p$. Therefore
$\nu^m_{0,f}$ in this case is 
$$\bigwedge_{l=0}^{p-1} \big( f(\vec x,\vec y)+r \equiv_p 0 \big)\rightarrow \pi^{m,r}_f(\vec x,\vec y)$$
A Boolean combination over $\delta^l_f$ can differentiate the two cases.
Similarly we can give active domain formulas $\hat \nu^m_{0,f}(\vec x,\vec y)$.
\qed \end{proof}

The induction step follows.
\begin{lemma}\label{lem_formstep}
  Let $\vec a \in \Nat^r$. For each $m \in G$, there is an active domain formula $\nu^m_{k,f}$ in $\mathcal L[<,+]$,
such that $w,\vec d, \vec a \models\nu^m_{k,f}(\vec x,\vec y)$ then $N_k(f(\vec d,\vec a))=m$. \\
Similarly there is an active domain formula $\hat \nu^m_{k,f}(\vec x,\vec y)$ in $\mathcal L[<,+]$,
such that $w,\vec d, \vec a \models\hat \nu^m_{k,f}(\vec x,\vec y)$ then
$\hat N_k(f(\vec d, \vec a))=m$.
\end{lemma}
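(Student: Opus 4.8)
The plan is to turn the inductive definitions of $N_k$ and $\hat N_k$ into formulas by invoking the Tree Lemma (Lemma~\ref{lem_treelemma}) exactly once, at the offset $t_k$. By the induction hypothesis we already have active domain formulas $\nu^m_{k-1,f'}$ and $\hat\nu^m_{k-1,f'}$ for every $f'\in F$ and $m\in G$ that compute $N_{k-1}$ and $\hat N_{k-1}$ at the encoded point $f'(\vec d,\vec a)$. Since the two recurrences differ only in their leading factor ($N_{k-1}(b)$ versus $\hat N_{k-1}(b)$), while the trailing product $\prod_{b'\in B_{t_k},\,b'>b}\bigl(\hat N_{k-1}(b')\bigr)^{-1}u(b')N_{k-1}(b')$ is identical, the heart of the construction is a single formula computing this product as a function of the ``cut point'' $b=f(\vec x,\vec y)$.

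To compute that product I would apply the Tree Lemma with $t:=t_k$ and with the group element attached to each $b'\in B_{t_k}$ defined by
$$g_{b'}=\begin{cases}\bigl(\hat N_{k-1}(b')\bigr)^{-1}\,u(b')\,N_{k-1}(b') & \text{if } b'>b,\\ 1_G & \text{otherwise.}\end{cases}$$
Because $B_{t_k}$ is enumerated in ascending order (Lemma~\ref{lem_tree}) and the extra $1_G$ factors all sit at the front, $\prod_{b'\in B_{t_k}}g_{b'}$ equals precisely the trailing product above. The required leaf formulas $\gamma^m_{f'}(\vec x',\vec x,\vec y)$, asserting $g_{f'(\vec x')}=m$, are assembled from ingredients already in hand: the cut is the atomic inequality $f'(\vec x')>f(\vec x,\vec y)$; the value $u(f'(\vec x'))=e$ is expressed by the formulas $\tilde\phi_e$ of Lemma~\ref{lem_formbase} with $z$ substituted by the linear term $f'(\vec x')$ (an active domain formula, exactly as in that lemma, with $\tilde\phi_{1_G}=\bigwedge_i\neg\phi_i$ for the identity case); and $\hat N_{k-1}(f'(\vec x'))$, $N_{k-1}(f'(\vec x'))$ are read off from the induction hypothesis formulas $\hat\nu^h_{k-1,f'}$, $\nu^n_{k-1,f'}$. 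Concretely,
$$\gamma^m_{f'}\defs\Bigl(\neg\bigl(f'(\vec x')>f(\vec x,\vec y)\bigr)\wedge[m=1_G]\Bigr)\vee\Bigl(\bigl(f'(\vec x')>f(\vec x,\vec y)\bigr)\wedge\bigvee_{h^{-1}en=m}\bigl(\hat\nu^h_{k-1,f'}\wedge\tilde\phi_e(f'(\vec x'))\wedge\nu^n_{k-1,f'}\bigr)\Bigr)$$
where $[m=1_G]$ abbreviates $true$ if $m=1_G$ and $false$ otherwise. Here the variables of the Tree Lemma are $\vec x'$ (those of $f'\in F_{t_k}$), while the pair $(\vec x,\vec y)$ encoding the cut point are mere parameters carried unchanged through the Tree Lemma construction. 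Lemma~\ref{lem_treelemma} then yields active domain formulas $\Gamma^{m}(\vec x,\vec y)$ with $w,\vec d,\vec a\models\Gamma^m$ iff this product equals $m$.

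Finally I would splice in the leading factor. Setting
$$\nu^m_{k,f}\defs\bigvee_{m_1m_2=m}\bigl(\nu^{m_1}_{k-1,f}(\vec x,\vec y)\wedge\Gamma^{m_2}(\vec x,\vec y)\bigr),\qquad \hat\nu^m_{k,f}\defs\bigvee_{m_1m_2=m}\bigl(\hat\nu^{m_1}_{k-1,f}(\vec x,\vec y)\wedge\Gamma^{m_2}(\vec x,\vec y)\bigr)$$
realizes the two recurrences, and all formulas remain active domain since each constituent is. The main obstacle is not this final bookkeeping but getting the reduction to the Tree Lemma exactly right: choosing $g_{b'}=1_G$ below the cut so that the product over all of $B_{t_k}$ collapses to the one-sided product over $b'>b$, and ensuring the factors $\bigl(\hat N_{k-1}(b')\bigr)^{-1}u(b')N_{k-1}(b')$ are multiplied in the noncommutative order dictated by the recurrence. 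The ascending enumeration guaranteed by Lemma~\ref{lem_tree} is what makes this order come out correctly.
\qed
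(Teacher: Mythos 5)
Your proposal is correct and follows essentially the same route as the paper: define leaf formulas $\gamma^m_{f'}$ that output $1_G$ below the cut point $b=f(\vec x,\vec y)$ and the factor $\bigl(\hat N_{k-1}(b')\bigr)^{-1}u(b')N_{k-1}(b')$ above it, invoke the Tree Lemma at offset $t_k$ to obtain $\Gamma^m$ for the ordered trailing product, and splice in the leading factor by a Boolean combination with the induction-hypothesis formulas. Your explicit treatment of $u(b')$ via $\tilde\phi_e$ and your use of $\nu^{m_1}_{k-1,f}$ (resp.\ $\hat\nu^{m_1}_{k-1,f}$) as the leading factor are, if anything, slightly more careful than the paper's terse write-up.
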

\begin{proof}
  For all $m \in G$ and $f' \in F_{t_k}$ we give formulas $\gamma_{f'}^m$ such that for all $\vec d' \in nnp(w)^s$ the
  following holds. Let $f'(\vec d', \vec a)=b'$ and $f(\vec d,\vec a)=b$. Then 
  $w, \vec d, \vec d',\vec a \models \gamma_{f'}^m(\vec x, \vec z,\vec y) \Leftrightarrow b' \leq b$  and  $m = 1_G$
  or  $b' > b$  and $\left(\hat N_{k-1}(b')\right)^{-1} u(b') N_{k-1}(b') = m$.
  By induction hypothesis there exists formulas $\nu^m_{k-1,f'}$ and $\hat \nu^m_{k-1,f'}$ which corresponds to $N_{k-1}(f'(\vec x,\vec y))$
  and $\hat N_{k-1}(f'(\vec x,\vec y))$ respectively. Taking a Boolean combination over these formulas we get the required formula $\gamma_{f'}^m$.
  We now apply our Tree Lemma \ref{lem_treelemma} which gives us formulas $\Gamma^m$, 
  for all $m \in G$, such that 
  $$w,\vec d,\vec a \models \Gamma^{m}(\vec x,\vec y) \Leftrightarrow
  w \models \prod_{\substack{b'\in B_{t_k}\\b'>b}} \left(\hat N_{k-1}(b')\right)^{-1} u(b') N_{k-1}(b') = m$$ 
  Taking Boolean combination over $\Gamma^m$ and $\hat \nu^m_{k-1,f}$ will give us the formula $\nu^m_{k,f}$.
Similarly we can build active domain formulas $\hat \nu^m_{k,f}(\vec x,\vec y)$, for all $m \in G$.
\qed \end{proof}

\begin{proof}[Proof of Lemma \ref{lem_acd4grpqnt}] 
By Lemma \ref{lem_whattocomp} we know that it suffices to compute $N_{|T|}(0)$ and by Lemma \ref{lem_formstep} we know that there are
active domain formulas $\Gamma^m \in \mathcal L[<,+]$ such that $N_{|T|}(0) = m$ iff $w,\vec 0,\vec a \models \Gamma^m$

We need to do one last thing. Check that the infinite interval evaluates to $1_G$. Replace all formulas $z>\rho$, $z<\rho$, $c(z)$ for
a $c \neq \lambda$ and $\lambda(z)$ by $true$, $false$, $false$, $true$ respectively in the formulas $\hat \phi_i$ and call these formulas $\hat\psi_i$. 
There exists a witness in the infinite interval for the formula $\hat \phi_i$ iff $\hat \psi_i$ evaluates to true. 
By Theorem \ref{lem_infInterval} there should not be any witness in the infinite interval. Hence there exists a $\hat \psi_i$ which
evaluates to true iff the infinite interval does not evaluate to $1_G$.
\qed \end{proof}

\bigskip

\bigskip
\section{Discussion}
We have shown that in the presence of a neutral letter the addition relation collapse to linear ordering no matter what monoid
quantifier is been used. All languages definable using monoid quantifiers and an order predicate, on the other hand, are regular \cite{barr_uniformNC1}. 
Now using semigroup theoretic methods we can separate these classes \cite{str_cirBook}. This enabled us to show separation between
various logics which uses addition and order predicates. 

Unfortunately if both addition and multiplication are present, then the collapse does not happen. It is also interesting to note that non-solvable
groups do not show any surprising property if only addition is present, but as we know from Barrington's theorem non-solvable groups
behave quite differently when both addition and multiplication are present.

The ultimate objective is to show non-expressibility results for arbitrary predicates or at least when both addition and multiplication
are present. As a first step one can look at extending these results for other kinds of predicates. 

Another way to look at separating the ``natural uniform'' versions of the complexity classes will be to ask whether one can come up
with other suitable restrictions on the set of languages. Inside this restricted set of languages can one show addition and
multiplication collapse to order relation? This seems to be the idea Straubing considers in \cite{str_inexpRegLan}. Straubing
\cite{str_cirBook} proposes word problems over Regular language as a suitable restriction, while McKenzie, Thomas, Vollmer \cite{mckenzie_extUniformity} consider context free languages as a restriction.

Another interesting question which our result fails to answer is whether word problems over non-solvable groups can be defined in
$\MAJ[<,+]$ \cite{krebs_infGroups}?
\section*{Acknowledgement}
\noindent We like to thank Baskar Anguraj, Christoph Behle, Micha\"el Cadilhac, Klaus-J\"orn Lange, Nutan Limaye, T. Mubeena, and Ramanujam for a lot of useful comments on the draft of this paper.

\end{document}